\newtheorem*{theorem*}{Theorem}
\newtheorem{theorem}{Theorem}
\newtheorem*{lemma*}{Lemma}
\newtheorem{lemma}{Lemma}
\definecolor{lbcolor}{rgb}{0.875,0.875,0.875}
\newcommand{\rlibm}{\textsc{RLibm}\xspace}
\newcommand{\tool}{\textsc{RLibm-MultiRound}\xspace}
\newcommand{\eg}{\emph{e.g.}\xspace}
\newcommand{\ie}{\emph{i.e.}\xspace}
\newcommand{\etal}{\emph{et al.}\xspace}
\newcommand{\RNE}{\textit{RN}\xspace}
\newcommand{\RNZ}{\textit{RZ}\xspace}
\newcommand{\RNP}{\textit{RU}\xspace}
\newcommand{\RNN}{\textit{RD}\xspace}
\newcommand{\RZA}{\textit{RZA}\xspace}
\newcommand{\RZM}{\textit{RZM}\xspace}
\newcommand{\RNO}{\textit{RO}\xspace}
\newcommand{\cmark}{\ding{51}}
\newcommand{\xmark}{\textcolor{red}{\ding{55}}}
\newcommand{\NA}{N/A}
\newcommand{\PreserveBackslash}[1]{\let\temp=\\#1\let\\=\temp}
\newcolumntype{C}[1]{>{\PreserveBackslash\centering}p{#1}}
\begin{document}

\title[Correctly Rounded Math Libraries Without Worrying about
  the Application's Rounding Mode]{\tool: Correctly Rounded Math
  Libraries Without Worrying about the Application's Rounding Mode
  \\ {\small \textbf{Rutgers Department of Computer Science Technical
      Report DCS-TR-759}}}

\author{Sehyeok Park}
\orcid{0009-0002-1528-562X}
\affiliation{%
  \institution{Rutgers University}
  \city{Piscataway}
  \country{USA}
}
\email{sp2044@cs.rutgers.edu}

\author{Justin Kim}
\orcid{0009-0002-1481-5019}
\affiliation{%
  \institution{Rutgers University}
  \city{Piscataway}
  \country{USA}
}
\email{jk1849@scarletmail.rutgers.edu}

\author{Santosh Nagarakatte}
\orcid{0000-0002-5048-8548}
\affiliation{%
  \institution{Rutgers University}
  \city{Piscataway}
  \country{USA}
}
\email{santosh.nagarakatte@cs.rutgers.edu}

\begin{abstract}
Our \rlibm project has recently proposed methods to generate a single
implementation for an elementary function that produces correctly
rounded results for multiple rounding modes and representations with
up to 32-bits. They are appealing for developing fast reference
libraries without double rounding issues. The key insight is to build
polynomial approximations that produce the correctly rounded result
for a representation with two additional bits when compared to the
largest target representation and with the ``non-standard''
round-to-odd rounding mode, which makes double rounding the \rlibm
math library result to any smaller target representation innocuous.
The resulting approximations generated by the \rlibm approach are
implemented with machine supported floating-point operations with the
\emph{round-to-nearest} rounding mode. When an application uses a
rounding mode other than the round-to-nearest mode, the \rlibm math
library saves the application’s rounding mode, changes the system’s
rounding mode to round-to-nearest, computes the correctly rounded
result, and restores the application's rounding mode. This frequent
change of rounding modes has a performance cost.

This paper proposes two new methods, which we call rounding-invariant
outputs and rounding-invariant input bounds, to avoid the frequent
changes to the rounding mode and the dependence on the
round-to-nearest mode. First, our new rounding-invariant outputs
method proposes using the round-to-zero rounding mode to implement
\rlibm's polynomial approximations. We propose fast, error-free
transformations to emulate a round-to-zero result from any standard
rounding mode without changing the rounding mode. Second, our
rounding-invariant input bounds method factors any rounding error due
to different rounding modes using interval bounds in the \rlibm
pipeline. Both methods make a different set of trade-offs and improve
the performance of resulting libraries by more than 2$\times$.
\end{abstract}

\maketitle
\section{Introduction}

Math libraries provide implementations of commonly used elementary
functions. The outputs of these elementary functions are irrational
values for almost all inputs and cannot be represented exactly in a
finite precision floating-point~(FP) representation.  The correctly
rounded result of an elementary function for a given input is the
result produced after computing the result with infinite precision and
then rounded to the target representation. The problem of generating
correctly rounded results for arbitrary target representations is
known to be challenging (\ie, Table Maker's
dilemma~\cite{Kahan:tablemaker:online:2004}). Hence, the IEEE-754
standard recommends but does not mandate correctly rounded results for
elementary functions.
Correctly rounded math libraries enable portability and
reproducibility of applications using them.

Recent efforts, such as the CORE-MATH
project~\cite{sibidanov:core-math:arith:2022} and our \rlibm
project~\cite{lim:rlibmall:popl:2022,lim:rlibm:popl:2021,lim:rlibm32:pldi:2021},
have demonstrated that fast and correctly rounded libraries are
feasible. There is also a working group discussion to require
correctly rounded implementations in the upcoming 2029 IEEE-754
standard~\cite{brisebarre:crf-what-cost:2024}.
The minimax approximation method is the most well-known method for
building correctly rounded libraries. Effectively, these methods
generate polynomial approximations that minimize the maximum error
across all inputs with respect to the real value~(see Chapter~3 of
\cite{Muller:elemfunc:book:2016}). Subsequently, the error in the
polynomial evaluation methods are bounded to ensure that numerical
errors do not change the rounding decision.

\begin{figure}
  \centering{\includegraphics[width=0.90\textwidth]{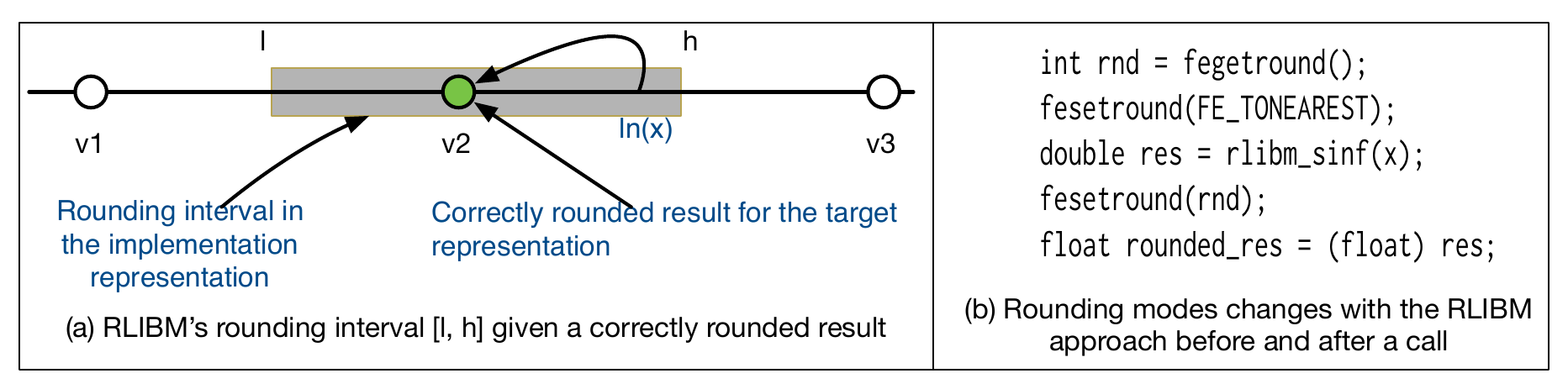}}
  \caption{\small (a) The three representable FP values (v1, v2, and
    v3) in the target representation. \rlibm's rounding interval $[l,
      h]$ (shown in gray) in the implementation representation such
    that any value in this interval rounds to v2 assuming the rounding
    mode is round-to-nearest-ties-to-even. (b) Changes to the rounding
    mode performed by the \rlibm approach before and after calling the
    elementary function.}
  \label{fig:intro}
\end{figure}

\textbf{Our \rlibm project.} Unlike traditional minimax methods, our
\rlibm project makes a case for directly approximating the correctly
rounded
result~\cite{lim:rlibmall:popl:2022,lim:rlibm:popl:2021,lim:rlibm32:pldi:2021}. The
insight is to split the task of generating the oracle and the task of
generating an efficient implementation given an oracle such as the
MPFR library~\cite{Fousse:toms:2007:mpfr}.
When building a correctly rounded library for the 32-bit FP
representation (\ie, the target representation), the \rlibm project
implements the library using the 64-bit FP representation (\ie, the
implementation representation). Then, there is an interval of values
in the implementation representation around the correctly rounded
result of the target representation such that any value in this
interval rounds to the correctly rounded result.
Figure~\ref{fig:intro}(a) shows the correctly rounded result (\ie, v2)
and the rounding interval. Given this interval $[l,h]$, the task of
producing a correctly rounded result for an input $x$ with a
polynomial of degree $d$ can be expressed as a linear constraint: $l
\leq C_0 + C_1 x + C_2 x^2 + C_3 x^3 + \ldots + C_d x^d \leq h$.  The
size of the rounding interval is 1 ULP (units in the last place) for
all inputs.
The freedom available with the \rlibm approach is significantly larger
than the freedom available with minimax methods.
Hence, the \rlibm project generates fast low-degree
polynomials.

\textbf{Multiple representations and rounding modes.}  Low precision
representations are becoming mainstream especially with
accelerators~(\eg, \texttt{bfloat16}~\cite{Wang:tpu:online:2019},
\texttt{tensorfloat32}~\cite{nvidia:tensorfloat:online:2020}, and
\texttt{FP8}). They are increasingly used in scientific computing
apart from machine learning. Further, the IEEE-754 standard specifies
four distinct rounding modes for the binary FP representation:
round-to-nearest-ties-to-even~(\RNE), round-towards-zero~(\RNZ),
round-up~(\RNP), and round-down~(\RNN). Each rounding mode is
attractive in specific domains.
For example, the computational geometry algorithms library (CGAL) uses
different rounding modes.
Similarly, some hardware accelerators use the round-to-zero~(\RNZ)
mode because it can be implemented efficiently.
Rather than designing a custom math library for each such rounding
mode and representation, generating a single math library that handles
all these rounding modes and new representations is attractive as a
reference library.
Existing correctly rounded libraries for a single
representation such as CORE-MATH~\cite{sibidanov:core-math:arith:2022}
and CR-LIBM~\cite{Daramy:crlibm:spie:2003} do not produce correctly
rounded results when they are repurposed for these new representations
because of double rounding
errors~\cite{david:pitfalls:toplas:2008}. The first rounding happens
when the real value is rounded to the representation the math library
was originally designed for and the second rounding happens when the
result from the math library is rounded to the target representation.

\textbf{\rlibm's method to handle multiple representations.} The
\rlibm project includes an appealing method to generate a single
implementation that produces correctly rounded results for multiple
representations and rounding modes~\cite{lim:rlibmall:popl:2022}. When
the goal is to generate correctly rounded results for all
representations with up to $n$-bits, the \rlibm project's approach is to
approximate the correctly rounded result of a $(n+2)$-bit
representation with a non-standard rounding mode called
\texttt{round-to-odd}. In the round-to-odd mode, the real value that
is not exactly representable is rounded to the nearest FP value whose
bit-pattern is odd. When a real value is exactly representable in the
FP representation, it is represented with that FP value. When the
round-to-odd result with the $(n+2)$-bit representation is subsequently
rounded to any target representation with $n$ or fewer bits, it
produces correctly rounded results. Effectively, \rlibm's approach of
computing the round-to-odd (\RNO) result with a $(n+2)$-bit
representation makes double rounding harmless.

\textbf{Range reduction, output compensation, and polynomial
  evaluation with FP arithmetic.} Typically, polynomial approximations
are feasible over domains much smaller than the dynamic range of a
32-bit FP representation. For example, it is much more effective to
approximate $\mathit{log}(x)$ over inputs $x \in [0, 1/128)$ rather
  than over the entire range of 32-bit floats where $|x| \in
  [2^{-149}, 2^{128})$.
As a first step, each input $x$ in the original domain is transformed
to a value in a smaller domain $x'$ through a process known as range
reduction.  The \rlibm implementations represent each range reduced
input $x'$ as a 64-bit, double-precision FP number, which is the
internal representation used for all subsequent FP
operations. Subsequently, a polynomial approximation computes the
result for the input in the small domain (\ie, $y' =
P(x')$). Additional operations that are collectively known as the
output compensation function map the result $y'$ to produce the result
for the original input~(\ie, $y = OC(y', x)$). Range reduction,
polynomial evaluation, and output compensation are performed with FP
operations and can accumulate rounding error.

To generate a polynomial approximation with guaranteed correctness,
the \rlibm pipeline first computes the \emph{reduced input} $x'$ for
each input $x$ with the range reduction algorithm. Using the 34-bit
\RNO oracle of $f(x)$, it computes the round-to-odd rounding interval
$[l, h]$ for every input. The \rlibm pipeline subsequently identifies
for each reduced input the widest possible \emph{reduced interval}
$[l', h']$ such that $\forall{y'} \in [l', h'], l \leq OC(y', x) \leq
h$. Finally, it solves a system of linear inequalities $l' \leq P(x')
\leq h'$ to generate the polynomial approximation $P(x')$. Given the
manner in which each reduced interval $[l', h']$ is derived (\ie,
$\forall{y'} \in [l', h'], l \leq OC(y', x) \leq h$), a polynomial
evaluation result that satisfies $l' \leq P(x') \leq h'$ is guaranteed
to satisfy $l \leq OC(P(x'), x) \leq h$.

\textbf{\rlibm uses round-to-nearest as the implementation rounding
  mode.} The \rlibm project uses the \RNE mode for its
implementations. When an application uses a rounding mode other than
\RNE, the \rlibm project saves the application's rounding mode,
changes the default rounding mode of the system to \RNE, computes the
output of the math library implementation, and restores the
application's rounding mode before the final rounding to the target
representation as shown in Figure~\ref{fig:intro}(b).
 Each rounding mode change can incur up to 40 cycles on a modern Linux
 machine.
Specifically, \rlibm implementations are only guaranteed to produce
intermediate values within the rounding intervals of the final results
when they are invoked using the \RNE mode.
Hence, not changing the rounding mode to \RNE as shown in
Figure~\ref{fig:intro}(b) may lead to the wrong results.

\textbf{This paper.} This paper proposes two new methods, which we
call rounding-invariant outputs and rounding-invariant input bounds,
to completely eliminate the rounding mode changes necessitated by the
\rlibm approach while maintaining correctness under all
application-level rounding modes.

\textbf{Rounding-invariant outputs by emulating round-to-zero
  results.} In our rounding-invariant outputs method, we propose to
use round-to-zero~(\RNZ) as the underlying implementation rounding
mode instead of round-to-nearest~(\RNE). Our key insight is that it is
possible to emulate the \RNZ result under any of the four rounding
modes without having to explicitly change the application's rounding
mode. We design new algorithms that compute the \RNZ result
irrespective of the application's rounding mode using error-free
transformations that adjust the error in an FP operation using a
sequence of auxiliary FP operations and bit manipulations (see
Section~\ref{subsec:round-to-zero}).
This method requires very few changes to the \rlibm pipeline except
performing reduced interval and polynomial generation with the \RNZ
mode. However, it requires wrapping every rounding mode-dependent
addition and multiplication in the final implementations with our new
algorithms to adjust the initial results, which is faster than
changing the rounding mode but still entails noticeable overhead.

\textbf{Rounding-invariant input bounds for measuring variability
  induced by various rounding modes.}  The key idea is to bound the
range of values that can arise across different rounding modes given
the different possible combinations of rounding error associated with
a series of FP operations~(see Section~\ref{sec:multi-round}). Under
this new approach, we no longer consider the final output of a
sequence of FP operations to be a single FP value, as it would be if
all operations adhered to a single rounding mode. Instead, we treat
the result as an interval to account for the varying effects of
different rounding rules.
This approach is based on the property that for any given faithfully
rounded FP arithmetic operation on finite operands (\ie, operands that
are neither infinity nor $NaN$), the round-down ($\RNN$) result is the
lower bound and the round-up ($\RNP$) result is the upper bound.
Leveraging this property,
we compose the bounds on the result of each FP operation bottom-up
using interval arithmetic to deduce the final bounds for a target
sequence. In doing so, we can identify the minimum and maximum
possible outputs a candidate implementation could ultimately produce
for a given input across all rounding modes and confirm that both
values satisfy the associated correctness constraints. This approach
requires non-trivial changes to the \rlibm pipeline as it involves
correctness constraints that are different from those derivable from a
single rounding mode. The main advantage of this approach is that the
outputs of the FP operations in the resulting implementations do not
require any adjustments to satisfy correctness.

Our resulting library is more than $2\times$ faster than the existing
\rlibm prototypes. Our prototype is the first math library that
produces correctly rounded results for all inputs across multiple
representations for all four standard rounding modes, regardless of
the application-level rounding mode.

\section{Background}
\label{sec:background}
\begin{wrapfigure}{r}{9cm}
  \centering{\includegraphics[width=0.6\textwidth]{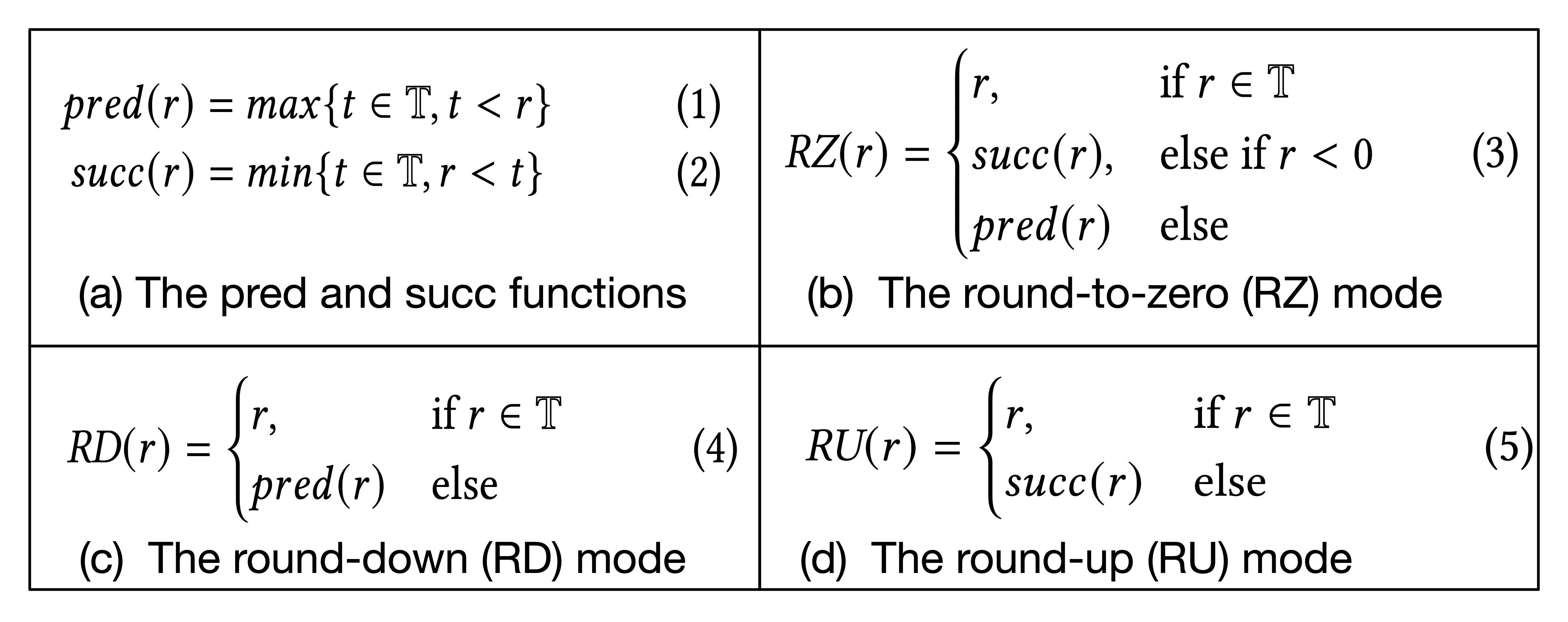}}
  \caption{\small (a) The pred and succ functions used for faithful
    rounding of a real number $r$. The \RNZ, \RNN, and \RNP rounding
    modes defined using the pred and succ functions are shown in (b),
    (c), and (d), respectively.}
   \label{fig:background:modes}
\end{wrapfigure}

\textbf{Rounding modes.}
The IEEE-754 standard provides the specification for four rounding
modes for the binary FP representation. These are
round-to-nearest~($\mathit{\RNE}$), round-to-zero~($\mathit{\RNZ}$),
round-up~($\mathit{\RNP}$), and round-down~($\mathit{\RNN}$). Since
this paper is about implementing math libraries that can operate
directly under any application level rounding mode, we provide
background to understand the behavior of various rounding modes and
their interactions with FP arithmetic.

Let $\mathbb{R}$ represent the set of all real numbers and
$\mathbb{T}$ represent the set of all the numbers in a given FP
representation.  With respect to the outputs and operands of FP
operations in the \rlibm project, $\mathbb{T}$ is the set of all
64-bit FP numbers. For a given number $r \in \mathbb{R}$, its
neighbors in the target representation $\mathbb{T}$, denoted $pred(r)$
and $succ(r)$, can be defined through the equations in
Figure~\ref{fig:background:modes}(a).

For a given rounding function to be faithful, it must return either
$\mathit{pred}(r)$ and $\mathit{succ}(r)$ whenever the input $r$ is
not exactly representable in $\mathbb{T}$. The four rounding modes
considered by \rlibm, which are $\mathit{\RNZ}$, $\mathit{\RNE}$,
$\mathit{\RNN}$, and $\mathit{\RNP}$, all adhere to this
requirement. We apply the notation $\mathit{rnd}(r)$ to denote a
rounding function that applies any of these four rounding modes. For
the purposes of this paper, we restrict the domain of all functions
$\mathit{rnd}(r)$ to \emph{non-zero real numbers}.

\textbf{The round-to-zero~(\RNZ) mode}. For our rounding-invariant
outputs approach, we propose to use $\mathit{\RNZ}$ as the default
implementation rounding mode and simulate its result across all
application rounding modes. The rounding function $\mathit{\RNZ}(r)$,
which applies rounding via $\mathit{\RNZ}$, is defined as shown in
Figure~\ref{fig:background:modes}(b).

\textbf{The round-down~(\RNN) and the round-up~(\RNP) mode.}  Our
proposed rounding-invariant input bounds method uses the
$\mathit{\RNN}$ and $\mathit{\RNP}$ modes to bound the variability
induced by the various rounding modes. The rounding functions
$\mathit{\RNN}(r)$ and $\mathit{\RNP}(r)$ can be defined as shown in
Figure~\ref{fig:background:modes}(c) and
Figure~\ref{fig:background:modes}(d), respectively.

The definitions of $\mathit{pred}(r)$ and $\mathit{succ}(r)$ in
Figure~\ref{fig:background:modes}(a) (Equations~1 and 2) along with
Equations~4 and 5 in Figure~\ref{fig:background:modes} reveal the
following properties for $\mathit{\RNN}$ and $\mathit{\RNP}$:
$\forall{r} \in \mathbb{R} \setminus \{0\}, \mathit{pred}(r) \leq
\mathit{\RNN}(r)$ and $\forall{r} \in \mathbb{R} \setminus \{ 0 \},
\mathit{\RNP}(r) \leq \mathit{succ}(r)$. Given these properties, one
could define $\mathit{\RNN}$ and $\mathit{\RNP}$ in the following
manner.

\begin{definition}\label{rd-def-alt}
For all $r \in \mathbb{R} \setminus \{0\}$, $\mathit{\RNN}(r)$ is the largest number $t \in \mathbb{T}$ such that $t \leq r$.
\end{definition}

\begin{definition}\label{ru-def-alt}
For all $r \in \mathbb{R} \setminus \{0\}$, $\mathit{\RNP}(r)$ is the smallest number  $t \in \mathbb{T}$ such that $t \geq r$.
\end{definition}

Using the properties and definitions of faithful rounding,
$\mathit{\RNN}$, and $\mathit{\RNP}$, we state the following lemma
providing the expected bounds on the faithfully rounded outputs of
rounding functions.

\begin{lemma}\label{lemma:rnd-bound}
Let $\mathit{rnd}$ be any rounding function that faithfully rounds a
number $r \in \mathbb{R} \setminus \{ 0 \}$ to a number $t \in
\mathbb{T}$. $\forall r \in \mathbb{R} \setminus \{0\}, \mathit{\RNN}(r)\leq
\mathit{rnd}(r) \leq \mathit{\RNP}(r)$.
\end{lemma}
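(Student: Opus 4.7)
My plan is to prove the two inequalities by a case analysis on whether $r$ is exactly representable in $\mathbb{T}$. The heavy lifting is done by Definitions~\ref{rd-def-alt} and~\ref{ru-def-alt} together with the defining property of faithful rounding: whenever $r \notin \mathbb{T}$, $\mathit{rnd}(r) \in \{\mathit{pred}(r), \mathit{succ}(r)\}$, and whenever $r \in \mathbb{T}$, each of the four IEEE modes (and in particular any faithful $\mathit{rnd}$) returns $r$ itself.

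\textbf{Case 1: $r \in \mathbb{T}$.} In this case faithful rounding gives $\mathit{rnd}(r) = r$. Moreover, $r$ is simultaneously the largest element of $\mathbb{T}$ that is $\leq r$ and the smallest element of $\mathbb{T}$ that is $\geq r$, so by Definitions~\ref{rd-def-alt} and~\ref{ru-def-alt}, $\RNN(r) = r = \RNP(r)$. The chain $\RNN(r) \leq \mathit{rnd}(r) \leq \RNP(r)$ then collapses to a trivial equality.

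\textbf{Case 2: $r \notin \mathbb{T}$.} By faithfulness, $\mathit{rnd}(r)$ is either $\mathit{pred}(r)$ or $\mathit{succ}(r)$. I would then establish the sub-claim that, in this regime, $\RNN(r) = \mathit{pred}(r)$ and $\RNP(r) = \mathit{succ}(r)$. For the first equality: $\mathit{pred}(r) \in \mathbb{T}$ and $\mathit{pred}(r) < r$, so $\mathit{pred}(r)$ is a candidate in Definition~\ref{rd-def-alt}; conversely, any $t \in \mathbb{T}$ with $t \leq r$ satisfies $t \leq \mathit{pred}(r)$, since $\mathit{pred}(r)$ and $\mathit{succ}(r)$ are consecutive in $\mathbb{T}$ and no element of $\mathbb{T}$ lies in the open interval $(\mathit{pred}(r), \mathit{succ}(r))$ that contains $r$. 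The argument for $\RNP(r) = \mathit{succ}(r)$ is symmetric, using Definition~\ref{ru-def-alt}. Thus $\mathit{rnd}(r) \in \{\RNN(r), \RNP(r)\}$, and since $\RNN(r) = \mathit{pred}(r) < \mathit{succ}(r) = \RNP(r)$, the inequality $\RNN(r) \leq \mathit{rnd}(r) \leq \RNP(r)$ holds in both sub-cases.

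\textbf{Expected difficulty.} The proof is essentially an unpacking of definitions rather than a substantive argument, so there is no genuinely hard step. The only subtlety worth spelling out carefully is the consecutivity claim in Case~2, namely that $\mathit{pred}(r)$ and $\mathit{succ}(r)$ as given by the formulas in Figure~\ref{fig:background:modes}(a) really are adjacent elements of $\mathbb{T}$ bracketing $r$, with no representable value strictly between them. This is standard for the neighbor functions of an FP format, and once noted it immediately upgrades the general bounds $\mathit{pred}(r) \leq \RNN(r)$ and $\RNP(r) \leq \mathit{succ}(r)$ stated in the paper to equalities, from which the lemma falls out.
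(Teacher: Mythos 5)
Your proof is correct and follows exactly the route the paper intends: the paper gives no detailed argument, stating only that the lemma ``is directly derivable from the definition of faithful rounding,'' and your case split on $r \in \mathbb{T}$ versus $r \notin \mathbb{T}$, together with the identification $\RNN(r) = \mathit{pred}(r)$ and $\RNP(r) = \mathit{succ}(r)$ for non-representable $r$, is precisely that derivation spelled out. No gaps.
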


The lemma is directly derivable from the definition of faithful
rounding. Lemma~\ref{lemma:rnd-bound} guarantees that when a non-zero
real number $r$ is rounded using a faithful rounding function
$\mathit{rnd}$, $\RNN(r)$ and $\mathit{\RNP}(r)$ will respectively serve as the
lower and upper bounds of $\mathit{rnd}(r)$.
Lemma~\ref{lemma:rd-monotonic} and ~\ref{lemma:ru-monotonic} detail
the well-established monotonically non-decreasing properties of
faithful rounding functions~\cite{muller:fp:2018}, specifically with
regard to $\RNN$ and $\RNP$.

\begin{lemma}\label{lemma:rd-monotonic}
$\forall{a}, \forall{b} \in \mathbb{R} \setminus \{0\}, a \leq b \implies \RNN(a) \leq \RNN(b)$  
\end{lemma}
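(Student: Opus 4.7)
The plan is to prove Lemma~\ref{lemma:rd-monotonic} by leveraging the alternative characterization of $\RNN$ given in Definition~\ref{rd-def-alt}, which recasts $\RNN(r)$ as the largest $t \in \mathbb{T}$ satisfying $t \leq r$. This supremum-style formulation is far more convenient for a monotonicity argument than a case split on whether $r$ is exactly representable, because it exposes $\RNN$ as the maximum of a candidate set that grows weakly with its argument.

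First I would fix arbitrary $a, b \in \mathbb{R} \setminus \{0\}$ with $a \leq b$. Applying Definition~\ref{rd-def-alt} to $a$ yields $\RNN(a) \in \mathbb{T}$ together with $\RNN(a) \leq a$. Chaining with the hypothesis $a \leq b$ gives $\RNN(a) \leq b$, so $\RNN(a)$ is itself an element of the set $S_b = \{\, t \in \mathbb{T} : t \leq b \,\}$ whose maximum, again by Definition~\ref{rd-def-alt}, is exactly $\RNN(b)$. Since $\RNN(a) \in S_b$ and $\RNN(b) = \max S_b$, the conclusion $\RNN(a) \leq \RNN(b)$ is immediate. The whole argument is therefore a two-line chain once the alternative definition is invoked.

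The main obstacle, such as it is, is not logical but a well-definedness check: the set $\{\, t \in \mathbb{T} : t \leq r \,\}$ must be non-empty and must admit a maximum for $\RNN(r)$ to be a meaningful FP value. In the setting of this paper, $\mathbb{T}$ contains $-\infty$ (or, equivalently, the representation provides a least element that dominates the argument from below), and for any fixed real the relevant downward-closed set of FP values has a maximum because the representable values form a discrete, order-complete set bounded above by $r$. Once these basic properties of $\mathbb{T}$ are acknowledged, the proof reduces to the direct supremum argument above, which is why the result is cited as well-established rather than derived from first principles.
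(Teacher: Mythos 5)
Your proof is correct. The paper does not actually prove Lemma~\ref{lemma:rd-monotonic}; it states the monotonicity of faithful rounding as a well-established fact and cites the literature (Muller et al.), so there is no in-paper argument to compare against. Your route --- invoking Definition~\ref{rd-def-alt} to view $\RNN(b)$ as the maximum of $S_b = \{t \in \mathbb{T} : t \leq b\}$ and observing that $\RNN(a) \leq a \leq b$ places $\RNN(a)$ in $S_b$ --- is the canonical two-line argument, and it is exactly the kind of proof the citation is standing in for. Your well-definedness remark is the right thing to flag; the cleanest justification in this setting is simply that $\mathbb{T}$ is a finite set containing a least element (the format's $-\infty$ or most negative value), so every nonempty subset has a maximum and $S_r$ is nonempty for every real $r$ of interest. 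No gap.
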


\begin{lemma}\label{lemma:ru-monotonic}
$\forall{a}, \forall{b} \in \mathbb{R} \setminus \{0\}, a \leq b \implies \RNP(a) \leq \RNP(b)$
\end{lemma}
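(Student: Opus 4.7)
The plan is to prove Lemma~\ref{lemma:ru-monotonic} by a direct argument from Definition~\ref{ru-def-alt}, which characterizes $\RNP(r)$ as the smallest element of $\mathbb{T}$ that is at least $r$. I will avoid any case split on whether $a$ or $b$ is exactly representable, since the alternative definition unifies those cases automatically; the monotonicity argument for $\RNP$ is the mirror image of the one for $\RNN$ in Lemma~\ref{lemma:rd-monotonic}, so the same structural argument should apply with inequalities reversed.

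First I would fix arbitrary $a, b \in \mathbb{R} \setminus \{0\}$ with $a \leq b$. From Definition~\ref{ru-def-alt} applied to $b$, I get $\RNP(b) \in \mathbb{T}$ together with the inequality $\RNP(b) \geq b$. Chaining this with the hypothesis $a \leq b$ yields $\RNP(b) \geq a$, which means $\RNP(b)$ lies in the set $S_a := \{t \in \mathbb{T} : t \geq a\}$. By Definition~\ref{ru-def-alt} applied to $a$, $\RNP(a)$ is the minimum of $S_a$. Therefore $\RNP(a) \leq \RNP(b)$, which is what the lemma claims.

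The main thing to be careful about is that the lemma's domain is $\mathbb{R} \setminus \{0\}$ rather than all of $\mathbb{R}$; since I only ever invoke Definition~\ref{ru-def-alt} at the two non-zero points $a$ and $b$ (and never at any intermediate value that might be zero, even in the sub-case $a < 0 < b$), the restriction poses no difficulty. I do not anticipate a real obstacle here: the heavy lifting has already been done by establishing the alternative characterization in Definition~\ref{ru-def-alt}, after which monotonicity is essentially a restatement of the fact that the set $S_a$ of admissible FP upper bounds can only grow (or stay the same) as $a$ decreases.
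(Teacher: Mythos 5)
Your argument is correct. The paper itself does not supply a proof of Lemma~\ref{lemma:ru-monotonic}: it simply states that Lemmas~\ref{lemma:rd-monotonic} and~\ref{lemma:ru-monotonic} are well-established properties of faithful rounding and cites the literature. Your derivation is the standard one and is exactly what that citation stands in for: from Definition~\ref{ru-def-alt}, $\RNP(b) \geq b \geq a$ places $\RNP(b)$ in the set $S_a = \{t \in \mathbb{T} : t \geq a\}$ whose minimum is $\RNP(a)$, so $\RNP(a) \leq \RNP(b)$. Your observation that the nonzero-domain restriction is harmless (the definition is only invoked at $a$ and $b$ themselves) is also right. The only implicit reliance is on the paper's claim that Definition~\ref{ru-def-alt} is an equivalent characterization of the $\RNP$ in Figure~\ref{fig:background:modes}(d), but the paper asserts that equivalence explicitly, so nothing is missing.
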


\textbf{Preservation of signs with faithful rounding.}
The final property of interest pertains to the preservation of signs.

\begin{definition}\label{rnd-sign}
 For all $v \in (\mathbb{R} \setminus \{0\}) \cup \mathbb{T}$ where
 $\mathbb{T}$ is a FP representation, we define $\mathit{sign}(v)$ to
 be $0$ for positive numbers and $1$ for negative numbers. For the FP
 numbers $+0,-0 \in \mathbb{T}$, we define the sign as
 $\mathit{sign}(+0) = 0$ and $\mathit{sign}(-0) = 1$.
\end{definition}

\begin{lemma}\label{lemma:sign-preservation}
Let $\mathit{rnd}$ be any rounding function that faithfully rounds a
number $r \in \mathbb{R} \setminus \{ 0 \}$ to a FP number $t \in
\mathbb{T}$. For all $r \in \mathbb{R}\setminus\{0\}$ and for all
$\mathit{rnd}$, $\mathit{sign}(r) = \mathit{sign}(\mathit{rnd}(r))$.
\end{lemma}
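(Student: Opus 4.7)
The plan is to case-split on the sign of $r$ and reduce the claim to the behavior of the extreme faithful rounding functions $\RNN$ and $\RNP$ via Lemma~\ref{lemma:rnd-bound}, which gives $\RNN(r) \leq \mathit{rnd}(r) \leq \RNP(r)$ for any faithful rounding function $\mathit{rnd}$. Once both endpoints of this interval are shown to have the same sign as $r$, the monotonic nature of the ordering on $\mathbb{T}$ forces $\mathit{rnd}(r)$ to carry that sign as well.

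First I would handle $r > 0$. By Definition~\ref{ru-def-alt}, $\RNP(r)$ is the smallest $t \in \mathbb{T}$ with $t \geq r > 0$, so $\RNP(r)$ is strictly positive and therefore $\mathit{sign}(\RNP(r)) = 0$. For the lower bound, Definition~\ref{rd-def-alt} gives $\RNN(r)$ as the largest $t \in \mathbb{T}$ with $t \leq r$; since $+0 \in \mathbb{T}$ satisfies $+0 \leq r$, and since the IEEE-754 total ordering places $-0$ strictly below $+0$, we obtain $\RNN(r) \geq +0$, so $\RNN(r)$ is either $+0$ or a strictly positive FP number. In either case $\mathit{sign}(\RNN(r)) = 0$. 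Combining with Lemma~\ref{lemma:rnd-bound}, $\mathit{rnd}(r)$ lies in $[\RNN(r), \RNP(r)]$, an interval of FP values all having sign $0$, so $\mathit{sign}(\mathit{rnd}(r)) = 0 = \mathit{sign}(r)$. The case $r < 0$ follows by a symmetric argument using the same two definitions: $\RNN(r)$ is strictly negative and $\RNP(r)$ is either $-0$ or strictly negative, so every FP value in $[\RNN(r), \RNP(r)]$ has sign $1$.

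The main obstacle is the careful handling of signed zero at the boundary — specifically, justifying that $\RNN(r) = +0$ (rather than $-0$) for tiny positive $r$ and that $\RNP(r) = -0$ (rather than $+0$) for tiny negative $r$. This relies on the precise definitions of $\mathit{pred}$ and $\mathit{succ}$ in Figure~\ref{fig:background:modes}(a), which encode the IEEE-754 convention that rounding a strictly positive (respectively negative) real toward zero yields $+0$ (respectively $-0$). Once this sign-of-zero convention is stated explicitly, the two-case analysis above closes the proof.
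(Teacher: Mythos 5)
Your proof is a legitimately different — and more general — route than the paper's. The paper does not prove the lemma for arbitrary non-zero reals at all: it proves it only for the reals that its algorithms actually feed into a rounding function, namely exact sums $a+b$ and exact products $a\times b$ of finite FP operands, and it uses entirely different machinery for each case. For products it invokes the IEEE-754 rule that the sign bit of $a\otimes_{\mathit{rnd}} b$ is the exclusive-or of the operands' sign bits under every rounding mode, which fixes the sign even when the product underflows to a zero. For sums it shows (its Lemma~\ref{lemma:rz-add-bound}) that a non-zero exact sum of two FP numbers has magnitude at least $2^{e_{min}-p+1}$, so a faithful rounding can never land on a zero or a wrong-signed value — the signed-zero question never arises. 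Your argument instead attacks the general statement via \RNN and \RNP as sign-preserving envelopes, which is closer to what the lemma literally claims but forces you to confront exactly the corner the paper engineers around: whether rounding a tiny positive real toward zero yields $+0$ rather than $-0$. You correctly identify that this rests on the $\mathit{pred}$/$\mathit{succ}$ sign convention, but you leave it as an appeal to Figure~\ref{fig:background:modes}(a) rather than discharging it; the paper's restriction to sums and products is precisely what lets it avoid ever having to pin that convention down.

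One step in your argument is also not quite sound as written: from $\RNN(r)\leq \mathit{rnd}(r)\leq \RNP(r)$ with $\RNN(r)=+0$ you conclude that $\mathit{rnd}(r)$ lies in ``an interval of FP values all having sign $0$,'' but $-0$ equals $+0$ under the real-number ordering used in Lemma~\ref{lemma:rnd-bound}, so $-0$ satisfies those inequalities and has sign $1$. The order bounds alone cannot exclude $-0$. The fix is to argue directly from faithfulness — $\mathit{rnd}(r)\in\{r,\mathit{pred}(r),\mathit{succ}(r)\}$, and both neighbors carry the sign of $r$ under the stated convention — at which point the detour through Lemma~\ref{lemma:rnd-bound} becomes unnecessary.
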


Under our definition of $\mathit{sign}$,
Lemma~\ref{lemma:sign-preservation} signifies the sign preserving
property of faithful rounding for non-zero values.  We provide the
proof for Lemma~\ref{lemma:sign-preservation} in the supplemental
material~(Section~7).

Having introduced pertinent properties of faithful rounding and our
definition of $\mathit{sign}(r)$, we refer back to Equations~3 through
5 and elaborate upon crucial intricacies. We constrain the domains of
the rounding functions $\mathit{rnd}(r)$ for all $\mathit{rnd} \in \{
\mathit{\RNE}, \mathit{\RNZ}, \mathit{\RNN}, \mathit{\RNP} \}$ to
non-zero real numbers. This is because all FP representations
$\mathbb{T}$ considered in this paper treat $+0$ and $-0$ as separate
FP numbers while equating both to $0$ in the context of real
arithmetic. Distinguishing $+0$ and $-0$ creates ambiguity as to which
of the two numbers a rounding function should return for 0. This
ambiguity can be resolved for rounded FP arithmetic by defining what
an FP operation $a \odot b$ should return when its real number
counterpart $ a \cdot b = 0$. For the FP operations of concern, which
are addition and multiplication, the choice between $+0$ or $-0$ is
dependent on both the rounding mode environment and the sign of the
operands. We provide IEEE-754 standard-compliant definitions of $a
\oplus_{\mathit{rnd}} b$ and $a \otimes_{\mathit{rnd}} b$ in
Figure~\ref{fig:background:ops}, which represent the output of a FP
addition and multiplication under a given rounding mode $\mathit{rnd}
\in \{ \mathit{\RNE}, \mathit{\RNZ}, \mathit{\RNN},
\mathit{\RNP}\}$. Henceforth, we reserve the notations $\oplus$ and
$\otimes$ for FP addition and multiplication respectively while using
$+$ and $\times$ solely for real arithmetic operations. When the
rounding rule being applied is relevant, we apply the notations
$\oplus_{\mathit{rnd}}$ and $\otimes_{\mathit{rnd}}$.  We restrict the
domain of the equations for $\oplus_{\mathit{rnd}}$ and
$\otimes_{\mathit{rnd}}$ to non-$NaN$, non-infinity operands.

\begin{figure}
  \centering{\includegraphics[width=0.95\textwidth]{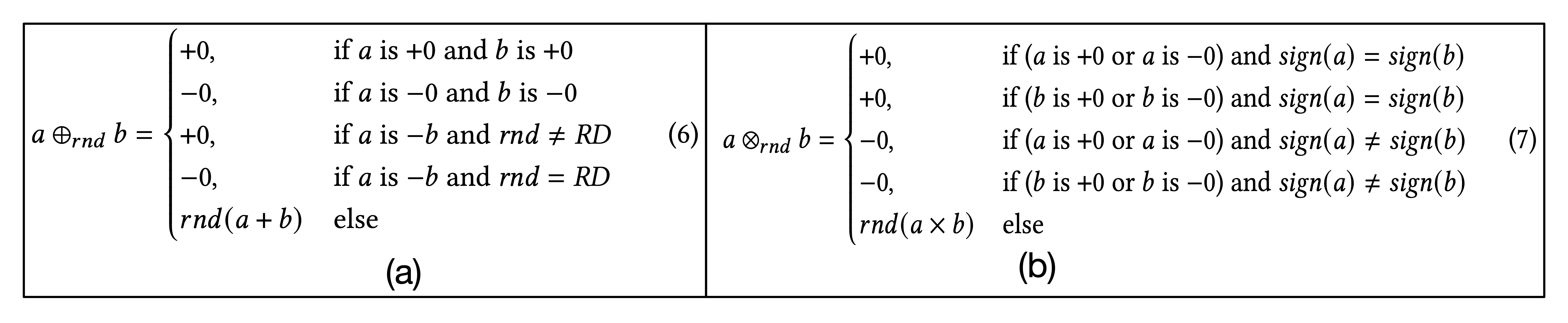}}
  \caption{\small (a) The rounded addition
      $\oplus_{rnd}$ for any rounding mode $\mathit{rnd} \in \{
      \mathit{RN}, \mathit{RZ}, \mathit{RD}, \mathit{RU} \}$. (b) The
      rounded multiplication $\otimes_{\mathit{rnd}}$ for any rounding
      mode $\mathit{rnd}$. }
   \label{fig:background:ops}
\end{figure}

In the context of Equation~6 in Figure~\ref{fig:background:ops}(a),
$\mathit{rnd}(a+b)$ represents the output obtained from subjecting the
real arithmetic result of $a+b$ to a rounding function
$\mathit{rnd}(r)$ as exemplified in Equations 3 through 5. The
expression $\mathit{rnd}(a \times b)$ in Equation~7 can be interpreted
analogously to $\mathit{rnd}(a + b)$. The first four cases of
Equation~6 detail different scenarios under which $a+b = 0$. Likewise,
the first four cases of Equation~7 cover situations where $a \times b
= 0$. Based on these equations, we infer that $a \oplus_{\mathit{rnd}}
b = rnd(a+b)$ whenever $a+b \neq 0$ and $a \otimes_{\mathit{rnd}} b = \mathit{rnd}(a
\times b)$ whenever $a \times b \neq 0$.
By construction, $a \oplus_{\mathit{rnd}} b$ and $a
\otimes_{\mathit{rnd}} b$ return faithfully rounded versions of $a+b$
and $a \times b$ respectively. These operations therefore possess the
properties of faithful rounding detailed in Lemma 1 as it pertains to
$\mathit{\RNN}$'s and $\mathit{\RNP}$'s roles in producing the lower
and upper bounds respectively. The two operations also reflect the
monotonic properties of rounding detailed in
Lemmas~\ref{lemma:rd-monotonic} and ~\ref{lemma:ru-monotonic}. We
highlight the properties of $\oplus_{\mathit{rnd}}$ and
$\otimes_{\mathit{rnd}}$ most important to our theorems through the
following lemmas.

\begin{lemma}\label{lemma:rnd-sum-bound}
$\forall{a,b} \in \mathbb{T} \setminus \{ NaN, \pm \infty \}, \forall{\mathit{rnd}} \in \{ \RNE, \RNZ, \RNN, \RNP \}, a \oplus_{\RNN} b \leq a \oplus_{\mathit{rnd}} b \leq a \oplus_{\RNP} b$.  
\end{lemma}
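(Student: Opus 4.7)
The plan is to prove the inequality by a case split on whether the real sum $a+b$ is zero or nonzero, and then to reduce the nonzero case to a direct application of Lemma~\ref{lemma:rnd-bound}. Since the lemma quantifies over any faithful rounding mode $\mathit{rnd} \in \{\RNE, \RNZ, \RNN, \RNP\}$, the uniform way to bound the output $a \oplus_{\mathit{rnd}} b$ is to express it in terms of a rounding of the exact real value $a+b$ and then invoke the faithful-rounding bounds.

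First, I would handle the main case where $a+b \neq 0$. The remark immediately following Equation~7 in the excerpt tells us that whenever $a+b \neq 0$, the definition of $\oplus_{\mathit{rnd}}$ in Figure~\ref{fig:background:ops}(a) collapses to $a \oplus_{\mathit{rnd}} b = \mathit{rnd}(a+b)$ for every $\mathit{rnd} \in \{\RNE, \RNZ, \RNN, \RNP\}$. In particular, $a \oplus_{\RNN} b = \RNN(a+b)$ and $a \oplus_{\RNP} b = \RNP(a+b)$. Since each $\mathit{rnd}$ faithfully rounds the nonzero real $a+b$, Lemma~\ref{lemma:rnd-bound} applied to $r = a+b$ gives $\RNN(a+b) \leq \mathit{rnd}(a+b) \leq \RNP(a+b)$, which is exactly the required inequality in this case.

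Next, I would handle the degenerate case $a+b = 0$, which by Figure~\ref{fig:background:ops}(a) is triggered by one of the first four sub-cases of Equation~6 (both operands $+0$, both operands $-0$, or $a = -b$). In each of these sub-cases, $a \oplus_{\mathit{rnd}} b \in \{+0, -0\}$ for every rounding mode, and in particular $a \oplus_{\RNN} b \in \{+0, -0\}$ and $a \oplus_{\RNP} b \in \{+0, -0\}$. Under the real ordering induced by the standard IEEE-754 comparison (under which $+0$ and $-0$ are equal), every signed zero is $\leq$ every other signed zero, so the chain $a \oplus_{\RNN} b \leq a \oplus_{\mathit{rnd}} b \leq a \oplus_{\RNP} b$ holds trivially.

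The main obstacle I anticipate is a notational rather than a mathematical one: making it unambiguous that the $\leq$ appearing in the statement is the real-valued ordering, so that the signed-zero case is not a source of confusion. Once that convention is fixed, the nonzero case is a one-line consequence of Lemma~\ref{lemma:rnd-bound} combined with the identity $a \oplus_{\mathit{rnd}} b = \mathit{rnd}(a+b)$ derived from the definition of $\oplus_{\mathit{rnd}}$, and the zero case is handled by enumerating the first four sub-cases of Equation~6.
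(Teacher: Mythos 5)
Your proposal is correct and follows essentially the same route as the paper, which justifies this lemma by observing that $a \oplus_{\mathit{rnd}} b$ is by construction a faithful rounding of $a+b$ and therefore inherits the bounds of Lemma~\ref{lemma:rnd-bound} via the identity $a \oplus_{\mathit{rnd}} b = \mathit{rnd}(a+b)$ when $a+b \neq 0$. Your explicit treatment of the signed-zero sub-cases of Equation~6 is a welcome extra degree of care that the paper leaves implicit.
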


\begin{lemma}\label{lemma:rnd-prod-bound}
$\forall{a,b} \in \mathbb{T} \setminus \{ NaN, \pm \infty \}, \forall{\mathit{rnd}} \in \{ \RNE, \RNZ, \RNN, \RNP \}, a \otimes_{\RNN} b \leq a \otimes_{\mathit{rnd}} b \leq a \otimes_{\RNP} b$.  
\end{lemma}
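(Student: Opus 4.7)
The plan is to prove Lemma~\ref{lemma:rnd-prod-bound} by case analysis on whether the real-arithmetic product $a \times b$ equals zero, mirroring the structure of the definition of $\otimes_{\mathit{rnd}}$ in Equation~7.

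First, I would dispatch the nonzero case. When $a \times b \neq 0$, the observation immediately following Equation~7 gives $a \otimes_{\mathit{rnd}} b = \mathit{rnd}(a \times b)$ uniformly across all four rounding modes. Since $a \times b \in \mathbb{R} \setminus \{0\}$, Lemma~\ref{lemma:rnd-bound} applies and yields $\RNN(a \times b) \leq \mathit{rnd}(a \times b) \leq \RNP(a \times b)$. Rewriting the outer terms via the same correspondence gives $a \otimes_{\RNN} b \leq a \otimes_{\mathit{rnd}} b \leq a \otimes_{\RNP} b$, which is exactly the desired inequality in this subcase.

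Second, I would handle the zero case. When $a \times b = 0$, at least one operand is $\pm 0$, so the first four clauses of Equation~7 determine the result. The key observation is that, unlike the analogous rule for $\oplus_{\mathit{rnd}}$, multiplication's output in the zero case is a function of $\mathit{sign}(a)$ and $\mathit{sign}(b)$ alone and does not depend on the rounding mode: when $\mathit{sign}(a) = \mathit{sign}(b)$ the output is $+0$, and otherwise it is $-0$, uniformly across $\mathit{rnd} \in \{ \RNE, \RNZ, \RNN, \RNP \}$. Hence $a \otimes_{\RNN} b$, $a \otimes_{\mathit{rnd}} b$, and $a \otimes_{\RNP} b$ all coincide as the same signed zero, and the inequalities hold with equality under the IEEE-754 convention that treats $-0$ and $+0$ as equal under numerical comparison.

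I do not expect a substantive obstacle. The only delicate point is noticing that the zero-product case of $\otimes_{\mathit{rnd}}$ is insensitive to the rounding mode, which is what distinguishes this argument from the parallel one for Lemma~\ref{lemma:rnd-sum-bound} (where the $a = -b,\ \mathit{rnd} = \RNN$ subcase produces $-0$ while the other modes produce $+0$, still consistent with the bound but requiring explicit attention). Once this is observed, the proof reduces to a direct application of Lemma~\ref{lemma:rnd-bound} in the nonzero case and a trivial equality in the zero case.
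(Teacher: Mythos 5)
Your proof is correct and follows essentially the same route as the paper, which justifies Lemma~\ref{lemma:rnd-prod-bound} only by the one-line remark that $\otimes_{\mathit{rnd}}$ is by construction a faithful rounding of $a \times b$ and therefore inherits the bounds of Lemma~\ref{lemma:rnd-bound}. Your version is simply a more explicit writeup of that argument, and your separate treatment of the $a \times b = 0$ case (where all modes return the same signed zero) correctly fills in the corner case the paper leaves implicit.
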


\begin{lemma}\label{lemma:rnd-sum-monotonic}
$\forall{a,b,c,d} \in \mathbb{T} \setminus \{ NaN, \pm \infty \}, a+b \leq c+d \implies (a \oplus_{\RNN} b \leq  c \oplus_{\RNN} d) \land (a \oplus_{\RNP} b \leq  c \oplus_{\RNP} d)$.  
\end{lemma}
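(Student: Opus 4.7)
The plan is to reduce the statement to the monotonicity of the real-to-FP rounding functions $\RNN$ and $\RNP$ provided by Lemmas~\ref{lemma:rd-monotonic} and~\ref{lemma:ru-monotonic}, dealing separately with the corner cases in which one or both of $a+b$ and $c+d$ equal $0$, since $\oplus_{\RNN}$ and $\oplus_{\RNP}$ are defined by the sign-selecting cases of Equation~6 in those situations rather than by $\mathit{rnd}(a+b)$. I will sketch the argument for the $\oplus_{\RNN}$ conjunct; the $\oplus_{\RNP}$ conjunct follows by a symmetric argument in which the roles of $-0$ and $+0$ are swapped and Lemma~\ref{lemma:ru-monotonic} replaces Lemma~\ref{lemma:rd-monotonic}.

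First I would dispatch the generic case where $a+b \neq 0$ and $c+d \neq 0$. Here the final case of Equation~6 applies on both sides, so $a \oplus_{\RNN} b = \RNN(a+b)$ and $c \oplus_{\RNN} d = \RNN(c+d)$. Since $a+b \leq c+d$ with both sums nonzero, Lemma~\ref{lemma:rd-monotonic} directly yields $\RNN(a+b) \leq \RNN(c+d)$, which is the desired inequality.

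Next I would handle the boundary cases in which one of the sums is $0$. Inspecting Equation~6, $\oplus_{\RNN}$ returns $-0$ whenever $a+b = 0$, independent of the specific signs of $a$ and $b$. If $a+b = c+d = 0$, both sides of the target inequality are $-0$, so the inequality holds trivially. If $a+b = 0$ and $c+d > 0$, the left side is $-0$ while the right side is $\RNN(c+d)$, which by sign preservation (Lemma~\ref{lemma:sign-preservation}) is nonnegative; hence the IEEE-754 comparison $-0 \leq c \oplus_{\RNN} d$ holds. If $a+b < 0$ and $c+d = 0$, the left side is $\RNN(a+b)$, which is strictly negative by sign preservation, and the right side is $-0$, so the inequality again holds. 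The remaining configuration, $a+b > 0$ together with $c+d = 0$, is ruled out by the hypothesis $a+b \leq c+d$.

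The main obstacle is the bookkeeping around the signed-zero cases: the FP-level inequality $x \leq y$ in the statement implicitly uses the IEEE-754 numerical ordering under which $-0$ and $+0$ are deemed equal, yet $\oplus_{\RNN}$ prefers $-0$ at exact cancellations while $\oplus_{\RNP}$ prefers $+0$. Once this convention is fixed, every sub-case reduces to applying Lemma~\ref{lemma:rd-monotonic} or Lemma~\ref{lemma:ru-monotonic} on the nonzero sub-argument together with sign preservation on the other, so the argument is conceptually routine but requires careful enumeration of how Equation~6 resolves each zero-sum configuration.
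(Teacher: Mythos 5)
Your proof is correct, and it follows the route the paper itself implies: the paper states this lemma without proof, presenting it as a direct reflection of the monotonicity properties in Lemmas~\ref{lemma:rd-monotonic} and~\ref{lemma:ru-monotonic}, which is exactly the reduction you carry out, with the added (and worthwhile) care of enumerating the signed-zero cases of Equation~6. One small inaccuracy: $\oplus_{\RNN}$ does \emph{not} return $-0$ for every zero sum --- the first case of Equation~6 gives $(+0) \oplus_{\RNN} (+0) = +0$ --- but since the comparison in the lemma uses the IEEE-754 ordering under which $-0$ and $+0$ are equal, as you note, this does not affect any of your case conclusions.
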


\begin{lemma}\label{lemma:rnd-prod-monotonic}
$\forall{a,b,c,d} \in \mathbb{T} \setminus \{ NaN, \pm \infty \}, a \times b \leq c \times d \implies (a \otimes_{\RNN} b \leq  c \otimes_{\RNN} d) \land (a \otimes_{\RNP} b \leq  c \otimes_{\RNP} d)$.  
\end{lemma}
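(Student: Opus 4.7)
The plan is to prove both conjuncts simultaneously by a case split on which clause of the definition of $\otimes_{\mathit{rnd}}$ in Figure~\ref{fig:background:ops}(b) fires for the pair $(a,b)$ and for the pair $(c,d)$. Since the $\RNN$ and $\RNP$ statements are structurally identical, I will focus on the $\RNN$ conjunct first and indicate that the $\RNP$ conjunct is handled symmetrically by substituting Lemma~\ref{lemma:ru-monotonic} for Lemma~\ref{lemma:rd-monotonic}.

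First, when both $a \times b \neq 0$ and $c \times d \neq 0$, the final clause of Equation~7 gives $a \otimes_{\RNN} b = \RNN(a \times b)$ and $c \otimes_{\RNN} d = \RNN(c \times d)$. Since both products are non-zero reals, Lemma~\ref{lemma:rd-monotonic} applies directly to the hypothesis $a \times b \leq c \times d$ and yields $\RNN(a \times b) \leq \RNN(c \times d)$, which is the desired inequality. The analogous step for $\RNP$ follows from Lemma~\ref{lemma:ru-monotonic}.

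Next, I dispatch the three zero-product sub-cases. If $a \times b = 0$ and $c \times d = 0$, then both sides are signed zeros produced by the first four clauses of Equation~7, and the inequality holds trivially under the IEEE-754 convention that $+0$ and $-0$ compare as equal. If $a \times b = 0$ but $c \times d \neq 0$, the hypothesis forces $c \times d > 0$; using Definition~\ref{rd-def-alt} together with Lemma~\ref{lemma:sign-preservation}, one observes $\RNN(c \times d) \geq 0$ (possibly $+0$ in the underflow regime, otherwise strictly positive), so $a \otimes_{\RNN} b \in \{+0, -0\}$ is bounded above by $c \otimes_{\RNN} d$. The symmetric sub-case, where $c \times d = 0$ and $a \times b \neq 0$, forces $a \times b < 0$, and the argument mirrors the previous one: faithful rounding of a negative real cannot yield a strictly positive FP value, so $\RNN(a \times b) \leq 0 = c \otimes_{\RNN} d$.

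The main obstacle, I expect, lies not in the structural case split but in meticulously handling the signed-zero edge cases. In particular, I have to confirm that when the real product is a very small positive number that underflows to $+0$ under $\RNN$ (or a very small negative number that underflows to $-0$ under $\RNP$), the resulting $\pm 0$ still lies on the correct side of the zero produced by the first four clauses of Equation~7 from a truly zero product. This ultimately rests on Lemma~\ref{lemma:sign-preservation} together with the IEEE convention that signed zeros are numerically equal, but the bookkeeping across all combinations of operand signs, sign-preserving rounding, and clause selection is where the proof demands care.
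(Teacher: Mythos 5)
Your proof is correct and follows the route the paper itself intends: the paper states this lemma without an explicit proof, presenting it as an immediate consequence of the monotonicity of faithful rounding (Lemmas~\ref{lemma:rd-monotonic} and~\ref{lemma:ru-monotonic}) transferred through the last clause of Equation~7, which is exactly your non-zero case. Your additional bookkeeping for the signed-zero clauses (using sign preservation and the numerical equality of $+0$ and $-0$) is sound and merely fills in details the paper leaves implicit.
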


\textbf{Propagating bounds using interval arithmetic.}
Our rounding-invariant input bounds approach treats both the operands
and output of each FP operation as a \emph{range} of values rather
than a single value. It relies on the properties of interval
arithmetic listed below to identify the expected lower and upper
bounds on the result of a FP operation when its operands are
represented as ranges of FP numbers excluding $NaN$s and $\pm \infty$.
Specifically, we use these properties to identify the output bounds of
the \emph{real arithmetic} counterpart of an FP operation given the
ranges of its FP operands. We denote the lower and upper bounds of the
final output $a$ of an ordered sequence of FP operations as
$\underline{a}$ and $\overline{a}$, respectively.

\begin{lemma}\label{lemma:int-sum}
$\forall{a\in[\underline{a},\overline{a}]},\forall{b\in[\underline{b},\overline{b}]}, \underline{a}+\underline{b} \leq a+b \leq \overline{a}+\overline{b}$.
\end{lemma}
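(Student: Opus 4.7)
The plan is to unfold the interval membership assumptions into pairs of inequalities on $a$ and $b$, and then use the fact that ordinary real addition is monotonic in each argument to combine them into the desired two-sided bound on $a+b$.

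Concretely, I would proceed in two symmetric halves. First, from $a \in [\underline{a},\overline{a}]$ I extract $\underline{a} \leq a$ and from $b \in [\underline{b},\overline{b}]$ I extract $\underline{b} \leq b$; adding these two real inequalities yields $\underline{a}+\underline{b} \leq a+b$, which is the lower bound. Second, from the same interval memberships I extract $a \leq \overline{a}$ and $b \leq \overline{b}$ and add to obtain $a+b \leq \overline{a}+\overline{b}$, which is the upper bound. Conjoining the two gives the statement of the lemma.

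There is no real obstacle here: the claim is about \emph{real} arithmetic (the $+$ symbol, not $\oplus_{\mathit{rnd}}$), so no rounding-related reasoning is needed, and monotonicity of addition over $\mathbb{R}$ is taken for granted. The only subtlety worth noting is that the lemma is stated for operands drawn from $\mathbb{T} \setminus \{NaN, \pm\infty\}$ in the surrounding text, so $\underline{a}, \overline{a}, \underline{b}, \overline{b}, a, b$ are all finite real numbers and the additions $\underline{a}+\underline{b}$ and $\overline{a}+\overline{b}$ are well-defined in $\mathbb{R}$; hence no case analysis on infinities or $NaN$s is required. This lemma will then serve as the real-arithmetic ingredient that, combined with Lemmas~\ref{lemma:rnd-sum-bound} and~\ref{lemma:rnd-sum-monotonic}, lets subsequent results propagate interval bounds through rounded additions.
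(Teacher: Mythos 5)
Your proof is correct and is the standard argument: the paper states this lemma as a well-known property of interval arithmetic without giving a proof, and your derivation (adding the componentwise inequalities $\underline{a}\leq a\leq\overline{a}$ and $\underline{b}\leq b\leq\overline{b}$ using monotonicity of real addition) is exactly the reasoning it implicitly relies on. Your remark that the operands are finite reals, so no case analysis on $NaN$ or $\pm\infty$ is needed, matches the paper's restriction of these lemmas to ranges of FP numbers excluding $NaN$s and $\pm\infty$.
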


\begin{lemma}\label{lemma:int-prod}
\begin{tabbing}
\hspace{1.9cm} $\forall{a\in[\underline{a},\overline{a}]},\forall{b\in[\underline{b},\overline{b}]}$,
\= $\min{(\underline{a} \times \underline{b}, \underline{a} \times \overline{b}, \overline{a} \times \underline{b}, \overline{a} \times \overline{b})} \leq a \times b$ $\wedge$ \\
\> $a \times b \leq   \max{(\underline{a} \times \underline{b}, \underline{a} \times \overline{b}, \overline{a} \times \underline{b}, \overline{a} \times \overline{b})}$. \
\end{tabbing}
\end{lemma}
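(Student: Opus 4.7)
The plan is to exploit the bilinearity of real multiplication: for any fixed value of one operand, the product is a linear (hence monotone) function of the other operand. This reduces the problem to showing that the extrema of $a \times b$ over the rectangle $[\underline{a}, \overline{a}] \times [\underline{b}, \overline{b}]$ are attained at its four corners, which is exactly what the lemma claims. Crucially, by stating the bound via $\min$ and $\max$ over all four corner products, we avoid any case analysis on the signs of the endpoints, because the direction of monotonicity (which flips depending on whether the fixed operand is positive or negative) is absorbed into the $\min/\max$.

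First I would fix an arbitrary $a \in [\underline{a}, \overline{a}]$ and note that $\phi_a(b) = a \times b$ is linear in $b$, so it is monotone on $[\underline{b}, \overline{b}]$. Regardless of the sign of $a$, this gives
\[
  \min(a \times \underline{b},\, a \times \overline{b}) \;\leq\; a \times b \;\leq\; \max(a \times \underline{b},\, a \times \overline{b}).
\]
Next, I would take the upper bound $\max(a \times \underline{b},\, a \times \overline{b})$ and vary $a$ over $[\underline{a}, \overline{a}]$. For each fixed $c \in \{\underline{b}, \overline{b}\}$, the map $a \mapsto a \times c$ is again linear in $a$, so its supremum on $[\underline{a}, \overline{a}]$ equals $\max(\underline{a} \times c,\, \overline{a} \times c)$. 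Since the supremum of a pointwise maximum over a set equals the maximum of the suprema, we get
\[
  \max(a \times \underline{b},\, a \times \overline{b}) \;\leq\; \max(\underline{a} \times \underline{b},\, \underline{a} \times \overline{b},\, \overline{a} \times \underline{b},\, \overline{a} \times \overline{b}).
\]
Chaining the two inequalities yields the upper bound of the lemma. The lower bound follows by the symmetric argument applied to the $\min$, using that the infimum of a pointwise minimum is the minimum of the infima.

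There is no real obstacle here, since the statement is the classical interval-arithmetic bound for multiplication and the proof is a direct two-step monotonicity argument. The only mild subtlety is resisting the temptation to split into the nine sign-cases (both intervals positive, both negative, one straddling zero, etc.); stating the inequality with $\min$ and $\max$ over the four corners lets a single uniform argument handle all sign configurations at once, including the cases where $\underline{a}, \overline{a}, \underline{b}, \overline{b}$ have mixed signs. No additional lemmas from the excerpt are needed, as the claim is about real multiplication rather than its FP counterpart.
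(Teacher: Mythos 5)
Your proof is correct. The two-step monotonicity argument --- first freezing $a$ and using linearity in $b$ to reduce to the edge values $a \times \underline{b}$ and $a \times \overline{b}$, then using linearity in $a$ to push each edge value to a corner --- is a clean and complete way to establish the bound, and your observation that the $\min$/$\max$ formulation absorbs the sign-dependent direction of monotonicity is exactly the right way to avoid a case split. For comparison: the paper states Lemma~\ref{lemma:int-prod} without proof, treating it as a classical property of interval arithmetic on the reals (it is invoked only as a black box in the proofs of Theorems~\ref{theorem:lb-def} and~\ref{theorem:ub-def}), so there is no authorial argument to measure yours against; your proposal simply supplies the standard justification the paper leaves implicit.
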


\section{Rounding Mode Independence Using Our Approach}

Our goal is to develop implementations for elementary functions that
produce correctly rounded results for all FP inputs across multiple
representations with up to $32$-bits.
We seek to achieve correctness for all four standard rounding
modes~(\eg, \RNE, \RNZ, \RNP, and \RNN) and any faithful rounding mode
potentially used by the invoking applications and to do so without
requiring any explicit rounding mode changes.
Using the \rlibm approach, we attempt to generate polynomial
approximations over the reduced inputs, which when used with the
output compensation function can produce 64-bit values within the
rounding interval of the 34-bit round-to-odd (\RNO) result.
When a 64-bit FP value in the rounding interval of the 34-bit \RNO
result is \emph{double rounded} to any target representation less than
or equal to 32-bits, it is guaranteed to produce the correctly rounded
result regardless of the rounding rule used for the final
rounding.

The main task is thus ensuring that a candidate approximation can
produce such 64-bit values for all inputs regardless of the invoking
application's rounding mode.
The key challenge in this endeavor is that the range reduction,
polynomial evaluation, and output compensation processes involve FP
arithmetic, which can experience different rounding errors depending
on the rounding mode. Hence, an implementation's 64-bit intermediate
output prior to the final rounding may differ depending on the
rounding mode under which it was produced.
The \rlibm pipeline generates candidate approximations that produce
correct results with round-to-nearest, which is the default rounding
mode used by its generators. Inevitably, the \rlibm implementations
necessitate rounding mode changes to \RNE to ensure correctness.
Saving and restoring the application's rounding mode as required by
the \rlibm prototype requires up to 40 cycles for each input.

This paper proposes two new methods to completely remove the rounding
mode changes required by the \rlibm approach. First, we make a case
for using the round-to-zero (\RNZ) mode as the implementation rounding
mode. We design error-free transformations that compute the error in a
FP operation to simulate the \RNZ mode result when the application
uses any other rounding mode (see
Section~\ref{subsec:round-to-zero}). Our second method computes the
bounds on the full range of outputs that are possible when evaluating
polynomial approximations and output compensation functions under
different rounding modes. We propose a new method that recursively
defines the lower and upper bounds for individual FP operations to
account for rounding errors across different rounding modes and
composes them bottom-up using interval arithmetic ~(see
Section~\ref{sec:multi-round}) to represent the results obtained from
evaluating polynomials and output compensation
functions. Subsequently, we change the reduced interval and polynomial
generation processes in the \rlibm pipeline to generate polynomials
that satisfy the new constraints for correctness stemming from the
incorporation of rounding induced variability.

Both these approaches have their own trade-offs. The first method
requires augmenting every FP operation with steps that collectively
compute its rounding error, assess whether its output conforms to
$\RNZ$, and make necessary adjustments. In contrast, the second method
requires extensive changes to the \rlibm pipeline with regard to
generating reduced intervals and polynomials.
It does not require any additional operations for evaluating
polynomials in the resultant implementations once the polynomial
approximations are generated.

\subsection{Rounding Independence with Round-to-Zero Emulation}
\label{subsec:round-to-zero}

To produce the $\RNZ$ result across all rounding modes of interest, we
design error-free transformations, which are a sequence of FP
operations that compute the error in a given operation using FP
arithmetic. Using the error-free transformations, we design a decision
procedure that determines whether the original result produced under
the application's rounding mode needs to be adjusted to match the
result expected under $\RNZ$.

\textbf{Intuition for computing the \RNZ result.} From the perspective
of implementing \rlibm's core elementary functions with 64-bit FP
operations, the operations of concern are addition and
multiplication. Given two non-$NaN$, non-infinity 64-bit FP numbers
$a$ and $b$, we want to compute $a \oplus_{\mathit{\RNZ}} b$ and $a
\otimes_{\mathit{\RNZ}} b$ whenever $a \oplus_{\mathit{rnd}} b$ and $a
\otimes_{\mathit{rnd}} b$ are computed with different rounding
modes. The key challenge in computing the $\mathit{\RNZ}$ results is
that neither $a \oplus_{\mathit{rnd}} b$ nor $a \otimes_{\mathit{rnd}}
b$ provides any direct indication as to whether $a
\oplus_{\mathit{rnd}} b \neq a + b$ or $a \otimes_{\mathit{rnd}} b
\neq a \times b$. The goal is thus to create a sequence of FP
operations that compute the rounding error in $a \oplus_{\mathit{rnd}}
b$ (or $a \otimes_{\mathit{rnd}} b$) with respect to the real result
$a + b$ (or $a \times b$) and adjust the original output to match the
$\mathit{\RNZ}$ result based on the error. We preface the discussion
of our algorithms by emphasizing that they are intended to target
non-$\mathit{NaN}$, non-infinity FP operands for which neither the sum
nor the product induces overflow. The FP numbers used with these
algorithms in the final implementations satisfy these criteria.

\textbf{Relationship between $\oplus_{\mathit{RZ}}$ and other
  $\oplus_{\mathit{rnd}}$.}  Given two FP numbers $a$ and $b$ that
need to be summed, we need to adjust the output $a
\oplus_{\mathit{rnd}} b \in \mathbb{T}$ whenever $a
\oplus_{\mathit{rnd}} b$ differs from $a \oplus_{\mathit{\RNZ}}
b$. Given Equations~3 (Figure~\ref{fig:background:modes}) and~6
(Figure~\ref{fig:background:ops}), this occurs when either $a$ is $-b$
and $\mathit{rnd} = \mathit{\RNN}$ or when $a+b$ is not exactly
representable. Addressing the former case entails setting $a
\oplus_{\mathit{rnd}} b$ to be $+0$ whenever $a$ is $-b$. In the
latter case, $a \oplus_{\mathit{\RNZ}} b$ and $a \oplus_{\mathit{rnd}}
b$ deviate whenever $a \oplus_{\mathit{\RNZ}} b = \mathit{pred}(a+b)$
and $a \oplus_{\mathit{rnd}} b = \mathit{succ}(a+b)$, or vice versa.
The value $a \oplus_{\mathit{rnd}}
b$ is equal to $\mathit{succ}(a+b)$ when $a \oplus_{\mathit{RZ}} b =
\mathit{pred}(a+b)$ if $a+b \not \in \mathbb{T}$ and $0 < a+b$, while
the opposite scenario can occur when $a+b \not \in \mathbb{T}$ and
$a+b < 0$. Producing $a \oplus_{\mathit{RZ}} b$ thus entails adjusting
$a \oplus_{\mathit{rnd}} b$ whenever $a+b$ is not exactly
representable (\ie, $a+b \neq a \oplus_{\mathit{rnd}} b$) and $|a+b| <
|a \oplus_{\mathit{rnd}} b|$.

Based on the definitions of $\mathit{pred}$ and $\mathit{succ}$, the
absolute difference between $a \oplus_{\mathit{rnd}} b$ and $a
\oplus_{\mathit{\RNZ}} b$ when $|a+b| < |a \oplus_{\mathit{rnd}} b|$
is 1 ULP, which is the distance between two adjacent FP numbers around
$a+b$. We use the definition of ULP as described by
Overton~\cite{overton:fp:2001}: $\forall{t} \in \mathbb{T}, \mathit{ulp}(t) =
2^{e-p+1}$, where $e$ represents the exponent of $t$ and $p$
represents the available precision (\ie, $p=53$ for 64-bit doubles).
We note that because $a \oplus_{\mathit{rnd}} b$ and $a \oplus_{\mathit{\RNZ}} b$ are
faithful roundings of $a+b$, $\mathit{pred}(|a \oplus_{\mathit{rnd}} b|) = |a
\oplus_{\mathit{\RNZ}} b|$ when $|a+b| < |a \oplus_{\mathit{rnd}} b|$. Using Overton's
definition of $\mathit{ulp}(t)$, we can thus determine that when $|a+b| < |a
\oplus_{\mathit{rnd}} b|$, $|(a \oplus_{\mathit{rnd}} b) - (a \oplus_{\mathit{\RNZ}} b)| =
\mathit{ulp}(\mathit{pred}(|a \oplus_{\mathit{rnd}} b|))$. Given the definition of $\mathit{ulp}$ and the
relationship between $a \oplus_{\mathit{\RNZ}} b$ and $a \oplus_{\mathit{rnd}} b$, we
define the function $\mathit{\RZA}(a,b, a \oplus_{\mathit{rnd}} b)$ that returns $a
\oplus_{RZ} b$ as follows.

\setcounter{equation}{7}

\begin{equation}\label{get-rz-add}
\mathit{\RZA}(a, b, a \oplus_{\mathit{rnd}} b) = \begin{cases}
      +0, & \text{if $a$ is  $-b$} \\
      a \oplus_{\mathit{rnd}} b - (-1)^{\mathit{sign}(a \oplus_{\mathit{rnd}} b)} \times \mathit{ulp}(\mathit{pred}(|a \oplus_{\mathit{rnd}} b|)), & \text{if $|a+b| < |a \oplus_{\mathit{rnd}} b|$}\\
      a \oplus_{\mathit{rnd}} b & \text{else}
    \end{cases}
\end{equation}

The equation above is a specification for obtaining the
$\mathit{\RNZ}$ result of FP addition given the operands and the
original $a \oplus_{\mathit{rnd}} b$. However, the real arithmetic result of
$a+b$ is not directly computable in FP arithmetic. Hence, we design
Algorithm~\ref{alg:rz_add} to implement the specification in
Equation~\ref{get-rz-add}.

\begin{small}
\begin{algorithm}
  \caption{\small Our high level algorithm for computing the $\mathit{\RNZ}$
    result for addition (\ie, $a \oplus_{\mathit{\RNZ}} b$) given any rounding
    mode $\mathit{\mathit{rnd}}$. All FP operations are performed using
    double-precision FP arithmetic. Hence, $s = a
    \oplus_{\mathit{rnd}} b$, $z = (a \oplus_{\mathit{rnd}} b)
    \oplus_{\mathit{rnd}} (-a)$, and $t = b \oplus_{\mathit{rnd}} (-((a \oplus_{\mathit{rnd}} b)
    \oplus_{\mathit{rnd}} (-a)))$. Here, $\mathit{bit}(t)$ returns the IEEE-754 64-bit
    bit-string, and $\mathit{sign}(s)$ returns $0$ if $s$ is positive and $1$
    otherwise. We compute $s = s - (-1)^{\mathit{sign}(s)} \times
    \mathit{ulp}(\mathit{pred}(|s|))$ using bitwise operations, specifically by
    decrementing $\mathit{bit}(s)$ and then converting the resulting
    bit-pattern into a 64-bit FP number.}
    \label{alg:rz_add}    
    \SetKwProg{generate}{Function \emph{RZA(double a, double b)}}{}{end}
    \generate{}{
      \If {$\mathit{bit}(a) \ \textbf{xor} \ \mathit{bit}(b) == 0x8000000000000000$} {
        return $+0.0$;
      } 
      $double \ s = a+b$;\\
      \If {$|b| > |a|$}{
        $a, b = b, a$;
      }
      $double \ z = s - a$;  \\
      $double \ t = b - z$; \\
      \If {($\mathit{bit}(t)<<1 \neq 0$) \ \textbf{and} \  ($\mathit{bit}(t) \  \textbf{xor} \  bit(s) \geq 0x8000000000000000$)}{
        $s = s - (-1)^{\mathit{sign}(s)} \times \mathit{ulp}(\mathit{pred}(|s|))$;
      }
      return $s$;\\ 
    }
\end{algorithm}
\end{small}

\textbf{Emulating the $\mathit{\RNZ}$ result for addition.}
Algorithm~\ref{alg:rz_add} describes the procedure to compute $a
\oplus_{\mathit{\RNZ}} b$ given two 64-bit non-$\mathit{NaN}$,
non-infinity FP operands $a$ and $b$ and the rounding mode
$rnd$. Using lines 2 through 4, the algorithm applies the first case
of Equation~\ref{get-rz-add} and handles the cases where $a
\oplus_{\mathit{rnd}} b \neq a \oplus_{\mathit{\RNZ}} b$ because $a$
is $-b$ and $\mathit{rnd} = \mathit{\RNN}$. The expressions
$\mathit{bit}(a)$ and $\mathit{bit}(b)$ in line 2 provide the
bit-strings of the numbers $a$ and $b$ in the 64-bit IEEE-754
representation as 64-bit integers. The bit-strings of negative numbers
are greater than or equal to $0x8000000000000000$ (\ie, the sign bit
is 1). When two terms have different signs (\ie, $\mathit{sign}(a) =
1$ and $\mathit{sign}(b) = 0$, or vice versa), the sign bit of the
\textbf{xor} result will be set to $1$. The condition $\mathit{bit}(a)
\ \textbf{xor} \ \mathit{bit}(b) == 0x8000000000000000$ indicates that
$a$ and $b$ have the same absolute value but with different signs, the
exact circumstance under which $a \oplus_{\mathit{RZ}} b$ should
always return $+0$. The remaining steps of the algorithm address the
second case of Equation~\ref{get-rz-add}, in which $a+b$ is not
exactly representable and $|a+b| < |a \oplus_{\mathit{rnd}} b|$. These
steps are based on the following two theorems, which we prove in the
supplemental materials along with a detailed proof of $\mathit{\RZA}$.

\begin{theorem}\label{theorem:rz-add-1}
Let $a$ and $b$ be two non-$\mathit{NaN}$, non-infinity floating-point
numbers such that $a \oplus_{\mathit{rnd}} b$ does not overflow for
any rounding mode. If $a+b - (a \oplus_{\mathit{rnd}} b) \neq 0$, $a
\oplus_{\mathit{rnd}} b$ and $a+b - (a \oplus_{\mathit{rnd}} b)$ have
different signs if and only if $|a + b| < |a \oplus_{\mathit{rnd}}
b|$.
\end{theorem}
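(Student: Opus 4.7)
The plan is to exploit the faithful-rounding structure of $s := a \oplus_{\mathit{rnd}} b$ and then do a short case split. First I would argue that the hypothesis $e := (a+b) - s \neq 0$ forces $a+b$ to be a nonzero real number that is not exactly representable in $\mathbb{T}$. If $a+b = 0$, Equation~6 in Figure~\ref{fig:background:ops}(a) forces $s \in \{+0,-0\}$, both of which equal $0$ as reals, giving $e = 0$. If $a+b$ is a nonzero element of $\mathbb{T}$, then every faithful rounding returns it exactly, again giving $e = 0$. So under the hypothesis we have $\mathit{pred}(a+b) < a+b < \mathit{succ}(a+b)$ strictly, and by faithful rounding together with the non-overflow assumption, $s \in \{\mathit{pred}(a+b),\mathit{succ}(a+b)\}$.

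Next I would split on which neighbor $s$ is; this immediately pins down the sign of $e$. If $s = \mathit{succ}(a+b)$ then $s > a+b$, so $e < 0$; if $s = \mathit{pred}(a+b)$ then $s < a+b$, so $e > 0$. Within each branch I sub-case on the sign of $a+b$ and use Lemma~\ref{lemma:sign-preservation} to transfer the sign to $s$. For $s = \mathit{succ}(a+b)$ with $a+b > 0$, sign preservation gives $s > 0$, and $s > a+b > 0$ yields $|s| > |a+b|$; here $s$ and $e$ have opposite signs. For $s = \mathit{succ}(a+b)$ with $a+b < 0$, sign preservation gives $\mathit{sign}(s) = 1$, so $s \leq 0$, and $s > a+b$ rearranges to $|s| < |a+b|$; here $s$ and $e$ share the same sign bit. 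The two branches with $s = \mathit{pred}(a+b)$ are symmetric under the sign flip $a \mapsto -a$, $b \mapsto -b$.

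Collecting the four sub-cases yields exactly the desired biconditional: the two sub-cases producing opposite signs on $s$ and $e$ are precisely the two sub-cases producing $|a+b| < |s|$.

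The main obstacle I anticipate is the boundary behaviour when $s$ lands on $+0$ or $-0$, because then $|s| = 0$ and the strict magnitude comparisons need to be read carefully. The non-overflow hypothesis takes care of the large-magnitude side, but at the small-magnitude side I would lean on Definition~\ref{rnd-sign} (which fixes $\mathit{sign}(\pm 0)$) and Lemma~\ref{lemma:sign-preservation} to keep the sign of $s$ aligned with the sign of $a+b$ even when one of the neighbors is a signed zero; once signs are fixed, the inequalities $s > a+b$ and $s < a+b$ translate mechanically into the correct relations between $|s|$ and $|a+b|$.
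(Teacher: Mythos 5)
Your proof is correct, and it reaches the same biconditional through a genuinely different decomposition than the paper's. The paper proves each direction separately by contradiction: for the forward direction it assumes the signs differ yet $|a \oplus_{\mathit{rnd}} b| < |a+b|$, notes that the error then shares the sign of $a+b$, invokes Lemma~\ref{lemma:sign-preservation} to equate $\mathit{sign}(a \oplus_{\mathit{rnd}} b)$ with $\mathit{sign}(a+b)$, and derives a contradiction by transitivity; the reverse direction is symmetric. You instead first pin down that $a+b$ is nonzero and non-representable, so $s \in \{\mathit{pred}(a+b), \mathit{succ}(a+b)\}$, and then exhaust the four combinations of (which neighbor) $\times$ (sign of $a+b$). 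Both arguments rest on the same key lemma (sign preservation of faithful rounding for nonzero reals), so the mathematical content is equivalent, but your exhaustive case split buys something the paper's version leaves implicit: the paper's forward-direction contradiction only refutes the strict inequality $|a \oplus_{\mathit{rnd}} b| < |a+b|$, silently omitting the case $|a \oplus_{\mathit{rnd}} b| = |a+b|$ (which is vacuous, but only because sign preservation plus equal magnitudes would force $s = a+b$, contradicting $e \neq 0$). Your enumeration covers all possibilities by construction. Your worry about $s$ landing on a signed zero is legitimate and your resolution via Definition~\ref{rnd-sign} works; note also that the supplemental Lemma on the error bound ($|a+b| \geq 2^{e_{\min}-p+1}$ whenever $a+b \neq 0$) shows $s$ can in fact never be $\pm 0$ in the relevant cases, which would let you dispense with that boundary discussion entirely.
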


From Theorem~\ref{theorem:rz-add-1} we conclude that testing whether
$a+b - (a \oplus_{\mathit{rnd}} b)$ is a non-zero value, which would
indicate $a+b \neq a \oplus_{\mathit{rnd}} b$, and whether its sign
differs from that of $a \oplus_{\mathit{rnd}} b$ is sufficient for
determining if $a \oplus_{\mathit{rnd}} b$ needs to be adjusted
according to Equation~\ref{get-rz-add}. The focal point of
Algorithm~\ref{alg:rz_add} is checking the condition $a+b \neq a
\oplus_{rnd} b$ without having direct access to $a+b$. Lines 5 through
10 in Algorithm \ref{alg:rz_add} employ the steps in Dekker's
$FastTwoSum$ algorithm~\cite{dekker:fast2sum:1971} to compute the
value $t$, which is an approximation of the rounding error of the
initial FP addition (\ie $a+b - (a \oplus_{\mathit{rnd}} b)$). Since
all operations are performed in FP arithmetic, the outputs of the
subtractions in line 9 and 10 are subject to rounding. As a result,
$z=s \oplus_{\mathit{rnd}} (-a)= (a \oplus_{\mathit{rnd}} b)
\oplus_{\mathit{rnd}} (-a)$ and $t=b \oplus_{\mathit{rnd}} (-z) = b
\oplus_{\mathit{rnd}} (- ((a \oplus_{\mathit{rnd}} b)
\oplus_{\mathit{rnd}} (-a)))$. Proving the viability of $\RZA$ thus
requires affirming that $t$ is an appropriate proxy for $a+b - (a
\oplus_{\mathit{rnd}} b)$ with respect to applying
Theorem~\ref{theorem:rz-add-1} under all possible modes of
$\mathit{rnd}$. Specifically, $t$ must be sufficient for the purposes
of assessing whether $a+b - (a \oplus_{\mathit{rnd}} b) \neq 0$ and
$\mathit{sign}(a+b - (a \oplus_{\mathit{rnd}} b)) \neq \mathit{sign}(a
\oplus_{\mathit{rnd}} b)$.

An important point to note is that $t$ exactly represents the error
$a+b - (a \oplus_{\mathit{rnd}} b)$ when the rounding mode is
$\mathit{\RNE}$ and is a faithfully rounded FP value of the error $a +
b - (a \oplus_{\mathit{rnd}} b)$ for other rounding
modes~\cite{boldo:fast2sum:2017}. The suitability of $t$ as an
approximation of $a+b - (a \oplus_{\mathit{rnd}} b)$ follows directly
from Theorem~\ref{theorem:rz-add-2}.

\begin{theorem}\label{theorem:rz-add-2}
Let $t \in \mathbb{T}$ be a faithful rounding of the error $a+b - (a
\oplus_{\mathit{rnd}} b) \in \mathbb{R}$. Then, $t$ is neither $+0$
nor $-0$ if and only if $a+b - (a \oplus_{\mathit{rnd}} b) \neq 0$.
\end{theorem}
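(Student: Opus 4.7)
The plan is to prove the biconditional in two parts, with the harder direction relying on a simple but crucial lower bound: nonzero errors of double-precision additions have magnitude at least the smallest subnormal $2^{-1074}$.

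For the easy direction (error $= 0 \Rightarrow t \in \{+0,-0\}$), I would use the algorithmic definition from Algorithm~\ref{alg:rz_add}: $t = b \oplus_{\mathit{rnd}} (-z)$, where $z = s \oplus_{\mathit{rnd}} (-a)$ and $s = a \oplus_{\mathit{rnd}} b$. If $a + b - s = 0$, then $s = a + b$ exactly, so the real $s - a = b$ is representable in $\mathbb{T}$ and any faithful rounding returns it unchanged, giving $z = b$. Hence $t = b \oplus_{\mathit{rnd}} (-b)$, which by the FP addition semantics of Equation~6 in Figure~\ref{fig:background:ops} yields $t \in \{+0, -0\}$ via operand cancellation.

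For the harder direction (error $\neq 0 \Rightarrow t \notin \{+0,-0\}$), I would establish that $|r| \geq 2^{-1074}$ for any nonzero $r := a + b - s$. The key fact is that every finite double is an integer multiple of $2^{-1074}$: subnormals are by definition $k \cdot 2^{-1074}$ for some integer $k$, and a normal double at biased exponent $e \geq -1022$ has the form $(2^{52} + m) \cdot 2^{e-52}$ where $e - 52 \geq -1074$. Under the standing hypothesis of non-NaN, non-infinity operands and non-overflowing sum, $a$, $b$, and $s$ are all finite doubles, so $r = a + b - s$ is itself an integer multiple of $2^{-1074}$, and if nonzero then $|r| \geq 2^{-1074}$. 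A faithful rounding of such $r$ lies in $\{r, \mathit{pred}(r), \mathit{succ}(r)\}$; because no finite FP value exists strictly between $-2^{-1074}$ and $+2^{-1074}$ aside from $\pm 0$ themselves, each candidate in this set has magnitude at least $2^{-1074}$. Thus $|t| \geq 2^{-1074}$, which directly rules out $t \in \{+0, -0\}$. Lemma~\ref{lemma:sign-preservation} additionally pins $\mathit{sign}(t) = \mathit{sign}(r)$ as a consistency check.

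The main obstacle I anticipate is cleanly justifying the ``every finite double is a multiple of $2^{-1074}$'' fact within the paper's notation, which requires the brief case split over normal vs.\ subnormal ranges. A secondary subtlety is that the paper restricts the domain of faithful rounding to nonzero reals, necessitating that the error$=0$ case be handled through the FP operation semantics of Equation~6 rather than by applying a rounding function directly to $0$. Once the uniform-grid structure is in hand, the remainder reduces to elementary properties of $\mathit{pred}$ and $\mathit{succ}$ and a brief sign analysis.
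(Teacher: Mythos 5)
Your proof is correct and follows essentially the same route as the paper's: the paper's Lemma~\ref{lemma:rz-add-error-bound} is exactly your key observation that $a$, $b$, and $a \oplus_{\mathit{rnd}} b$ are all integer multiples of the smallest positive FP number $2^{e_{min}-p+1}$ ($=2^{-1074}$ for doubles), so a nonzero error has magnitude at least that value and no faithful rounding of it can be $+0$ or $-0$. The only immaterial divergence is in the trivial direction, where the paper simply notes that $0$ is exactly representable as $+0$ or $-0$ so a faithful rounding of a zero error must be one of these, whereas you route the argument through the concrete FastTwoSum computation of $t$ and the zero cases of Equation~6.
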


Given that $t$ is a faithful rounding of the real error $a+b - (a
\oplus_{\mathit{rnd}} b)$~(see our proofs in the supplemental material
and~\cite{boldo:fast2sum:2017}), Theorem \ref{theorem:rz-add-2}
asserts that the comparison $(\mathit{bit}(t)<<1) \neq 0$, which
checks if $t$ is neither $+0$ nor $-0$, is sufficient for determining
if $a+b - (a \oplus_{\mathit{rnd}} b) \neq 0$. Given
Theorem~\ref{theorem:rz-add-2}, $a+b - (a \oplus_{\mathit{rnd}} b)
\neq 0$ must hold when $(\mathit{bit}(t)<<1) \neq 0$ is
true. Subsequently, $(\mathit{bit}(t)<<1) \neq 0$ implies $t$ is a
faithful rounding of a non-zero real number, and thus the equality $t
= \mathit{rnd}(a+b - (a \oplus_{\mathit{rnd}} b))$ holds for all
rounding functions $\mathit{rnd} \in \{ \mathit{\RNE}, \mathit{\RNZ},
\mathit{\RNN}, \mathit{\RNP}\}$. Hence, one can apply
Lemma~\ref{lemma:sign-preservation}, which states the preservation of
sign with faithful rounding for non-zero real numbers, to conclude
that $(\mathit{bit}(t)<<1) \neq 0$ implies $\mathit{sign}(t) =
\mathit{sign}(a+b - (a \oplus_{\mathit{rnd}} b))$. The expression
$(\mathit{bit}(t) \ \textbf{xor} \ \mathit{bit}(s)) \geq
0x8000000000000000$, which checks if $\mathit{sign}(t) \neq
\mathit{sign}(s=a \oplus_{\mathit{rnd}} b)$, can thus accurately
assess if $\mathit{sign}(a+b - (a \oplus_{\mathit{rnd}} b)) \neq
\mathit{sign}(a \oplus_{\mathit{rnd}} b)$ when $(\mathit{bit}(t)<<1)
\neq 0$. Therefore, the conditions in line 11 of
Algorithm~\ref{alg:rz_add} associated with $t$ are sufficient for the
purposes of determining if $a+b - (a \oplus_{\mathit{rnd}} b) \neq 0$
and $\mathit{sign}(a+b - (a \oplus_{\mathit{rnd}} b)) \neq
\mathit{sign}(a \oplus_{\mathit{rnd}} b)$. When such conditions are
met, our algorithm $\RZA$ modifies the value of $a
\oplus_{\mathit{rnd}} b$ through line 12. In summary, our algorithm
identifies the presence of rounding error, determines whether the
error indicates $|a+b| < |a \oplus_{\mathit{rnd}} b|$, and accordingly
adjusts $a \oplus_{\mathit{rnd}} b$ to match $a \oplus_{\mathit{\RNZ}}
b$. Figure~\ref{fig:rz_add_example} provides an example of how our
algorithm $\mathit{\RZA}$ adjusts a non-$\mathit{\RNZ}$ FP addition
result based on Algorithm~\ref{alg:rz_add}.

\textbf{Relationship between $\otimes_{\mathit{\RNZ}}$ and other
  $\otimes_{\mathit{rnd}}$.}
Our objective for handling $a \otimes_{\mathit{rnd}} b$ is to adjust
its value whenever it is not equal to $a \otimes_{\mathit{\RNZ}} b$. A
notable difference between FP addition and multiplication is that the
latter operation does not entail any corner cases involving $+0$ or
$-0$ - whenever $a \times b = 0$, $a \otimes_{\mathit{rnd}} b$
displays the same behavior across all $\mathit{rnd}$. Due to this
distinction, the only cases of concern are those in which $a \times b$
is not exactly representable. As detailed in our premise for the
algorithm $\mathit{\RZA}$, $a \otimes_{\mathit{rnd}} b$ can differ
from $a \otimes_{\mathit{\RNZ}} b$ when the two numbers are distinct
FP neighbors of a product $a \times b$ that is not exactly
representable.  Because $a \otimes_{\mathit{\RNZ}} b$ should always
return the FP neighbor of $a \times b$ with the smaller absolute
value, $a \otimes_{\mathit{rnd}} b$ must be the FP neighbor with the
larger absolute value for $a \otimes_{\mathit{\RNZ}} b \neq a
\otimes_{\mathit{rnd}} b$ to hold. Given how a real number's FP
neighbors are defined in Equations~1 and~2 (Figure
~\ref{fig:background:modes}), $a \otimes_{\mathit{\RNZ}} b \neq a
\otimes_{\mathit{rnd}} b$ implies $|a \otimes_{\mathit{\RNZ}} b| = |a
\otimes_{\mathit{rnd}} b| - \mathit{ulp}(\mathit{pred}(|a
\otimes_{\mathit{rnd}} b|))$.  We define the specification of
$\mathit{\RZM}(a, b, a \otimes_{\mathit{rnd}} b)$ that returns $a
\otimes_{\mathit{\RNZ}} b$ under all rounding modes as follows.

\begin{equation}\label{get-rz-mult}
\mathit{\RZM}(a, b, a \otimes_{\mathit{rnd}} b) = \begin{cases}
      a \otimes_{\mathit{rnd}} b - (-1)^{\mathit{sign}(a \otimes_{\mathit{rnd}} b)} \times \mathit{ulp}(\mathit{pred}(|a \otimes_{\mathit{rnd}} b|)), & \text{if $|a \times b| < |a \otimes_{\mathit{rnd}} b|$}\\
      a \otimes_{\mathit{rnd}} b & \text{else}
    \end{cases}
\end{equation}

\begin{small}
\begin{algorithm}
  \caption{\small Our algorithm for computing $a
    \otimes_{\mathit{\RNZ}} b$ given any rounding mode
    $\mathit{rnd}$. All FP operations are performed using
    double-precision FP arithmetic. Hence, $m = a
    \otimes_{\mathit{rnd}} b$, $c1 = \mathit{fma}_{\mathit{rnd}}(a,b,
    -m)$, and $c2 = \mathit{fma}_{\mathit{rnd}}(-a,b, m)$. Given
    non-$\mathit{NaN}$, non-infinity operands $a$, $b$, and $c$, the
    fused-multiply-add instruction
    $\mathit{fma}_{\mathit{rnd}}(a,b,c)$ returns a faithful rounding
    of $a \times b + c$. Here, $c1$ is a faithful rounding of $a
    \times b - m$ and $c2$ is a faithful rounding of $(-a) \times b +
    m$. The remaining details are analogous to those found in
    Algorithm~\ref{alg:rz_add}.}
    \label{alg:rz_mult}    
    \SetKwProg{generate}{Function \emph{\RZM(double a, double b)}}{}{end}
    \generate{}{
      $double \ m = a \times b$; \\
      $double \ c1 = \mathit{fma}(a, b, -m)$;\\
      $double \ c2 = \mathit{fma}(-a, b, m)$;\\
      \If {$\mathit{bit}(c1) \neq \mathit{bit}(c2)$ \ \textbf{and} \ $(\mathit{bit}(c1) \ \textbf{xor} \ \ \mathit{bit}(m)) \geq 0x8000000000000000$}{
        $m = m - (-1)^{\mathit{sign}(m)} \times \mathit{ulp}(\mathit{pred}(|m|))$;
      }
      return $m$;\\ 
    }
\end{algorithm}
\end{small}

\textbf{Emulating the $\mathit{\RNZ}$ result for multiplication.} As
is the case with FP addition, simulating the $\mathit{\RNZ}$ result
for FP multiplication is non-trivial because the real value output $a
\times b$ is not directly observable. Algorithm~\ref{alg:rz_mult}
describes our approach to computing $a \otimes_{\mathit{\RNZ}} b$
given two non-$\mathit{NaN}$, non-infinity 64-bit FP numbers $a$ and
$b$ for which the product doesn't cause overflow. Much like
Algorithm~\ref{alg:rz_add}, it computes a faithful rounding of the
error in the original FP operation to assess whether the original
product needs to be adjusted to match the $\mathit{\RNZ}$ result. As
we describe in our proofs for $\mathit{\RZM}$, however, a faithful
rounding of $a \times b - (a \otimes_{\mathit{rnd}} b)$ could be equal
to $+0$ or $-0$ even when $a \times b - (a \otimes_{\mathit{rnd}} b)
\neq 0$ because rounding error induced by multiplication is
susceptible to underflow. We therefore rely on the following two
theorems to confirm the presence and nature of the rounding error in
$a \otimes_{\mathit{rnd}} b$.

\begin{theorem}
\label{theorem:rz-mult-1}
Let $a$ and $b$ be two non-$\mathit{NaN}$, non-infinity floating-point
numbers such that $a \otimes_{\mathit{rnd}} b$ does not overflow for
any rounding mode. If $a \times b - (a \otimes_{\mathit{rnd}} b) \neq
0$, $a \otimes_{\mathit{rnd}} b$ and $a \times b - (a
\otimes_{\mathit{rnd}} b)$ have different signs if and only if $|a
\times b| < |a \otimes_{\mathit{rnd}} b|$.
\end{theorem}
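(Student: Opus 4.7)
The plan is to prove Theorem~\ref{theorem:rz-mult-1} by following the same skeleton used for Theorem~\ref{theorem:rz-add-1}, but with the multiplication-specific preliminary step of ruling out $a \times b = 0$. Concretely, let $r = a \times b$ and $p = a \otimes_{\mathit{rnd}} b$. The first step is to show that $r \neq 0$: if $r = 0$, then by the first four cases of Equation~7 in Figure~\ref{fig:background:ops} we would have $p \in \{+0, -0\}$, so $r - p = 0$ in real arithmetic, contradicting the hypothesis $r - p \neq 0$. Hence $r \neq 0$, and since $a \otimes_{\mathit{rnd}} b$ is then a faithful rounding of the non-zero real number $r$, Lemma~\ref{lemma:sign-preservation} gives $\mathit{sign}(p) = \mathit{sign}(r)$. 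In particular $p$ is non-zero and has the same sign as $r$.

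Next, I would split into the two cases $r > 0$ and $r < 0$, and within each case prove both directions of the biconditional as elementary real inequalities. For $r > 0$ (so $p > 0$), the condition $|r| < |p|$ is equivalent to $r < p$, which is equivalent to $r - p < 0$, which — since $p > 0$ — is equivalent to $\mathit{sign}(r-p) \neq \mathit{sign}(p)$. For $r < 0$ (so $p < 0$), the condition $|r| < |p|$ is equivalent to $-r < -p$, hence $r > p$, hence $r - p > 0$, which — since $p < 0$ — is again equivalent to $\mathit{sign}(r-p) \neq \mathit{sign}(p)$. Combining the two cases yields the claim. Faithfulness of $\mathit{rnd}$ (so that $p \in \{\mathit{pred}(r), \mathit{succ}(r)\}$) is never needed explicitly for the biconditional itself; it only matters implicitly via sign-preservation and the non-overflow assumption, which keeps $p$ finite and sign-consistent with $r$.

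The only real subtlety, and the place I would be most careful, is the elimination of $r = 0$. Unlike addition, multiplication cannot produce a non-zero rounded value from a zero real product, but this needs the explicit $\oplus_{\mathit{rnd}}$/$\otimes_{\mathit{rnd}}$ definitions from Figure~\ref{fig:background:ops} rather than just the real-valued rounding functions of Figure~\ref{fig:background:modes}, because the latter were restricted to $\mathbb{R} \setminus \{0\}$. Once $r \neq 0$ is established, everything else is a routine sign-and-magnitude argument that mirrors the addition proof, and no delicate ULP reasoning is required at this stage; ULP-level details only enter later when showing that Algorithm~\ref{alg:rz_mult}'s computed witness $c_1$ (or $c_2$) faithfully reflects the sign of $r - p$, which is the subject of the companion theorem analogous to Theorem~\ref{theorem:rz-add-2}.
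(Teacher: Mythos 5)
Your proposal is correct and matches the paper's proof in all essentials: both first rule out $a \times b = 0$ (so that sign preservation, Lemma~\ref{lemma:sign-preservation}, yields $\mathit{sign}(a \otimes_{\mathit{rnd}} b) = \mathit{sign}(a \times b)$) and then reduce the biconditional to elementary sign-versus-magnitude reasoning about $a \times b - (a \otimes_{\mathit{rnd}} b)$. The only difference is presentational: you establish the equivalence directly via a case split on the sign of $a \times b$, whereas the paper argues each direction separately by contradiction.
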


\begin{theorem}
\label{theorem:rz-mult-2}
Let $a$ and $b$ be two non-$\mathit{NaN}$, non-infinity floating-point
numbers such that $a \otimes_{\mathit{rnd}} b$ does not overflow for any
rounding mode. Let $\mathit{bit}(f)$ be a function that returns the bit-string
of any floating-point number $f$. Then, for any rounding mode $\mathit{rnd}$,
$\mathit{bit}(\mathit{fma}_{\mathit{rnd}}(a, b, -(a \otimes_{\mathit{rnd}} b))
\neq \mathit{bit}(\mathit{fma}_{\mathit{rnd}}(-a, b, a
\otimes_{\mathit{rnd}} b))$ if and only if $a \times b - (a
\otimes_{\mathit{rnd}} b) \neq 0$.
\end{theorem}

Lines~2 and 3 in Algorithm \ref{alg:rz_mult} compute $m = a
\otimes_{rnd} b$ and $c1 = \mathit{fma}_{\mathit{rnd}}(a, b, -(a
\otimes_{\mathit{rnd}} b))$.  The output of a fused-multiply-add (FMA)
operation in the form of $\mathit{fma}_{\mathit{rnd}}(a, b, c)$ is the
result of performing a faithfully rounded FP addition
$\oplus_{\mathit{rnd}}$ using the operands $a \times b$ and $c$ with
no intermediate rounding for the multiplication. Based on the rules of
$\oplus_{\mathit{rnd}}$ detailed in Equation~6
(Figure~\ref{fig:background:ops}), $c1 =
\mathit{fma}_{\mathit{rnd}}(a, b, -m)$ is guaranteed to be a faithful
rounding of $a \times b - m = a \times b - (a \otimes_{\mathit{rnd}}
b)$. Line~4 of the algorithm computes $c2 =
\mathit{fma}_{\mathit{rnd}}(-a, b, m)$, which makes $c2$ a faithful
rounding of $(-a) \times b +m = -(a \times b) + (a \otimes_{\mathit{rnd}}
b)$. The algorithm computes $c2$ to compare its bit-pattern against
that of $c1$ to utilize Theorem~\ref{theorem:rz-mult-2}, which
guarantees that $a \times b - (a \otimes_{\mathit{rnd}} b) \neq 0$ whenever the
condition $\mathit{bit}(c1) \neq \mathit{bit}(c2)$ in line 5 is true.

Once it is established that $a \times b - (a \otimes_{\mathit{rnd}} b)
\neq 0$, Theorem~\ref{theorem:rz-mult-1} affirms that examining the
sign of $a \times b - (a \otimes_{\mathit{rnd}} b)$ relative to $a
\otimes_{\mathit{rnd}} b$ is sufficient for assessing whether $|a
\times b| < |a \otimes_{\mathit{rnd}} b|$. Here, $c1$ is a faithful
rounding of a non-zero real value $a \times b - (a
\otimes_{\mathit{rnd}} b)$. From Lemma~\ref{lemma:sign-preservation},
the sign preserving properties of faithful rounding with respect to
non-zero real numbers ensures that $\mathit{sign}(c1) =
\mathit{sign}(a \times b - (a \otimes_{\mathit{rnd}} b))$. Under such
conditions, $\mathit{\RZM}$ can thus apply Theorem
~\ref{theorem:rz-mult-1} and use $c1$ as a proxy for $a\times b - (a
\otimes_{\mathit{rnd}} b)$. Our algorithm $\mathit{\RZM}$ applies
Theorem ~\ref{theorem:rz-mult-1} by confirming $a \times b - (a
\otimes_{\mathit{rnd}} b) \neq 0$ through the condition
$\mathit{bit}(c1) \neq \mathit{bit}(c2)$ and checking if
$\mathit{sign}(c1) \neq \mathit{sign}(m=a \otimes_{\mathit{rnd}} b)$
through the expression $(\mathit{bit}(c1) \ \textbf{xor}
\ \ \mathit{bit}(m)) \geq 0x8000000000000000$. In conclusion, Theorems
\ref{theorem:rz-mult-1} and $\ref{theorem:rz-mult-2}$ in conjunction
with Lemma~\ref{lemma:sign-preservation} guarantee that the conditions
$\mathit{bit}(c1) \neq \mathit{bit}(c2)$ and $(\mathit{bit}(c1)
\ \textbf{xor} \ \ \mathit{bit}(m)) \geq 0x8000000000000000$ are
appropriate for testing whether $a \times b - (a
\otimes_{\mathit{rnd}} b) \neq 0$, $\mathit{sign}(a \times b - (a
\otimes_{\mathit{rnd}} b)) \neq \mathit{sign}(a \otimes_{\mathit{rnd}}
b)$, and $|a \times b| < |a \otimes_{\mathit{rnd}} b|$. When such
conditions are met, line~6 of Algorithm~\ref{alg:rz_mult} produces $a
\otimes_{\mathit{\RNZ}} b$ by adjusting the value of $a
\otimes_{\mathit{rnd}} b$ in accordance with Equation
~\ref{get-rz-mult}. We provide detailed proofs for the algorithm
$\mathit{\RZM}$ and its associated theorems in the supplemental
materials (see Section~7). Figure ~\ref{fig:rz_mult_example} provides
an example of how $\mathit{\RZM}$ adjusts a non-\RNZ FP multiplication
result based on Algorithm~\ref{alg:rz_mult}.

\begin{figure}
  \begin{subfigure}{0.34\linewidth}
  \includegraphics[width=\linewidth, height=3.5cm]{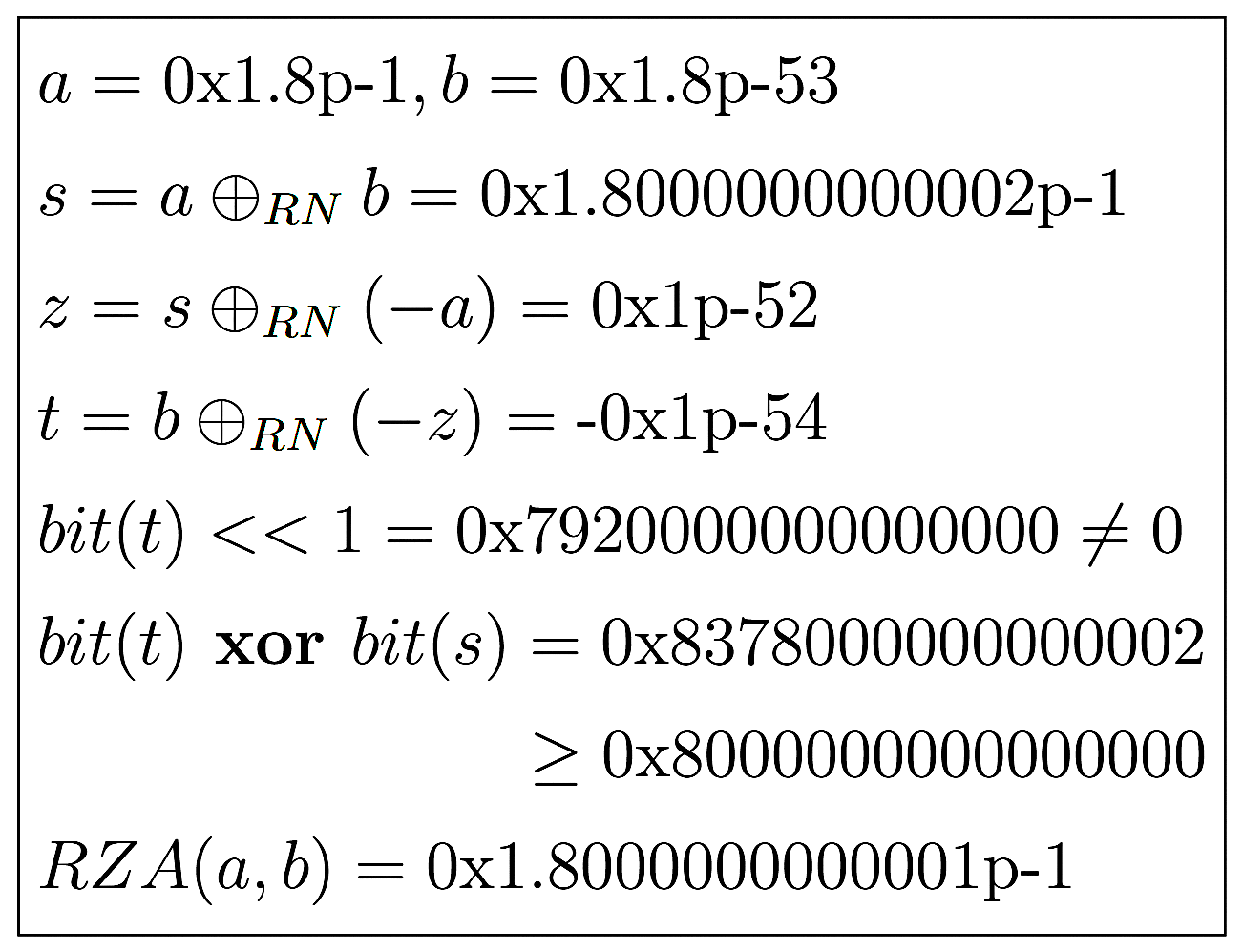}
  \caption{}\label{fig:rz_add_example}
  \end{subfigure}
  \hspace{15mm}
  \begin{subfigure}{0.34\linewidth}
  \includegraphics[width=\linewidth, height=3.5cm]{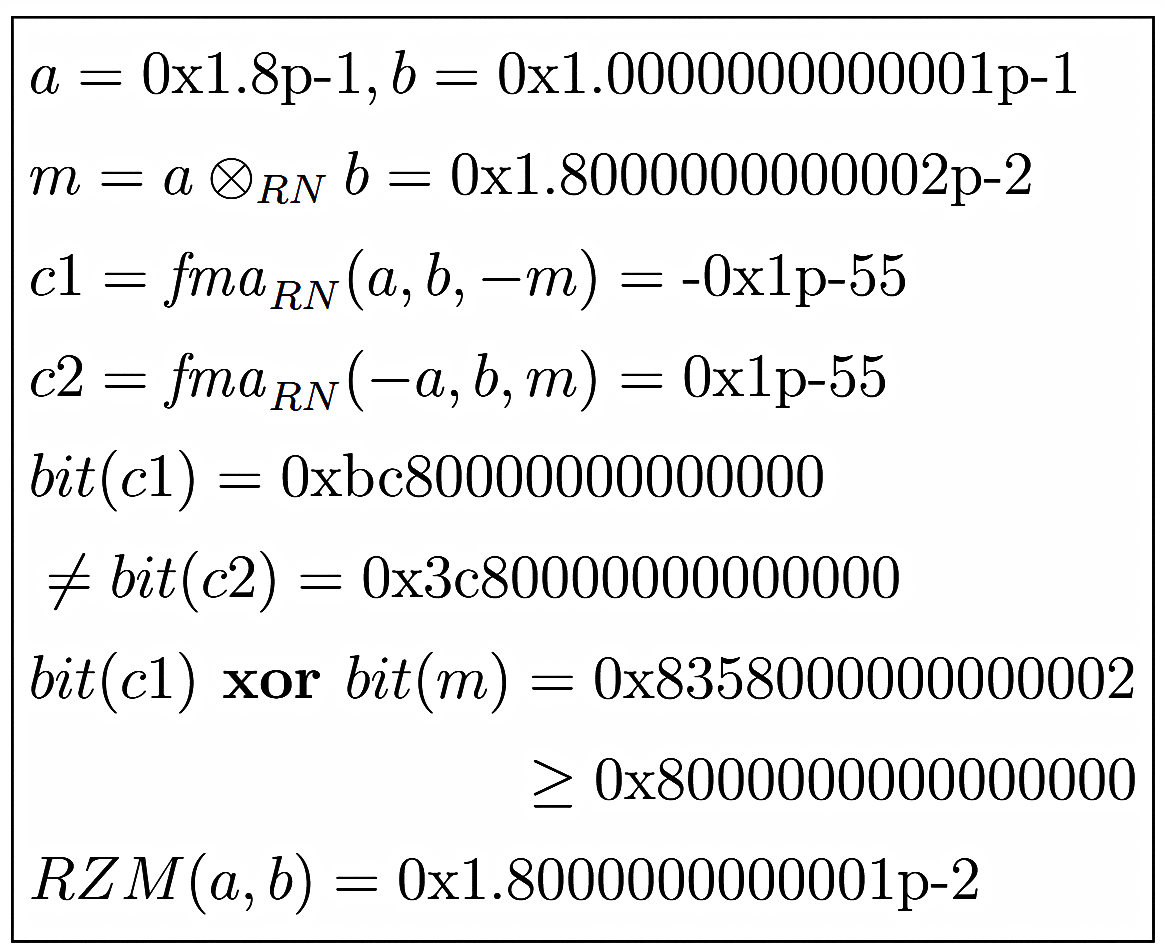}
  \caption{}\label{fig:rz_mult_example}
  \end{subfigure}
  \caption{\small (a) An illustration of how our $\mathit{\RZA}$
    function adjusts the output of a double-precision FP addition
    result $s = a \oplus_{\mathit{rnd}} b$ when $s \neq a
    \oplus_{\mathit{\RNZ}} b$. (b) An illustration of how our
    $\mathit{\RZM}$ function adjusts the output of a double-precision
    FP multiplication result $m = a \otimes_{\mathit{rnd}} b$ when $m
    \neq a \otimes_{\mathit{\RNZ}} b$. All FP values are shown as hex
    floats. All integers (\eg, $\mathit{bit}(t)$) are shown in
    hexadecimal representation.}
    \label{fig:rz_emulation}    
\end{figure}

\textbf{Changes to the \rlibm pipeline.} The rounding-invariant
outputs approach requires minimal changes to the \rlibm pipeline. The
only change needed is to set the rounding mode to $\mathit{\RNZ}$
instead of $\mathit{\RNE}$ when performing range reduction, reduced
interval generation, and polynomial generation. For the resultant
implementations, this new method entails replacing all FP additions
and multiplications, which are the only FP operations in the code that
induce rounding-related variability, with $\mathit{\RZA}$ and
$\mathit{\RZM}$, respectively. These modifications lead to
implementations that can produce correctly rounded results for all
inputs across multiple representations and rounding modes without
requiring explicit changes to the application-level rounding mode.

\subsection{Rounding Independence by Deducing Rounding-Invariant Input Bounds}
\label{sec:multi-round}

The underlying principle behind our round-to-zero emulation in
Section~\ref{subsec:round-to-zero} is the internal enforcement of a
single rounding mode (\ie, \RNZ) such that the range reduction,
polynomial evaluation, and output compensation steps in a given \rlibm
implementation will always produce the same outputs regardless of the
invoking application's rounding mode. This approach requires
augmenting each FP addition and multiplication with operations
designed to produce the \RNZ result, which can add some overhead to
the final implementations. To address this issue, we propose an
alternative method that bypasses the overheads required for
accomplishing \textbf{rounding mode-invariant outputs} by deducing the
\textbf{rounding mode-invariant bounds} on the polynomial evaluation
and output compensation results for the reduced inputs
encountered. This method moves the overhead from the final math
library implementation to the process of generating it.

The range reduction, polynomial evaluation, and output compensation
processes involve FP arithmetic and are potentially sensitive to
rounding modes. Among them, range reduction is the least sensitive
because the algorithms used generally produce rounding
mode-independent results.
We make all range reduction algorithms used with our new approach
produce results that conform to $\mathit{\RNZ}$. We enforce this
requirement in the resultant implementations by replacing all rounding
mode-dependent FP additions and multiplications used for range
reduction with the functions $\mathit{\RZA}$ and $\mathit{\RZM}$
discussed in Section~\ref{subsec:round-to-zero}.

\begin{figure}
  \begin{subfigure}{0.44\linewidth}
  \includegraphics[width=\linewidth]{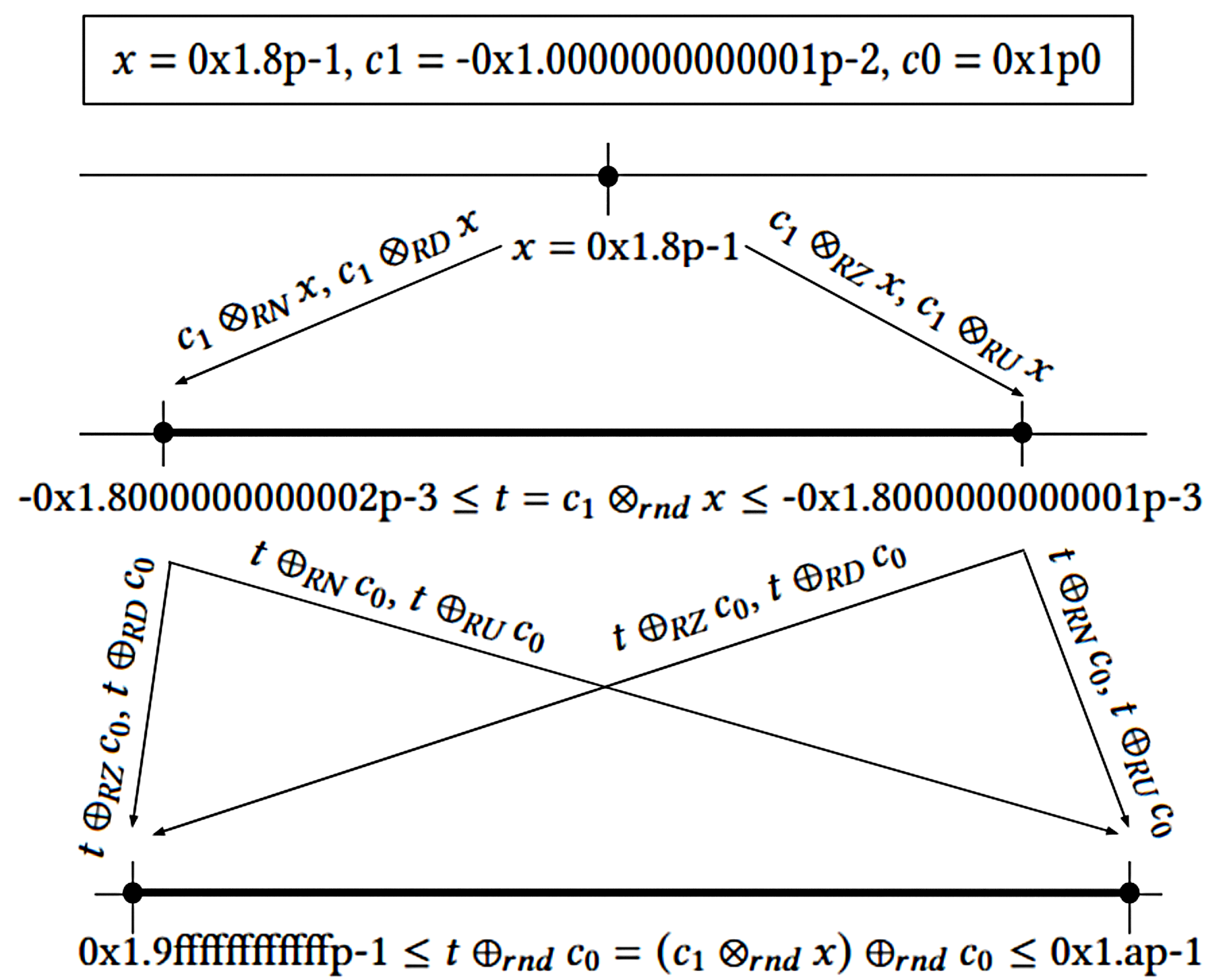}
  \caption{}\label{fig:input_bounds_intuition}
  \end{subfigure}
  \hspace{15mm}
  \begin{subfigure}{0.22\linewidth}
  \includegraphics[width=\linewidth, height=4.5cm]{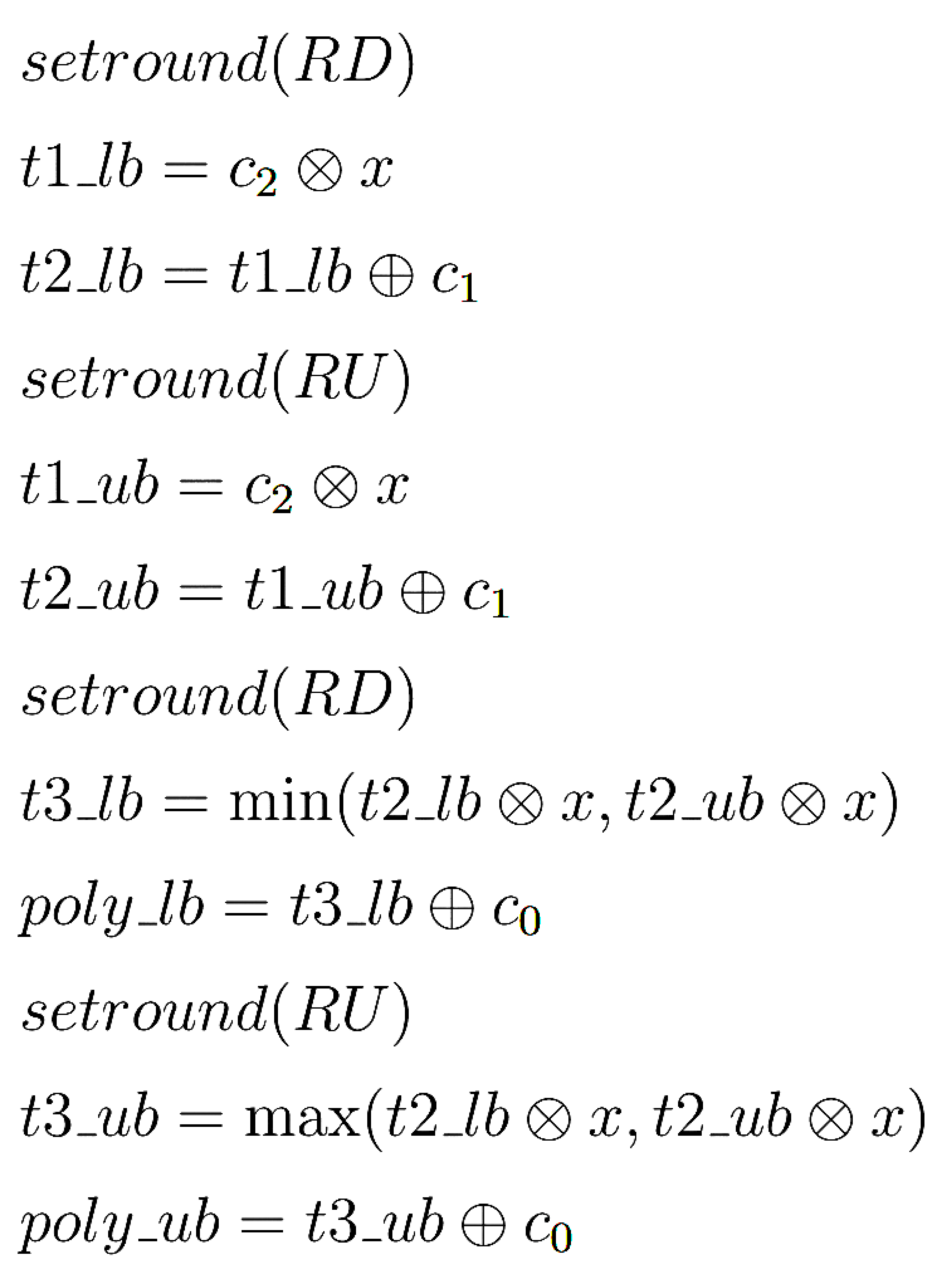}
  \caption{}\label{fig:input_bounds_example}
  \end{subfigure}
  \caption{\small (a) The figure illustrates the intuition behind how
    our rounding-invariant input bounds approach identifies the lower
    and upper bounds of a polynomial output across all rounding modes
    using a polynomial of the form $P(x) = (c_{1} \otimes_{rnd} x)
    \oplus_{rnd} c_{0}$ as an example. We highlight that the lower
    bounds for both the $\otimes_{rnd}$ and $\oplus_{rnd}$ operations
    involve the \RNN results. Similarly, the upper bounds involve the
    \RNP results.  (b) The figure illustrates the steps the
    rounding-invariant input bounds approach takes to compute the
    lower bound ($poly\_lb$) and upper bound ($poly\_ub$) of a
    polynomial $P(x) = (((c_{2} \otimes_{rnd} x)\oplus_{rnd} c_{1})
    \otimes_{rnd} x) \oplus_{rnd} c_{0}$ during the library generation
    process.}
    \label{fig:input_bounds}    
\end{figure}

\textbf{Accounting for rounding mode induced variability.} 
To account for the variability in the FP evaluation of polynomials and
output compensation functions under different rounding modes, we
deduce new bounds for the reduced intervals given to the polynomial
generator and for the expected results of the polynomial
approximations. Our key idea is to consider the polynomial
approximations and output compensation functions as returning not a
single value, but rather a \textbf{range of values}.  Representing the
output of a sequence of FP operations as a range of values (\ie, an
interval) accounts for the variability in the results stemming from
the different rounding modes.  We use interval arithmetic to identify
bounds for such ranges, which the new pipeline uses during the reduced
interval and polynomial generation stages. We construct this new
strategy based on the insight that the round-down (\RNN) and round-up
(\RNP) results serve as the lower and upper bounds for a given FP
operation's output range across various rounding
modes. Figure~\ref{fig:input_bounds_intuition} illustrates the
intuition behind this method. We now describe our approach for
deducing the bounds of polynomial evaluation and output compensation
results in the context of our reduced interval and polynomial
generation processes.

\textbf{Accounting for rounding induced variability in polynomial evaluation.}
Let $P(x) = \sum_{i=0}^{d}c_{i}x^{i} = c_{d}x^{d} + c_{d-1}x^{d-1} +
\cdot\cdot\cdot + c_{2}x^{2} + c_{1}x + c_{0}$ where $P(x) \in
\mathbb{R}$ and $\forall{i}, c_{i} \in \mathbb{T}$ represent the
typical polynomial approximation considered in the \rlibm project. If
a polynomial $P(x)$ has at least one non-constant term, the result of
evaluating $P(x)$ on a given input $x$ via real arithmetic can be
expressed as either $P(x)=P_{1}(x)\times P_{2}(x)$ or
$P(x)=P_{1}(x)+P_{2}(x)$ where $P_{1}(x), P_{2}(x) \in \mathbb{R}$
represent the appropriate intermediate terms. If the evaluation is
performed using FP arithmetic, the final output may differ depending
on the order in which operations are performed since FP operations are
non-associative. For our purposes, we assume throughout the paper that
every polynomial is associated with a unique, predetermined FP
evaluation scheme. The variability induced by the different orderings
that are possible when translating a polynomial to a sequence of FP
operations is outside the scope of this paper. Even when evaluating
$P(x)$ using a fixed order of FP operations, one may observe different
outputs depending on the rounding rule applied to each
operation. Henceforth, we use $P(x)$ to represent the final output
obtained from evaluating a given polynomial using $n$ ordered FP
operations each subject to a rounding rule $\mathit{rnd}_{i} \in \{
\mathit{\RNE}, \mathit{\RNZ}, \mathit{\RNN}, \mathit{\RNP}
\}$. Recognizing the impact of rounding on the final output, we
formulate $P(x)$ through the following definition.

\begin{small}
\begin{definition}
  \label{poly-eval-def}
\begin{equation}\nonumber
P(x) = \begin{cases} c, & \text{if $P(x)$ is a FP constant}\\
    P_{1}(x) \oplus_{\mathit{\RNE}} P_{2}(x) & \text{if $\mathit{flop}_{n}=\oplus$ and $\mathit{rnd}_{n} = \mathit{\RNE}$}\\
    P_{1}(x) \oplus_{\mathit{\RNZ}} P_{2}(x) & \text{if $\mathit{flop}_{n}=\oplus$ and $\mathit{rnd}_{n} = \mathit{\RNZ}$}\\
    \cdot \cdot \cdot \\
    P_{1}(x) \otimes_{\mathit{\RNN}} P_{2}(x) & \text{if $\mathit{flop}_{n}=\otimes$ and $\mathit{rnd}_{n} = \mathit{\RNN}$}\\
    P_{1}(x) \otimes_{\mathit{\RNP}} P_{2}(x) & \text{if $\mathit{flop}_{n}=\otimes$ and $\mathit{rnd}_{n} = \mathit{\RNP}$} 

      \end{cases}\\
\end{equation}
\end{definition}
\end{small}

The definition of $P(x)$ presented above is intended to be recursive:
$P_{1}(x)$ and $P_{2}(x)$ both represent the result of a specific
sequence of intermediate FP operations that are each subject to
different rounding rules. Here, $\mathit{rnd}_{n}$ represents the
rounding rule employed by the last FP addition or multiplication,
which we collectively denote as $\mathit{flop}_{n}$. The subscript in
$\mathit{rnd}_{n}$ indicates that each intermediate FP operation
$\mathit{flop}_{i}$ is executed under the corresponding rounding mode
$rnd_{i}$. Typically, the rounding rule used by each FP operation
would be fixed to the application's rounding mode. Note that for the
purposes of defining $P(x)$ in this paper, we do not require that all
instances of $\mathit{rnd}_{i}$ are identical.

Given a set of reduced inputs and their reduced intervals,
$(x_{i}',[l_{i}', h_{i}'])$, the goal of \rlibm's polynomial generator
is to find a polynomial with an FP output $P(x)$ as defined under
Definition~\ref{poly-eval-def} such that $l_{i}' \leq P(x_{i}') \leq
h_{i}'$ holds for all inputs $x_i$.  In the original \rlibm approach,
every intermediate FP output encountered while computing $P(x')$ for a
given reduced input is the result of rounding via \RNE (\ie,
$\forall{i}, \mathit{rnd}_{i} = \mathit{\RNE}$). In our rounding-invariant outputs
approach, every $\mathit{rnd}_{i} = \mathit{\RNZ}$. In both cases, $P(x_{i}')$ would be
a single value given the absence of rounding-induced variability in
each intermediate FP output. Figure~\ref{fig:poly-constraints}(a)
shows the constraints provided to the polynomial generator when the
output of the polynomial evaluation is a single value.

\textbf{Constraints for polynomial generation when outputs are treated
  as ranges of values.}
Given the same values for $P_{1}(x_{i}')$ and $P_{2}(x_{i}')$, the
final value of $P(x_{i}')$ may differ depending on the choice of
$rnd_{n}$. Thus, evaluating a single version of $P(x_{i}')$ derived
from a fixed $\mathit{rnd}_{n}$ provides limited information. For example, it
could be the case that the constraint $l_{i}' \leq P(x_{i}') \leq
h_{i}'$ holds for a reduced input $x_{i}'$ and its reduced interval
$[l_{i}', h_{i}']$ when $\mathit{rnd}_{n}= \mathit{\RNE}$ \ but not when $\mathit{rnd}_{n} =
\mathit{\RNP}$. The problem is further exacerbated by the fact that a
fully-evaluated $P(x_{i}')$ is the result of a specific sequence of
intermediate roundings $\mathit{rnd}_{i}$. Different combinations of the
intermediate $\mathit{rnd}_{i}$ instances prior to $\mathit{rnd}_{n}$ could lead to
different values for $P_{1}(x_{i}')$ and $P_{2}(x_{i}')$. To account
for this variability, we adopt a new viewpoint regarding
$P(x)$. \emph{We no longer consider the final polynomial evaluation
output $P(x)$ for a particular reduced input $x_{i}'$ to be a single
FP number resulting from a fixed sequence of $\mathit{rnd}_{i}$. Rather, we
treat it as a function of the $n$ variable instances of $\mathit{rnd}_{i}$, the
output range of which we denote $[\underline{P(x_{i}')},
  \overline{P(x_{i}')}]$}. In the case that $l_{i}' \leq
\underline{P(x_{i}')} \leq \overline{P(x_{i}')} \leq h_{i}'$, one can
conclude that regardless of the rounding rule $\mathit{rnd}_{i}$ used for each operation
$\mathit{flop}_{i}$, the final result of $P(x_{i}')$ will satisfy its
associated constraint. Therefore, the goal of the new polynomial
generator would be to find a polynomial such that $l_{i}' \leq
\underline{P(x_{i}')} \leq \overline{P(x_{i}')} \leq h_{i}'$ for as
many $x_{i}$'s as possible. Figure~\ref{fig:poly-constraints}(b)
displays how the polynomial generator would apply constraints to a
candidate polynomial approximation under the new approach.

\begin{figure}
  \includegraphics[width=\linewidth]{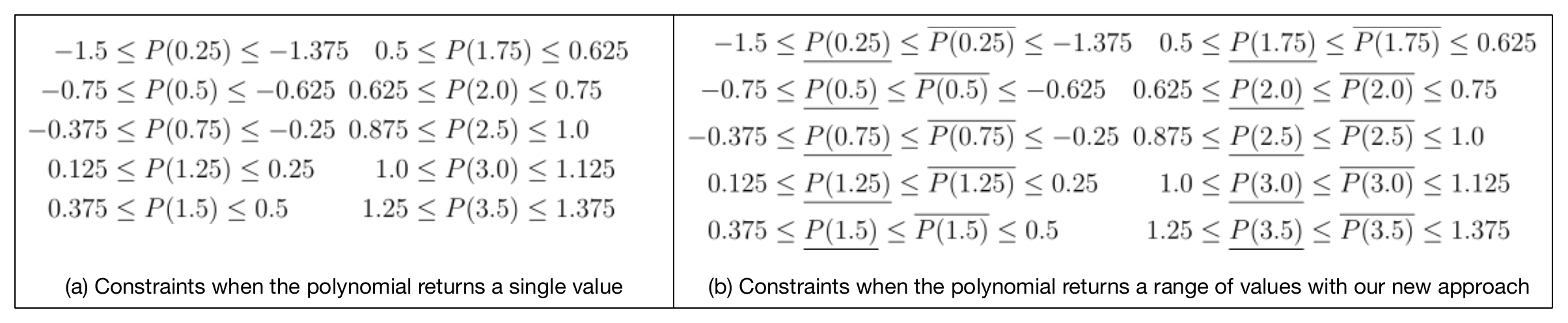}
  \caption{(a) Constraints used to produce the correctly rounded
    results when $P(x_{i}')$ is considered a single FP value as in the
    original \rlibm and our rounding-invariant outputs approaches. (b)
    Constraints used to produce the correctly rounded results when
    $P(x_{i}')$ is considered to be any FP value in the range
    $[\underline{P(x_{i}')}, \overline{P(x_{i}')}]$.}
    \label{fig:poly-constraints}
\end{figure}

\textbf{Defining the bounds of polynomial evaluation.} Computing
$\underline{P(x)}$ and $\overline{P(x)}$ for an arbitrary input $x$
necessitates identifying an appropriate rounding $rnd_{i}$ to apply to
each FP operation and identifying operands along with their respective
interval bounds.
We use the monotonicity properties of faithfully rounded FP operations
and identify the interval bounds using interval arithmetic rules
presented in Lemmas~\ref{lemma:int-sum} and~\ref{lemma:int-prod}.
Given the definition of $P(x)$ in Definition~\ref{poly-eval-def}, we
now present the definitions of $\underline{P(x)}$ and
$\overline{P(x)}$ through Theorems~\ref{theorem:lb-def} and
\ref{theorem:ub-def} and their proofs.

\begin{small}
\begin{theorem}\label{theorem:lb-def}
\begin{equation}\nonumber
\underline{P(x)} = \begin{cases}
      c, & \text{if FP constant}\\
      \underline{P_{1}(x)} \oplus_{\mathit{\RNN}} \underline{P_{2}(x)}, & \text{if $\mathit{flop}_{n}= \oplus$}\\
      \mathit{min}(
      \underline{P_{1}(x)} \otimes_{\mathit{\RNN}} \underline{P_{2}(x)}, 
      \underline{P_{1}(x)} \otimes_{\mathit{\RNN}} \overline{P_{2}(x)},  
      \overline{P_{1}(x)} \otimes_{\mathit{\RNN}} \underline{P_{2}(x)},
      \overline{P_{1}(x)} \otimes_{\mathit{\RNN}} \overline{P_{2}(x)}) & \text{if $\mathit{flop}_{n}= \otimes$}
     \end{cases}\\
\end{equation}
\end{theorem}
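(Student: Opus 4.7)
The plan is to prove Theorem~\ref{theorem:lb-def} by structural induction on the recursive definition of $P(x)$ given in Definition~\ref{poly-eval-def}. The recursion in the theorem mirrors the recursion in the definition, so the proof naturally decomposes into a base case (FP constant) and two inductive cases (one for $\mathit{flop}_n = \oplus$ and one for $\mathit{flop}_n = \otimes$). The induction hypothesis asserts that for every choice of intermediate rounding rules $\mathit{rnd}_1,\ldots,\mathit{rnd}_{n-1}$, the resulting FP values of the sub-expressions satisfy $\underline{P_1(x)} \leq P_1(x) \leq \overline{P_1(x)}$ and $\underline{P_2(x)} \leq P_2(x) \leq \overline{P_2(x)}$. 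The goal in each inductive step is to show that the candidate bound given in the theorem is $\leq P_1(x) \star_{\mathit{rnd}_n} P_2(x)$ for every choice of $\mathit{rnd}_n \in \{\RNE, \RNZ, \RNN, \RNP\}$.

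The base case is immediate: when $P(x) = c$ is a FP constant, the evaluation involves no rounding at all, so the single possible output is $c$ and $\underline{P(x)} = c$. For the addition case, fix arbitrary $p_1 = P_1(x)$ and $p_2 = P_2(x)$ permitted by the induction hypothesis, so $\underline{P_1(x)} \leq p_1$ and $\underline{P_2(x)} \leq p_2$. Lemma~\ref{lemma:int-sum} then gives $\underline{P_1(x)} + \underline{P_2(x)} \leq p_1 + p_2$; applying Lemma~\ref{lemma:rnd-sum-monotonic} yields $\underline{P_1(x)} \oplus_{\RNN} \underline{P_2(x)} \leq p_1 \oplus_{\RNN} p_2$; and Lemma~\ref{lemma:rnd-sum-bound} provides $p_1 \oplus_{\RNN} p_2 \leq p_1 \oplus_{\mathit{rnd}_n} p_2$. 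Chaining these three inequalities establishes the claim for addition.

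The multiplication case follows the same pattern but requires a four-way case split because the sign of each factor determines which corner of the operand box realises the smallest real product. Again fixing arbitrary $p_1, p_2$ compatible with the induction hypothesis, Lemma~\ref{lemma:int-prod} gives a real-arithmetic lower bound $a^* \times b^* \leq p_1 \times p_2$ where $(a^*, b^*) \in \{\underline{P_1(x)}, \overline{P_1(x)}\} \times \{\underline{P_2(x)}, \overline{P_2(x)}\}$ is the minimising corner. Lemma~\ref{lemma:rnd-prod-monotonic} then transfers this inequality across $\otimes_{\RNN}$ to obtain $a^* \otimes_{\RNN} b^* \leq p_1 \otimes_{\RNN} p_2$, and Lemma~\ref{lemma:rnd-prod-bound} gives $p_1 \otimes_{\RNN} p_2 \leq p_1 \otimes_{\mathit{rnd}_n} p_2$. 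Since the four-way $\min$ in the theorem is no larger than $a^* \otimes_{\RNN} b^*$ for the winning corner, the recursive expression is a valid lower bound.

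The main obstacle I anticipate is the multiplication case: one has to be careful that the corner $(a^*, b^*)$ minimising the \emph{real} product is not necessarily the same corner minimising the \emph{FP} product $a \otimes_{\RNN} b$, but this mismatch is benign because taking the $\min$ over all four FP products can only loosen the bound. A secondary bookkeeping concern is confirming that the no-$\mathit{NaN}$, non-infinity side conditions of Lemmas~\ref{lemma:rnd-sum-bound}--\ref{lemma:rnd-prod-monotonic} are maintained throughout the recursion; this follows from the paper's standing assumption that the FP evaluation of $P(x)$ neither overflows nor produces $\mathit{NaN}$, which propagates to every sub-expression $P_1(x)$ and $P_2(x)$.
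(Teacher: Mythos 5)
Your proposal is correct and uses the same key ingredients as the paper's own proof: Lemmas~\ref{lemma:int-sum}/\ref{lemma:int-prod} for the real-arithmetic bound, Lemmas~\ref{lemma:rnd-sum-monotonic}/\ref{lemma:rnd-prod-monotonic} to transfer that bound across $\oplus_{\RNN}$ and $\otimes_{\RNN}$, and Lemmas~\ref{lemma:rnd-sum-bound}/\ref{lemma:rnd-prod-bound} to show that \RNN minimises over rounding modes; the paper merely packages the identical inequality chain as a two-step proof by contradiction rather than a direct chain. Your anticipated obstacle for multiplication (the real-product-minimising corner need not be the FP-product-minimising corner) is resolved exactly as the paper does it, by taking the minimum over all four corners.
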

\end{small}

\begin{small}
\begin{theorem}\label{theorem:ub-def}
\begin{equation}\nonumber
\overline{P(x)} = \begin{cases}
      c, & \text{if FP constant}\\
      \overline{P_{1}(x)} \oplus_{\mathit{\RNP}} \overline{P_{2}(x)}, & \text{if $\mathit{flop}_{n}= \oplus$}\\
      \mathit{max}(
      \underline{P_{1}(x)} \otimes_{\mathit{\RNP}} \underline{P_{2}(x)}, 
      \underline{P_{1}(x)} \otimes_{\mathit{\RNP}} \overline{P_{2}(x)},  
      \overline{P_{1}(x)} \otimes_{\mathit{\RNP}} \underline{P_{2}(x)},
      \overline{P_{1}(x)} \otimes_{\mathit{\RNP}} \overline{P_{2}(x)}) & \text{if $\mathit{flop}_{n}= \otimes$}
     \end{cases}\\
\end{equation}
\end{theorem}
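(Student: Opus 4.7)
The plan is to prove Theorem~\ref{theorem:ub-def} (and, symmetrically, Theorem~\ref{theorem:lb-def}) by structural induction on the recursive form of $P(x)$ given in Definition~\ref{poly-eval-def}, which amounts to induction on the number $n$ of FP operations in the fixed evaluation scheme. The base case is immediate: when $P(x)$ is the FP constant $c$, the set of outputs across all rounding-mode choices is $\{c\}$, so both $\underline{P(x)} = \overline{P(x)} = c$.

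For the inductive step, I assume the theorem holds for the sub-polynomials $P_{1}(x)$ and $P_{2}(x)$, so the achievable output ranges across all valid choices of intermediate rounding rules are exactly $[\underline{P_1(x)}, \overline{P_1(x)}]$ and $[\underline{P_2(x)}, \overline{P_2(x)}]$. In the addition case ($\mathit{flop}_n = \oplus$), I first establish that the claimed formula is an upper bound: for any concrete realization with sub-results $P_1(x), P_2(x)$ and any top-level rule $\mathit{rnd}_n$, Lemma~\ref{lemma:rnd-sum-bound} gives $P_1(x) \oplus_{\mathit{rnd}_n} P_2(x) \leq P_1(x) \oplus_{\RNP} P_2(x)$; the inductive hypothesis yields $P_1(x) + P_2(x) \leq \overline{P_1(x)} + \overline{P_2(x)}$; and Lemma~\ref{lemma:rnd-sum-monotonic} lifts this to $P_1(x) \oplus_{\RNP} P_2(x) \leq \overline{P_1(x)} \oplus_{\RNP} \overline{P_2(x)}$. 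For achievability I pick $\mathit{rnd}_n = \RNP$ and, by the inductive hypothesis, choose rounding modes for the sub-trees that realize $\overline{P_1(x)}$ and $\overline{P_2(x)}$; these choices are independent because the operation sequences of $P_1$ and $P_2$ are disjoint.

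The multiplication case ($\mathit{flop}_n = \otimes$) follows the same template but requires the sign-robust interval-product bound. By Lemma~\ref{lemma:rnd-prod-bound}, the top-level operation is dominated by $P_1(x) \otimes_{\RNP} P_2(x)$. By Lemma~\ref{lemma:int-prod} applied to the real-valued intervals produced by the inductive hypothesis, $P_1(x) \times P_2(x)$ is bounded above by the maximum of the four corner products, and by Lemma~\ref{lemma:rnd-prod-monotonic} this lifts to one of the four $\otimes_{\RNP}$ corner terms, which is at most the $\max$ in the theorem statement. Achievability follows by choosing $\mathit{rnd}_n = \RNP$ together with, in each sub-tree independently, the rounding-mode sequence (guaranteed by the inductive hypothesis) that realizes whichever endpoint of $[\underline{P_i(x)}, \overline{P_i(x)}]$ participates in the $\max$.

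The main obstacle is the multiplication case with operand intervals that straddle zero: the attaining corner depends on the signs of the endpoints, so one must argue both that the $\max$ of the four FP corner products is a valid upper bound (which Lemmas~\ref{lemma:int-prod} and~\ref{lemma:rnd-prod-monotonic} handle uniformly) and that it is simultaneously realizable. The latter reduces to the independence of the rounding-mode selections in the disjoint sub-computations of $P_1$ and $P_2$, which is directly allowed by Definition~\ref{poly-eval-def} since the $\mathit{rnd}_i$ are not constrained to be equal. A minor care point is the treatment of signed zeros at intermediate steps, which is handled by the $\oplus_{\mathit{rnd}}$ and $\otimes_{\mathit{rnd}}$ definitions in Figure~\ref{fig:background:ops} and by Lemma~\ref{lemma:sign-preservation}, ensuring that the monotonicity lemmas apply without degenerate exceptions.
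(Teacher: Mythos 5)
Your proposal is correct and uses the same machinery as the paper's proof: Lemmas~\ref{lemma:rnd-sum-bound}--\ref{lemma:rnd-prod-monotonic} to pin the extremal rounding mode to \RNP (resp.\ \RNN), and Lemmas~\ref{lemma:int-sum} and~\ref{lemma:int-prod} to pin the extremal operands to the interval endpoints, organized recursively over Definition~\ref{poly-eval-def}; the paper merely phrases the operand step as a contradiction on a hypothetical minimizer rather than as a direct inductive bound. Your explicit achievability half (that the corner value is actually realized by choosing \RNP at the top and the endpoint-realizing modes independently in the two disjoint sub-trees) is a small addition the paper leaves implicit, and is what justifies reading $\overline{P(x)}$ as the exact maximum rather than merely a sound upper bound.
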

\end{small}
\begin{proof}
Given the symmetry between the definitions of $\underline{P(x)}$ and
$\overline{P(x)}$ presented above, we primarily focus on proving
Theorem ~\ref{theorem:lb-def}. The first case in which $P(x)$ is a FP
constant requires no proof as it does not involve any FP operations
and $P(x) = c$ in this case across any rounding mode. We thus move on
to proving the definition for $\underline{P(x)}$ when
$\mathit{flop}_{n}=\oplus$.  Let $\underline{P(x)} = p_{1}^{*}
\oplus_{\mathit{rnd}_{n}^{*}} p_{2}^{*}$ where $p_{1}^{*} \in [
  \underline{P_{1}(x)}, \overline{P_{1}(x)} ]$, $ p_{2}^{*} \in [
  \underline{P_{2}(x)}, \overline{P_{2}(x)} ]$, and
$\mathit{rnd}_{n}^{*} \in \{\mathit{\RNE}, \mathit{\RNZ},
\mathit{\RNN}, \mathit{\RNP} \}$. We prove the correctness of
Theorem~\ref{theorem:lb-def}'s definition for this case by affirming
that $\mathit{rnd}_{n}^{*} = \mathit{\RNN}$ and that this equality
implies $p_{1}^{*} = \underline{P_{1}(x)}$ and $p_{2}^{*} =
\underline{P_{2}(x)}$. Suppose that there exists some
$\mathit{rnd}_{n}^{*} \in \{ \mathit{\RNE}, \mathit{\RNZ},
\mathit{\RNP} \}$ such that $p_{1}^{*} \oplus_{\mathit{rnd}_{n}^{*}}
p_{2}^{*} < p_{1}^{*} \oplus_{\mathit{\RNN}} p_{2}^{*}$. Such an
assumption would directly violate the bounds of
$\oplus_{\mathit{rnd}}$ established in
Lemma~\ref{lemma:rnd-sum-bound}, thereby establishing
$\mathit{rnd}_{n}^{*} = \mathit{\RNN}$. Having established
$\mathit{rnd}_{n}^{*} = \mathit{\RNN}$, we also prove that
$\mathit{rnd}_{n}^{*} = \mathit{\RNN}$ implies $p_{1}^{*} =
\underline{P_{1}(x)}$ and $p_{2}^{*} = \underline{P_{2}(x)}$ via
contradiction. Suppose $\mathit{rnd}_{n}^{*} = \mathit{\RNN}$ and
there exist a $p_{1}^{*}$ and a $p_{2}^{*}$ such that $p_{1}^{*}
\oplus_{\mathit{\RNN}} p_{2}^{*} < \underline{P_{1}(x)}
\oplus_{\mathit{\RNN}} \underline{P_{2}(x)}$. By the rules of interval
addition detailed in Lemma~\ref{lemma:int-sum}, it must be the case
that $\underline{P_{1}(x)} + \underline{P_{2}(x)} \leq p_{1}^{*} +
p_{2}^{*}$. The lower bound of the real arithmetic sum indicates that
the existence of a pair of operands $p_{1}^{*}$ and $p_{2}^{*}$ such
that $p_{1}^{*} \oplus_{\RNN} p_{2}^{*} < \underline{P_{1}(x)}
\oplus_{\mathit{\RNN}} \underline{P_{2}(x)}$ would directly contradict
the monotonic properties of $\oplus_{\mathit{\RNN}}$ detailed in
Lemma~\ref{lemma:rnd-sum-monotonic}. The resulting contradiction
indicates that $\mathit{rnd}_{n}^{*} = \mathit{\RNN}$ implies
$\underline{P(x)} = \underline{P_{1}(x)} \oplus_{\mathit{\RNN}}
\underline{P_{2}(x)}$, thereby concluding our proof for the case in
which $\mathit{flop}_{n} = \oplus$.

Our proof for the case in which $\mathit{flop}_{n} = \otimes$ largely
follows the same structure. Let $\underline{P(x)} = p_{1}^{*}
\otimes_{\mathit{rnd}_{n}^{*}} p_{2}^{*}$ where $p_{1}^{*} \in [
  \underline{P_{1}(x)}, \overline{P_{1}(x)} ]$, $ p_{2}^{*} \in [
  \underline{P_{2}(x)}, \overline{P_{2}(x)} ]$, and
$\mathit{rnd}_{n}^{*} \in \{ \mathit{\RNE}, \mathit{\RNZ},
\mathit{\RNN}, \mathit{\RNP} \}$. Similar to how we use
Lemma~\ref{lemma:rnd-sum-bound} to show that $\underline{P(x)}$
involves $\oplus_{\mathit{\RNN}}$ when $\mathit{flop}_{n} = \oplus$,
one can leverage $\otimes_{\mathit{\RNN}}$'s role in defining the
lower bound of FP multiplication detailed in
Lemma~\ref{lemma:rnd-prod-bound} to conclude that
$\mathit{rnd}_{n}^{*} = \mathit{\RNN}$ when $\mathit{flop}_{n} =
\otimes$. We thus focus on establishing that there cannot exist any
pair of operands $(p_{1}^{*}, p_{2}^{*})$ outside the Cartesian
product $\{ \underline{P_{1}(x)}, \overline{P_{1}(x)} \} \times \{
\underline{P_{2}(x)}, \overline{P_{2}(x)} \}$ such that $p_{1}^{*}
\otimes_{\mathit{\RNN}} p_{2}^{*} < min(\underline{P_{1}(x)}
\otimes_{\mathit{\RNN}} \underline{P_{2}(x)}, \underline{P_{1}(x)}
\otimes_{\mathit{\RNN}} \overline{P_{2}(x)}, \overline{P_{1}(x)}
\otimes_{\mathit{\RNN}} \underline{P_{2}(x)}, \overline{P_{1}(x)}
\otimes_{\mathit{\RNN}} \overline{P_{2}(x)})$. Henceforth, we refer to
the right hand side of $\underline{P(x)}$'s definition for the
$\mathit{flop}_{n} = \otimes$ case as
$\mathit{fp}\_\mathit{prod}_{\mathit{min}}$. One can deduce from the
rules of interval multiplication detailed in
Lemma~\ref{lemma:int-prod} that for any $p_{1} \in
[\underline{P_{1}(x)}, \overline{P_{1}(x)}]$ and $p_{2} \in
[\underline{P_{2}(x)}, \overline{P_{2}(x)}]$, the real arithmetic
product $p_{1} \times p_{2}$ is equal to or greater than
$\min(\underline{P_{1}(x)} \times \underline{P_{2}(x)},
\underline{P_{1}(x)} \times \overline{P_{2}(x)}, \overline{P_{1}(x)}
\times \underline{P_{2}(x)}, \overline{P_{1}(x)} \times
\overline{P_{2}(x)})$, which we subsequently refer to as
$\mathit{real}\_\mathit{prod}_{min}$. Let $p_{1,\mathit{real}}^{*}$
and $p_{2,\mathit{real}}^{*}$ be two FP numbers such that $(p_{1,
  \mathit{real}}^{*} , p_{2, \mathit{real}}^{*}) \in \{
\underline{P_{1}(x)}, \overline{P_{1}(x)}\} \times \{
\underline{P_{2}(x)}, \overline{P_{2}(x)} \}$ and
$p_{1,\mathit{real}}^{*} \times p_{2,\mathit{real}}^{*} =
\mathit{real}\_\mathit{prod}_{\mathit{min}}$. Given that $p_{1,
  \mathit{real}}^{*} \times p_{2, \mathit{real}}^{*} \leq p_{1} \times
p_{2}$ for any $p_{1} \in [ \underline{P_{1}(x)},
  \overline{P_{1}(x)}]$ and $p_{2} \in [ \underline{P_{2}(x)},
  \overline{P_{2}(x)}]$, the monotonic properties of $\otimes_{\RNN}$
detailed in Lemma~\ref{lemma:rnd-prod-monotonic} indicates that the
bound $p_{1, \mathit{real}}^{*} \otimes_{\mathit{\RNN}} p_{2,
  \mathit{real}}^{*} \leq p_{1} \otimes_{\mathit{\RNN}} p_{2}$ must
hold for all pairs of $p_{1}$ and $p_{2}$, including those outside the
aforementioned Cartesian product. This concludes our proof that no
pair of operands $(p_{1}^{*}, p_{2}^{*}) \notin \{
\underline{P_{1}(x)}, \overline{P_{1}(x)} \} \times \{
\underline{P_{2}(x)}, \overline{P_{2}(x)} \}$ can lead to an FP
product less than $\mathit{fp}\_\mathit{prod}_{min}$ when
$\mathit{rnd}_{n}^{*} = \mathit{\RNN}$ as well as our proof that
$\underline{P(x)} = \mathit{fp}\_\mathit{prod}_{\mathit{min}}$ when
$\mathit{flop}_{n} = \otimes$.

The proof for Theorem~\ref{theorem:ub-def} largely mirrors that of
Theorem~\ref{theorem:lb-def} due to the symmetry between them.  Akin
to the manner in which Lemmas~\ref{lemma:rnd-sum-bound}
and~\ref{lemma:rnd-prod-bound} were applied to justify
$\oplus_{\mathit{\RNN}}$'s and $\otimes_{\mathit{\RNN}}$'s roles in
defining $\underline{P(x)}$, the same lemmas can be applied to
validate the use of $\oplus_{\mathit{\RNP}}$ and
$\otimes_{\mathit{\RNP}}$ in defining $\overline{P(x)}$. As is the
case with Theorem~\ref{theorem:lb-def}, one can apply the interval
addition and multiplication rules detailed in
Lemmas~\ref{lemma:int-sum} and~\ref{lemma:int-prod} to justify the
operands associated with $\oplus_{\mathit{\RNP}}$ and
$\otimes_{\mathit{\RNP}}$ in Theorem~\ref{theorem:ub-def}'s definition
for the non-trivial cases.
\end{proof}

The polynomial generator from the \rlibm pipeline now has to solve the
stricter constraints resulting from Theorems~\ref{theorem:lb-def}
and~\ref{theorem:ub-def} to ensure correctness while incurring no
overheads in the final
implementations. Figure~\ref{fig:input_bounds_example} provides an
example of the steps taken by our polynomial generator to compute the
lower and upper bounds of each FP operation's result under various
rounding modes. It uses the \RNN mode to compute the lower bound and
then uses the \RNP mode to compute the upper bound based on the
specifications in Theorems~\ref{theorem:lb-def}
and~\ref{theorem:ub-def}.
%
%

\textbf{Accounting for rounding induced variability during reduced
  interval generation.}
\label{subsec:oc}
The polynomials produced by \rlibm's generators are approximations
over the variable $x'$ drawn from a reduced version of the target
function's original domain (\ie, its reduced range). Consequently,
each polynomial requires a set of operations that form the output
compensation function, which maps the polynomial outputs to the
original target range.  Within the \rlibm pipeline, the reduced
interval generator accounts for the effects of evaluating an output
compensation function via FP arithmetic, the result of which we denote
$OC(y',x)$ for any given $y'$. The goal of the reduced interval
generator is to find for each input $x$'s reduced input $x'$ the
maximal interval $[l',h']$ such that $\forall{y'}\in [l', h'], l \leq
\mathit{OC}(y',x) \leq h$. The values $l$ and $h$ denote the bounds of
$x$'s target rounding interval. By using the resultant $[l', h']$ as a
constraint for the polynomial generator, the \rlibm pipeline ensures
for a given reduced input $x'$ that a polynomial with an output
$P(x')$ that satisfies $l' \leq P(x') \leq h'$ also satisfies $l \leq
\mathit{OC}(P(x'),x) \leq h$. By satisfying the constraint $l \leq
\mathit{OC}(P(x'),x) \leq h$, $\mathit{OC}(P(x'),x)$ is guaranteed to correctly
round to the oracle result for the original input $x$.

The reduced interval generators in the original \rlibm pipeline
evaluate output compensation functions with the rounding mode set to
$\mathit{\RNE}$, thereby only accounting for the effects of rounding
under a specific mode. We note that the final post-output compensation
result $\mathit{OC}(y', x)$ for a specific instance of $y'$ can end up as a
different number within the range $[\underline{\mathit{OC}(y',x)},
  \overline{\mathit{OC}(y',x)}]$ depending on the rounding rule applied to each
component FP operation. As such, the goal of the interval generator
under the rounding-invariant input bounds approach is to find for each
$x'$ the maximal $[l', h']$ such that $\forall{y'} \in [l', h'], l
\leq \min_{y' \in [l', h']} \underline{\mathit{OC}(y',x)} \leq \max_{y' \in
  [l', h']} \overline{\mathit{OC}(y',x)} \leq h$. We note that every output
compensation function used by \rlibm can be expressed using only
addition and multiplication. In essence, the output compensation
functions with respect to the reduced inputs $x'$ are polynomial
compositions of the form $\hat{y}=Q(P(x'))$ evaluated with FP
additions and multiplications. The reduced interval generator can
therefore compute $min_{y' \in [l', h']} \underline{\mathit{OC}(y',x)}$ and
$\max_{y' \in [l', h']} \overline{\mathit{OC}(y',x)}$ by directly applying
Theorems \ref{theorem:lb-def} and \ref{theorem:ub-def}. By producing
constraints for the polynomial generator in this manner, the reduced
interval generator under the rounding-invariant input bounds approach
can validate the following statement: for all $P(x')$ such that $l'
\leq \underline{P(x')} \leq \overline{P(x')} \leq h'$, $l \leq
\min_{P(x') \in [\underline{P(x')},
    \overline{P(x')}]}\underline{OC(P(x'),x)} \leq \max_{P(x') \in
  [\underline{P(x')}, \overline{P(x')}]}\overline{OC(P(x'),x)} \leq h$
holds by default.
In summary, the reduced interval generator for the
rounding-invariant input bounds approach applies
Theorems~\ref{theorem:lb-def} and ~\ref{theorem:ub-def} to ensure that
an output compensation function can map polynomial outputs to their
final target rounding intervals under any rounding mode without
requiring rounding mode adjustments in the final implementations.

\textbf{Correctness under all faithful rounding modes.}
As previously established, $\underline{P(x)}$ is the smallest possible
value of $P(x)$ when any of the four rounding modes of concern can be
applied to any of the intermediate operations. $\overline{P(x)}$ is
the upper bound counterpart. The definitions of $\underline{P(x)}$ and
$\overline{P(x)}$ are founded on Lemma~\ref{lemma:rnd-bound}, which
establishes $\mathit{\RNN}$'s and $\mathit{\RNP}$'s roles as the lower
and upper bounds of the rounding modes considered. One must note that
their roles as bounds are not limited to the set $\{\mathit{\RNE},
\mathit{\RNZ}, \mathit{\RNN}, \mathit{\RNP}\}$ as
Lemma~\ref{lemma:rnd-bound} pertains to all faithful rounding
modes. Naturally, $\oplus_{\mathit{\RNN}}$ and $\oplus_{\mathit{\RNP}}$ (or
$\otimes_{\mathit{\RNN}}$ and $\otimes_{\mathit{\RNP}}$) each define the lower and upper
bounds of any faithful FP addition (or multiplication) operation given
the same pair of operands. We note that the range $[\underline{P(x)},
  \overline{P(x)}]$ subsumes every possible assignment of either
$\oplus_{\mathit{\RNN}}$, $\oplus_{\mathit{\RNP}}$, $\otimes_{\mathit{\RNN}}$, or
$\otimes_{\mathit{\RNP}}$ to each FP operation composing the evaluation of
$P(x)$.
As such, $\underline{P(x')}$ and $\overline{P(x')}$ are the bounds for
all the possible values of $P(x')$ when the $\mathit{rnd}_{i}$ applied
to each operation could be \textbf{ANY} faithful rounding mode
(i.e. round-away-from-zero, round-to-odd, etc). By combining the
polynomials generated under this new approach with range reduction and
output compensation algorithms that have rounding-invariant
correctness guarantees, we can create implementations that can produce
correct results under any faithful rounding mode~(without the overhead
of additional instructions required by our round-to-zero emulation
methods).
In essence, the new implementations built using the rounding-invariant
input bounds approach by default provide correctness guarantees for
applications using non-standard faithful rounding modes.

\section{Experimental Evaluation}
We report the results from experimentally evaluating the two proposed
methods with respect to correctness and performance relative to the
original \rlibm prototypes. 

\textbf{Prototype.}
We used the publicly available code from the \rlibm
project~\cite{rlibm-project} and built the two proposed methods. To
build the prototype that applies the rounding-invariant outputs
method's round-to-zero emulation, we changed the default rounding mode
for the \rlibm project's generators to the round-to-zero (\RNZ)
mode. To build the prototype for the rounding-invariant input bounds
approach, we made multiple changes: (1) changed the rounding-sensitive
range reduction operations to produce $\mathit{\RNZ}$ results, (2)
rewrote the reduced interval generation process to deduce new
intervals that account for rounding-induced variability in the output
compensation functions, and (3) updated the polynomial generator to
evaluate polynomials using interval bounds and generate polynomials
that satisfy stricter correctness constraints.
Our prototype uses the MPFR library~\cite{Fousse:toms:2007:mpfr} and
\rlibm's algorithm to compute the Oracle 34-bit round-to-odd (\RNO)
result for each input~\cite{lim:rlibmall:popl:2022}. Our prototype's
polynomial generator uses an exact rational arithmetic LP solver,
SoPlex, and \rlibm's publicly available randomized LP
solver~\cite{aanjaneya:rlibm-prog:pldi:2022,aanjaneya:hull-rlibm:pldi:2024}
to solve the constraints.
Using these methods, we have developed a new math library that has a
collection of twenty-four new implementations targeting twelve
elementary functions ($sin$, $sinh$, $sinpi$, $cos$, $cosh$, $cospi$,
$log$, $log2$, $log10$, $exp$, $exp2$, and $exp10$). They are designed
to directly produce correctly rounded results for all representations
with up to 32-bits with respect to all four standard rounding modes,
irrespective of the application's rounding mode. Our prototype is open
source, and the artifact is publicly
available~\cite{park:multiround-artifact:pldi:2025}.

\textbf{Methodology.} To evaluate correctness, we run the new
implementations on every input from each target representation
(10-bits to 32-bits). For every input, we run a given implementation
under all four rounding modes and compare the outputs rounded to the
target representation against the Oracle result generated using the
MPFR library. As a specific example, testing correctness for the
32-bit representation involves evaluating the results for each of the
4 billion inputs under all four rounding modes. To test the
prototype's ability to produce correct results while maintaining the
application-level rounding mode, we call \texttt{fesetround} with each
rounding mode before invoking the implementation being tested. To
compare the performance of the implementations built through the
different approaches, we use \texttt{rdtsc} to count the number of
cycles taken to compute the result for each 32-bit input. We aggregate
these counts to compute the total time taken by a given elementary
function implementation to produce outputs for the entire 32-bit FP
domain. We compile the test harnesses with the \texttt{-march=native
  -frounding-math -fsignaling-nans} flags. We performed the
experiments on Ubuntu 24.04 and an AMD EPYC 7313P CPU with a 3.0 GHz
base frequency.

%
\begin{wraptable}{r}{6.5cm}
\begin{small}  
    \begin{tabular}{|l|c|c|c|c|}
     \hline
      $f(x)$ & Ours & $\rlibm^{*}$ & Core-Math & glibc \\  
     \hline
     $ln(x)$ & \cmark & \xmark & \xmark & \xmark \\
     \hline
     $log_{2}(x)$ & \cmark & \xmark & \xmark & \xmark \\
     \hline
     $log_{10}(x)$ & \cmark & \xmark & \xmark & \xmark \\
     \hline
     $e^{x}$ & \cmark & \xmark & \xmark & \xmark \\
     \hline
     $2^{x}$ & \cmark & \xmark & \xmark & \xmark \\
     \hline
     $10^{x}$ & \cmark & \xmark & \xmark & \xmark \\
     \hline
     $sin(x)$ & \cmark & \xmark & \xmark & \xmark \\
     \hline
     $cos(x)$ & \cmark & \xmark & \xmark & \xmark \\
     \hline
     $sinh(x)$ & \cmark & \xmark & \xmark & \xmark \\
     \hline
     $cosh(x)$ & \cmark & \xmark & \xmark & \xmark \\
     \hline
     $sinpi(x)$ & \cmark & \xmark & \xmark & \NA \\
     \hline
     $cospi(x)$ & \cmark & \xmark & \xmark & \NA \\
     \hline
    \end{tabular}
    \caption{\small The table lists whether the libraries generate
      correctly rounded results for all inputs in FP representations
      with 10 to 32-bits under each of the four standard rounding
      modes. We compare the functions from our new library (both
      methods), \rlibm's library without rounding mode changes
      ($\rlibm^{*}$), Core-Math's 32-bit float libm, and glibc's
      64-bit double libm. We use \cmark \ to indicate that a function
      produces correct results for all inputs across all
      representations considered under all standard rounding modes. We
      use \xmark \ otherwise. \NA \ indicates that a function is not
      implemented in the tested library.}
    \label{tab:correctness_eval}
\end{small}
\end{wraptable}

\textbf{Ability to produce correctly rounded results with multiple
  rounding modes.} The 24 new functions built using our
rounding-invariant outputs and rounding-invariant input bounds
approaches produce correctly rounded results for all inputs in the
domain of every target representation with up to 32-bits across all
rounding modes $\mathit{rnd} \in
\{\mathit{\RNE},\mathit{\RNZ},\mathit{\RNN},\mathit{\RNP}\}$. Furthermore,
the new implementations directly produce correct results under the
application-level rounding mode without requiring calls to
\texttt{fesetround}. Table~\ref{tab:correctness_eval} reports the
ability of the libraries to produce correctly rounded results while
using the application's rounding mode.
For this evaluation, we disabled the rounding modes changes performed
by \rlibm's default library with \texttt{fesetround}, which is
indicated by $\rlibm^{*}$ in Table~\ref{tab:correctness_eval}. When we
do not change the rounding mode, $\rlibm^{*}$ produces incorrect
outputs for approximately 100 inputs. These incorrect outputs are
primarily due to rounding mode induced variability.
The \rlibm prototype with rounding mode changes to \RNE before each
call produces correctly rounded results for all inputs similar to our
methods but incurs performance overheads. CORE-MATH fails to produce
correctly rounded results for representations smaller than 32-bits due
to double rounding issues.  The math libraries for double-precision
from \texttt{glibc} do not produce correctly rounded results for any
representation. CORE-MATH and \texttt{glibc}'s double libm produce
incorrect (\ie, not a correctly rounded result) results for more than
200 million inputs for the \texttt{sin} function for a 31-bit
representation because of double rounding errors.

\textbf{Improved performance of generated math libraries.}
Figure~\ref{fig:performance_eval} reports the speedup (\ie, the ratio
of the execution times) of the implementations produced through our
rounding-invariant input bounds and rounding-invariant outputs
approaches over the default \rlibm functions (\ie, two bars for each
function). The default \rlibm implementations contain the latency of
the two separate calls to \texttt{fesetround}, one used to switch from
the application-level rounding mode to \RNE and one used to revert
back to the original rounding mode.
For all twelve functions, the implementation built using the
rounding-invariant input bounds approach exhibits the best performance
improvement.
On average, functions built using our rounding-invariant inputs method
are $2.3\times$ faster than the \rlibm functions.
Our rounding-invariant input bounds method produces implementations
that do not need any changes to the rounding mode while performing
similar amount of useful computation as the default \rlibm
implementations, which is the primary reason for performance
improvement. This significant performance improvement also highlights
the degree to which calls to \texttt{fesetround} degrade the
performance of the existing implementations.

In contrast to the rounding-invariant input bounds method, our
rounding-invariants outputs method that does round-to-zero emulation
is on average only $1.6\times$ faster than the original \rlibm
functions.
While the performance gains of the rounding-invariant
input bounds implementations are generally even across all the
functions considered, the results associated with the
rounding-invariant outputs method can vary significantly.
The cost of round-to-zero emulation depends on the number of additions
and multiplications in the final implementation, which depends on the
degree and the number of terms of the polynomial and can vary across
functions. The custom $\mathit{\RNZ}$ operations are alternatives to
the basic FP addition and multiplication operations and are primarily
used within the polynomial evaluation and output compensation portions
of the implementations. Each elementary function has a subset of
inputs for which the outputs can be directly approximated with a
constant, which obviates range reduction, polynomial evaluation, and
output compensation. The number of inputs that undergo polynomial
evaluation and output compensation varies widely between the different
functions. Consequently, the number of inputs subject to the overheads
introduced by the custom $\mathit{\RNZ}$ operations also varies
widely.

\begin{figure}
  \includegraphics[width=0.99\linewidth]{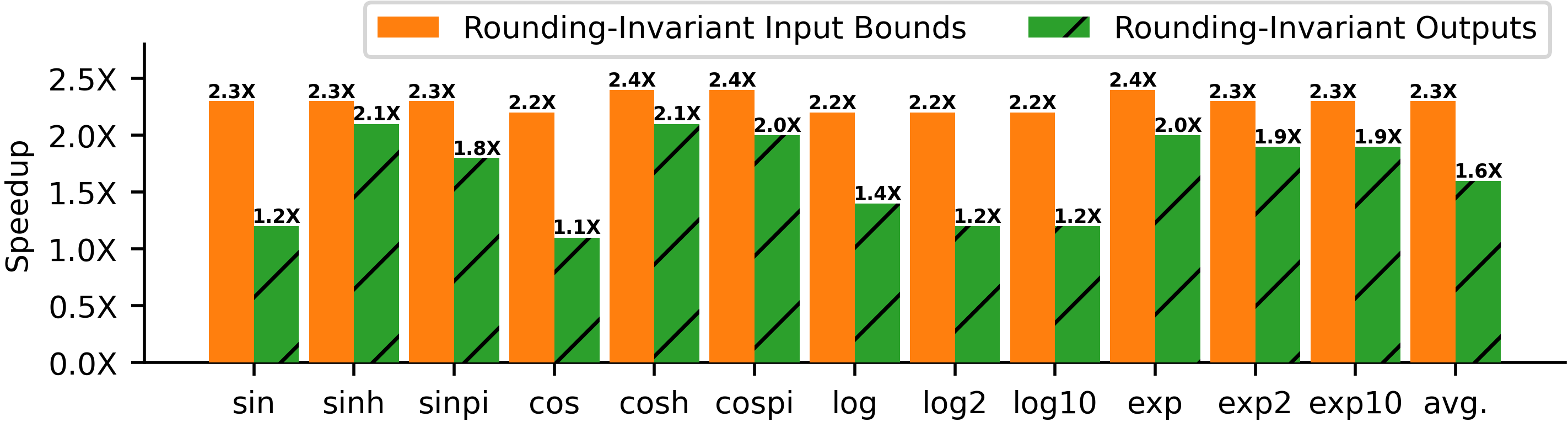}
  \caption{\small Speedup (\ie, ratio of the execution times) achieved
    by the implementations produced using our two approaches,
    rounding-invariant input bounds and rounding-invariant outputs,
    when compared to the original \rlibm counterparts.}
    \label{fig:performance_eval}    
\end{figure}

\textbf{Rounding mode changes with hardware instructions.}
The cost of changing the rounding mode depends on the architecture and
the implementation of the floating-point environment. While reporting
the performance improvements above, we use the \texttt{fesetround}
function because that is the most portable way to use the
implementations across architectures. On x86-64, the fesetround
function does three things: checks that the rounding mode is valid,
sets the rounding mode for the x87 environment, and sets the rounding
mode for the SSE instructions. Each call has an overall latency of 40
cycles on average.
If we can assume that the SSE rounding mode and the FP \texttt{x87}
rounding mode can be out of sync in the FP environment (\ie, any
interim fegetround call will be wrong) and the rounding mode is valid,
we can set the rounding mode by setting the \texttt{mxcsr} registers
in the \texttt{x86-64} ISA. Here is the snippet of the assembly code
that we used to change the rounding mode with just hardware
\texttt{x86-64} instructions.

\begin{verbatim}
unsigned int xcw;
__asm__("stmxcsr %0": "=m" (*&xcw);
xcw &= ~0x6000;
xcw |= round << 3;
__asm__("ldmxcsr %0": : "m" (*&xcw);
\end{verbatim}

The average total latency for changing the rounding mode with the
above snippet is 15 cycles. Our final implementations from the
rounding-invariant inputs bounds method are $1.4\times$ faster than
the original \rlibm's implementations using hardware assembly
instructions for changing the rounding mode.

In summary, the new functions generated using our two methods are not
only faster on average than their \rlibm counterparts but can also
produce correctly rounded results directly with the rounding modes
used by the applications.

\section{Related Work}

\textbf{Approximating elementary functions.} There is a long line of
work on approximating elementary functions for FP
representations~\cite{Abraham:fastcorrect:toms:1991,
  Daramy:crlibm:spie:2003,Fousse:toms:2007:mpfr,Muller:elemfunc:book:2016,Trefethen:chebyshev:book:2012,
  Remes:algorithm:1934,Dinechin:gappaverify:sac:2006,
  Dinechin:verify:tc:2011, Daumas:proofs:arith:2005,
  briggs:megalibm:popl:2024}. Range reduction is a key component of
this
task~\cite{Tang:log:toms:1990,Tang:TableLookup:SCA:1991,Tang:exp:toms:1989,762822,
  Cody:book:1980, Boldo:reduction:toc:2009}. Subsequently, the task is
to produce a polynomial approximation over the reduced domain.  A
well-known and popular tool is
Sollya~\cite{Chevillard:sollya:icms:2010}.
Using a modified Remez algorithm~\cite{Brisebarre:epl:arith:2007},
Sollya can generate polynomials with FP coefficients that minimize the
infinity norm. There is also subsequent work to compute and prove the
error bound on polynomial evaluation using interval
arithmetic~\cite{Chevillard:infnorm:qsic:2007,
  Chevillard:ub:tcs:2011}. Sollya is an effective tool for creating
polynomial approximations using the minimax method.
There have been efforts to prove bounds on the results of math
libraries~\cite{harrison:hollight:tphols:2009,
  Harrison:expproof:amst:1997, Harrison:verifywithHOL:tphol:1997,
  Sawada:verify:acl:2002, Lee:verify:popl:2018}. Recent efforts have
focused on repairing individual outputs of math
libraries~\cite{Xin:repairmlib:popl:2019,Daming:fpe:popl:2020}. Muller's
seminal book on elementary functions is an authoritative source on
this topic~\cite{Muller:elemfunc:book:2016}.

\textbf{Correctly rounded libraries.}
CR-LIBM~\cite{Daramy:crlibm:spie:2003} provides implementations of
functions that produce correctly rounded results for double-precision
for a single rounding mode.
CR-LIBM provides four distinct implementations, one
  corresponding to each rounding mode.
Further, when CR-LIBM's results are double rounded to target
representations, it can produce wrong results due to double rounding
errors~\cite{lim:rlibmall:popl:2022}. The CORE-MATH
project~\cite{sibidanov:core-math:arith:2022} is also building a
collection of correctly rounded elementary functions. It uses the
worst-case inputs needed for correct rounding and uses the error bound
required for those inputs while generating a minimax polynomial with
Sollya. However, they produce correctly rounded results for a specific
representation.

\textbf{Comparison to our prior work on \rlibm.}  This paper builds
upon our prior work in the \rlibm
project~\cite{Aanjaneya:2023:FPE,lim:rlibmall:popl:2022,lim:rlibm:popl:2021,lim:rlibm32:pldi:2021,aanjaneya:rlibm-prog:pldi:2022},
which approximates the correctly rounded result using an LP
formulation. We use \rlibm's method for generating oracles, \rlibm's
randomized LP algorithm for full rank systems, and its fast polynomial
generation. We use the \rlibm project's idea of approximating the
correctly rounded result for a 34-bit representation with the
round-to-odd mode to generate correctly rounded results for all inputs
with multiple representations and rounding modes with a single
polynomial approximation.  We enhance the \rlibm approach and the
entire pipeline to avoid the rounding mode changes necessary due to
the reliance on the round-to-nearest-mode as its implementation
rounding mode, which improves the performance by $2\times$.

An alternative to our approach is to develop four different
implementations using the \rlibm method where each implementation is
tailored to a specific rounding mode similar to CR-LIBM's
implementations.  It just requires checking the rounding mode and
choosing the correct implementation. In our interactions with various
math library developers, they were concerned about the code bloat and
concomitant software maintenance issues. Our collaborators at Intel,
who were interested in the \rlibm project, also favored a single
implementation that is usable as a reference library.
If the range reduction and output compensation methods are
rounding-mode oblivious, then it is possible to generate different
polynomial coefficients for each rounding mode rather than distinct
implementations. However, the range reduction and output compensation
functions used in the \rlibm project are not rounding-mode
oblivious~\cite{aanjaneya:rlibm-prog:arxiv:2021, Lim:rlibm:arxiv:2020,
  Lim:rlibm32:arxiv:2021, Lim:rlibmAll:arxiv:2021,
  park:trig:vss:2025}.  Hence, we use the rounding-invariant outputs
method to make range reduction rounding mode oblivious. Subsequently,
we use either the rounding-invariant outputs method or the
rounding-invariant input bounds method to address rounding-mode
induced variability in output compensation.
Further, our algorithms to produce the round-to-zero result from any
rounding mode can be independently useful in many applications.

\textbf{Accounting for error.}  We compute the error in an FP
operation to emulate the round-to-zero result using error-free
transformations, which have been used for compensated
summation~\cite{kahan:pracniques:cacm:1965,
  rump:AccurateSummation:2009}, compensated Horner
Scheme~\cite{Langlois:Horner:2005}, and robust geometric
algorithms~\cite{shewchuk:robust:1996}. The idea of Fast2Sum was used
in accurate summation by Kahan~\cite{kahan:pracniques:cacm:1965} and
Dekker~\cite{dekker:fast2sum:1971}. Fast2Sum is fast and requires
three FP operations and a branch instruction. Subsequently, it has
been modified to remove the branch with TwoSum~\cite{Knuth:1998:SA}.
Boldo~\etal~\cite{boldo:fast2sum:2017} have analyzed Fast2Sum and
TwoSum with respect to overflows and underflows. They have shown that
Fast2Sum is almost immune to overflow. The design of error-free
transformations for other rounding modes has been explored in a
dissertation by Priest~\cite{priest:thesis:1992}.
Dekker introduced an algorithm to identify the error in a
multiplication operation based on Veltkamp
splitting~\cite{muller:fp:2018}. Fused-multiply-add (FMA) instructions
make the computation of errors easier with just two
instructions~\cite{muller:fp:2018}.  This paper builds on these
results to design new decision procedures to produce the round-to-zero
result given an addition and a multiplication operation performed in
any rounding mode.

These error-free transformations (EFTs) have also been used recently
for debugging numerical code~\cite{demeure:shaman:thesis:2020,
  chowdhary:eftsanitizer:oopsla:2022}. Shaman~\cite{demeure:shaman:thesis:2020}
uses EFTs as an oracle and implements a C++ library using operator
overloading.  EFTSanitizer~\cite{chowdhary:eftsanitizer:oopsla:2022}
uses compile-time instrumentation to add these error-free
transformations for primitive operations and computes the rounding
error. It uses the MPFR library to measure the error in a call to an
elementary function. These tools are primarily focused on the
round-to-nearest rounding mode and did not explore other rounding
modes.

\section{Conclusion}
This paper proposes two methods, rounding-invariant outputs and
rounding-invariant input bounds, to design a single implementation
that produces correctly rounded results for all inputs with multiple
representations and rounding modes while using the application's
rounding mode. The key idea in the rounding-invariant outputs method
is to emulate the round-to-zero result for all rounding modes by
augmenting each FP addition and multiplication. We design new
algorithms to produce the round-to-zero result and provide their
associated correctness proofs. Through the rounding-invariant input
bounds method, we deduce the bounds on the output of a sequence of FP
operations to account for variability induced by rounding modes and
augment the \rlibm pipeline to incorporate these bounds.
Our math library can serve as a fast reference library for multiple
representations with up to 32-bits because it makes double rounding
innocuous. It is a step in our effort to make correct rounding
mandatory in the next version of the IEEE-754 standard and enhance the
portability of applications using such libraries.  In the future, we
want to explore extending this approach to develop correctly rounded
math libraries for the GPU ecosystem, the 64-bit representation, and
other extended precision representations.

\begin{acks}                            
  We thank the PLDI 2025 reviewers, Bill Zorn, and members of the
  Rutgers Architecture and Programming Languages (RAPL) lab for their
  feedback on this paper.  This material is based upon work supported
  in part by the research gifts from the Intel corporation and the
  \grantsponsor{GS100000001}{National Science
    Foundation}{http://dx.doi.org/10.13039/100000001} with grants:
  \grantnum{GS100000001}{2110861} and \grantnum{GS100000001}{2312220}.
  Any opinions, findings, and conclusions or recommendations expressed
  in this material are those of the authors and do not necessarily
  reflect the views of the Intel corporation or the National Science
  Foundation.
\end{acks}

\bibliography{references.bib}


\begin{thebibliography}{59}


\ifx \showCODEN    \undefined \def \showCODEN     #1{\unskip}     \fi
\ifx \showISBNx    \undefined \def \showISBNx     #1{\unskip}     \fi
\ifx \showISBNxiii \undefined \def \showISBNxiii  #1{\unskip}     \fi
\ifx \showISSN     \undefined \def \showISSN      #1{\unskip}     \fi
\ifx \showLCCN     \undefined \def \showLCCN      #1{\unskip}     \fi
\ifx \shownote     \undefined \def \shownote      #1{#1}          \fi
\ifx \showarticletitle \undefined \def \showarticletitle #1{#1}   \fi
\ifx \showURL      \undefined \def \showURL       {\relax}        \fi
\providecommand\bibfield[2]{#2}
\providecommand\bibinfo[2]{#2}
\providecommand\natexlab[1]{#1}
\providecommand\showeprint[2][]{arXiv:#2}

\bibitem[Aanjaneya et~al\mbox{.}(2021)]%
        {aanjaneya:rlibm-prog:arxiv:2021}
\bibfield{author}{\bibinfo{person}{Mridul Aanjaneya}, \bibinfo{person}{Jay~P.
  Lim}, {and} \bibinfo{person}{Santosh Nagarakatte}.}
  \bibinfo{year}{2021}\natexlab{}.
\newblock \bibinfo{title}{RLIBM-Prog: Progressive Polynomial Approximations for
  Correctly Rounded Math Libraries}.
\newblock
\showeprint{2111.12852}
\newblock
\shownote{Rutgers Department of Computer Science Technical Report DCS-TR-758}.


\bibitem[Aanjaneya et~al\mbox{.}(2022)]%
        {aanjaneya:rlibm-prog:pldi:2022}
\bibfield{author}{\bibinfo{person}{Mridul Aanjaneya}, \bibinfo{person}{Jay~P.
  Lim}, {and} \bibinfo{person}{Santosh Nagarakatte}.}
  \bibinfo{year}{2022}\natexlab{}.
\newblock \showarticletitle{Progressive Polynomial Approximations for Fast
  Correctly Rounded Math Libraries}. In \bibinfo{booktitle}{\emph{Proceedings
  of the 43rd ACM SIGPLAN International Conference on Programming Language
  Design and Implementation}} (San Diego, CA, USA) \emph{(\bibinfo{series}{PLDI
  2022})}. \bibinfo{publisher}{Association for Computing Machinery},
  \bibinfo{address}{New York, NY, USA}, \bibinfo{pages}{552–565}.
\newblock
\showISBNx{9781450392655}
\href{https://doi.org/10.1145/3519939.3523447}{doi:\nolinkurl{10.1145/3519939.3523447}}


\bibitem[Aanjaneya and Nagarakatte(2023)]%
        {Aanjaneya:2023:FPE}
\bibfield{author}{\bibinfo{person}{Mridul Aanjaneya} {and}
  \bibinfo{person}{Santosh Nagarakatte}.} \bibinfo{year}{2023}\natexlab{}.
\newblock \showarticletitle{Fast Polynomial Evaluation for Correctly Rounded
  Elementary Functions Using the RLIBM Approach}. In
  \bibinfo{booktitle}{\emph{Proceedings of the 21st ACM/IEEE International
  Symposium on Code Generation and Optimization}} (Montr\'{e}al, QC, Canada)
  \emph{(\bibinfo{series}{CGO 2023})}. \bibinfo{publisher}{Association for
  Computing Machinery}, \bibinfo{address}{New York, NY, USA},
  \bibinfo{pages}{95–107}.
\newblock
\showISBNx{9798400701016}
\href{https://doi.org/10.1145/3579990.3580022}{doi:\nolinkurl{10.1145/3579990.3580022}}


\bibitem[Aanjaneya and Nagarakatte(2024)]%
        {aanjaneya:hull-rlibm:pldi:2024}
\bibfield{author}{\bibinfo{person}{Mridul Aanjaneya} {and}
  \bibinfo{person}{Santosh Nagarakatte}.} \bibinfo{year}{2024}\natexlab{}.
\newblock \showarticletitle{Maximum Consensus Floating Point Solutions for
  Infeasible Low-Dimensional Linear Programs with Convex Hull as the
  Intermediate Representation}.
\newblock \bibinfo{journal}{\emph{Proc. ACM Program. Lang.}}
  \bibinfo{volume}{8}, \bibinfo{number}{PLDI}, Article \bibinfo{articleno}{197}
  (\bibinfo{date}{Jun} \bibinfo{year}{2024}), \bibinfo{numpages}{25}~pages.
\newblock
\href{https://doi.org/10.1145/3656427}{doi:\nolinkurl{10.1145/3656427}}


\bibitem[{Boldo} et~al\mbox{.}(2009)]%
        {Boldo:reduction:toc:2009}
\bibfield{author}{\bibinfo{person}{Sylvie {Boldo}}, \bibinfo{person}{Marc
  {Daumas}}, {and} \bibinfo{person}{Ren-Cang {Li}}.}
  \bibinfo{year}{2009}\natexlab{}.
\newblock \showarticletitle{Formally Verified Argument Reduction with a Fused
  Multiply-Add}. In \bibinfo{booktitle}{\emph{IEEE Transactions on Computers}},
  Vol.~\bibinfo{volume}{58}. \bibinfo{pages}{1139--1145}.
\newblock
\href{https://doi.org/10.1109/TC.2008.216}{doi:\nolinkurl{10.1109/TC.2008.216}}


\bibitem[Boldo et~al\mbox{.}(2017)]%
        {boldo:fast2sum:2017}
\bibfield{author}{\bibinfo{person}{Sylvie Boldo}, \bibinfo{person}{Stef
  Graillat}, {and} \bibinfo{person}{Jean-Michel Muller}.}
  \bibinfo{year}{2017}\natexlab{}.
\newblock \showarticletitle{On the Robustness of the 2Sum and Fast2Sum
  Algorithms}.
\newblock \bibinfo{journal}{\emph{ACM Trans. Math. Softw.}}
  \bibinfo{volume}{44}, \bibinfo{number}{1}, Article \bibinfo{articleno}{4}
  (\bibinfo{date}{Jul.} \bibinfo{year}{2017}), \bibinfo{numpages}{14}~pages.
\newblock
\showISSN{0098-3500}
\href{https://doi.org/10.1145/3054947}{doi:\nolinkurl{10.1145/3054947}}


\bibitem[Briggs et~al\mbox{.}(2024)]%
        {briggs:megalibm:popl:2024}
\bibfield{author}{\bibinfo{person}{Ian Briggs}, \bibinfo{person}{Yash Lad},
  {and} \bibinfo{person}{Pavel Panchekha}.} \bibinfo{year}{2024}\natexlab{}.
\newblock \showarticletitle{Implementation and Synthesis of Math Library
  Functions}.
\newblock \bibinfo{journal}{\emph{Proc. ACM Program. Lang.}}
  \bibinfo{volume}{8}, \bibinfo{number}{POPL}, Article \bibinfo{articleno}{32}
  (\bibinfo{date}{Jan} \bibinfo{year}{2024}), \bibinfo{numpages}{28}~pages.
\newblock
\href{https://doi.org/10.1145/3632874}{doi:\nolinkurl{10.1145/3632874}}


\bibitem[{Brisebarre} and {Chevillard}(2007)]%
        {Brisebarre:epl:arith:2007}
\bibfield{author}{\bibinfo{person}{Nicolas {Brisebarre}} {and}
  \bibinfo{person}{Sylvvain {Chevillard}}.} \bibinfo{year}{2007}\natexlab{}.
\newblock \showarticletitle{Efficient polynomial L$^\infty$-approximations}. In
  \bibinfo{booktitle}{\emph{18th IEEE Symposium on Computer Arithmetic (ARITH
  '07)}}.
\newblock
\href{https://doi.org/10.1109/ARITH.2007.17}{doi:\nolinkurl{10.1109/ARITH.2007.17}}


\bibitem[Brisebarre et~al\mbox{.}(2024)]%
        {brisebarre:crf-what-cost:2024}
\bibfield{author}{\bibinfo{person}{Nicolas Brisebarre},
  \bibinfo{person}{Guillaume Hanrot}, \bibinfo{person}{Jean-Michel Muller},
  {and} \bibinfo{person}{Paul Zimmermann}.} \bibinfo{year}{2024}\natexlab{}.
\newblock \bibinfo{title}{{Correctly-rounded evaluation of a function: why,
  how, and at what cost?}}  (\bibinfo{date}{May} \bibinfo{year}{2024}).
\newblock
\urldef\tempurl%
\url{https://hal.science/hal-04474530}
\showURL{%
\tempurl}
\newblock
\shownote{working paper or preprint}.


\bibitem[Chevillard et~al\mbox{.}(2011)]%
        {Chevillard:ub:tcs:2011}
\bibfield{author}{\bibinfo{person}{Sylvain Chevillard}, \bibinfo{person}{John
  Harrison}, \bibinfo{person}{Mioara Joldes}, {and} \bibinfo{person}{Christoph
  Lauter}.} \bibinfo{year}{2011}\natexlab{}.
\newblock \showarticletitle{Efficient and accurate computation of upper bounds
  of approximation errors}. In \bibinfo{booktitle}{\emph{Theoretical Computer
  Science}}, Vol.~\bibinfo{volume}{412}.
\newblock
\href{https://doi.org/10.1016/j.tcs.2010.11.052}{doi:\nolinkurl{10.1016/j.tcs.2010.11.052}}


\bibitem[Chevillard et~al\mbox{.}(2010)]%
        {Chevillard:sollya:icms:2010}
\bibfield{author}{\bibinfo{person}{Sylvain Chevillard}, \bibinfo{person}{Mioara
  Joldes}, {and} \bibinfo{person}{Christoph Lauter}.}
  \bibinfo{year}{2010}\natexlab{}.
\newblock \showarticletitle{Sollya: An Environment for the Development of
  Numerical Codes}. In \bibinfo{booktitle}{\emph{Mathematical Software - ICMS
  2010}} \emph{(\bibinfo{series}{Lecture Notes in Computer Science},
  Vol.~\bibinfo{volume}{6327})}. \bibinfo{publisher}{Springer},
  \bibinfo{address}{Heidelberg, Germany}, \bibinfo{pages}{28--31}.
\newblock
\href{https://doi.org/10.1007/978-3-642-15582-6_5}{doi:\nolinkurl{10.1007/978-3-642-15582-6_5}}


\bibitem[{Chevillard} and {Lauter}(2007)]%
        {Chevillard:infnorm:qsic:2007}
\bibfield{author}{\bibinfo{person}{Sylvain {Chevillard}} {and}
  \bibinfo{person}{Christopher {Lauter}}.} \bibinfo{year}{2007}\natexlab{}.
\newblock \showarticletitle{A Certified Infinite Norm for the Implementation of
  Elementary Functions}. In \bibinfo{booktitle}{\emph{Seventh International
  Conference on Quality Software (QSIC 2007)}}. \bibinfo{pages}{153--160}.
\newblock
\href{https://doi.org/10.1109/QSIC.2007.4385491}{doi:\nolinkurl{10.1109/QSIC.2007.4385491}}


\bibitem[Chowdhary and Nagarakatte(2022)]%
        {chowdhary:eftsanitizer:oopsla:2022}
\bibfield{author}{\bibinfo{person}{Sangeeta Chowdhary} {and}
  \bibinfo{person}{Santosh Nagarakatte}.} \bibinfo{year}{2022}\natexlab{}.
\newblock \showarticletitle{Fast shadow execution for debugging numerical
  errors using error free transformations}.
\newblock \bibinfo{journal}{\emph{Proc. ACM Program. Lang.}}
  \bibinfo{volume}{6}, \bibinfo{number}{OOPSLA2}, Article
  \bibinfo{articleno}{190} (\bibinfo{date}{Oct} \bibinfo{year}{2022}),
  \bibinfo{numpages}{28}~pages.
\newblock
\href{https://doi.org/10.1145/3563353}{doi:\nolinkurl{10.1145/3563353}}


\bibitem[Cody and Waite(1980)]%
        {Cody:book:1980}
\bibfield{author}{\bibinfo{person}{William~J Cody} {and}
  \bibinfo{person}{William~M Waite}.} \bibinfo{year}{1980}\natexlab{}.
\newblock \bibinfo{booktitle}{\emph{{Software manual for the elementary
  functions}}}.
\newblock \bibinfo{publisher}{Prentice-Hall}, \bibinfo{address}{Englewood
  Cliffs, NJ}.
\newblock
\href{https://doi.org/10.1137/1024023}{doi:\nolinkurl{10.1137/1024023}}


\bibitem[Daramy et~al\mbox{.}(2003)]%
        {Daramy:crlibm:spie:2003}
\bibfield{author}{\bibinfo{person}{Catherine Daramy}, \bibinfo{person}{David
  Defour}, \bibinfo{person}{Florent Dinechin}, {and}
  \bibinfo{person}{Jean-Michel Muller}.} \bibinfo{year}{2003}\natexlab{}.
\newblock \showarticletitle{CR-LIBM: A correctly rounded elementary function
  library}. In \bibinfo{booktitle}{\emph{Proceedings of SPIE Vol. 5205:
  Advanced Signal Processing Algorithms, Architectures, and Implementations
  XIII}}, Vol.~\bibinfo{volume}{5205}.
\newblock
\href{https://doi.org/10.1117/12.505591}{doi:\nolinkurl{10.1117/12.505591}}


\bibitem[{Daumas} et~al\mbox{.}(2005)]%
        {Daumas:proofs:arith:2005}
\bibfield{author}{\bibinfo{person}{Marc {Daumas}}, \bibinfo{person}{Guillaume
  {Melquiond}}, {and} \bibinfo{person}{Cesar {Munoz}}.}
  \bibinfo{year}{2005}\natexlab{}.
\newblock \showarticletitle{Guaranteed proofs using interval arithmetic}. In
  \bibinfo{booktitle}{\emph{17th IEEE Symposium on Computer Arithmetic
  (ARITH'05)}}. \bibinfo{pages}{188--195}.
\newblock
\href{https://doi.org/10.1109/ARITH.2005.25}{doi:\nolinkurl{10.1109/ARITH.2005.25}}


\bibitem[{de Dinechin} et~al\mbox{.}(2011)]%
        {Dinechin:verify:tc:2011}
\bibfield{author}{\bibinfo{person}{Florent {de Dinechin}},
  \bibinfo{person}{Christopher {Lauter}}, {and} \bibinfo{person}{Guillaume
  {Melquiond}}.} \bibinfo{year}{2011}\natexlab{}.
\newblock \showarticletitle{Certifying the Floating-Point Implementation of an
  Elementary Function Using Gappa}. In \bibinfo{booktitle}{\emph{IEEE
  Transactions on Computers}}, Vol.~\bibinfo{volume}{60}.
  \bibinfo{pages}{242--253}.
\newblock
\href{https://doi.org/10.1109/TC.2010.128}{doi:\nolinkurl{10.1109/TC.2010.128}}


\bibitem[de~Dinechin et~al\mbox{.}(2006)]%
        {Dinechin:gappaverify:sac:2006}
\bibfield{author}{\bibinfo{person}{Florent de Dinechin},
  \bibinfo{person}{Christoph~Quirin Lauter}, {and} \bibinfo{person}{Guillaume
  Melquiond}.} \bibinfo{year}{2006}\natexlab{}.
\newblock \showarticletitle{Assisted Verification of Elementary Functions Using
  Gappa}. In \bibinfo{booktitle}{\emph{Proceedings of the 2006 ACM Symposium on
  Applied Computing}} (Dijon, France) \emph{(\bibinfo{series}{SAC ’06})}.
  \bibinfo{publisher}{Association for Computing Machinery},
  \bibinfo{address}{New York, NY, USA}, \bibinfo{pages}{1318–1322}.
\newblock
\showISBNx{1595931082}
\href{https://doi.org/10.1145/1141277.1141584}{doi:\nolinkurl{10.1145/1141277.1141584}}


\bibitem[Dekker(1971)]%
        {dekker:fast2sum:1971}
\bibfield{author}{\bibinfo{person}{T.~J. Dekker}.}
  \bibinfo{year}{1971}\natexlab{}.
\newblock \showarticletitle{A floating-point technique for extending the
  available precision}.
\newblock \bibinfo{journal}{\emph{Numer. Math.}} \bibinfo{volume}{18},
  \bibinfo{number}{3} (\bibinfo{date}{Jun.} \bibinfo{year}{1971}),
  \bibinfo{pages}{224–242}.
\newblock
\showISSN{0029-599X}
\href{https://doi.org/10.1007/BF01397083}{doi:\nolinkurl{10.1007/BF01397083}}


\bibitem[Demeure(2020)]%
        {demeure:shaman:thesis:2020}
\bibfield{author}{\bibinfo{person}{Nestor Demeure}.}
  \bibinfo{year}{2020}\natexlab{}.
\newblock \emph{\bibinfo{title}{Compromise between precision and performance in
  high-performance computing}}.
\newblock \bibinfo{thesistype}{Ph.\,D. Dissertation}.
  \bibinfo{school}{{Universit{\'e} Paris-Saclay}}.
\newblock
\urldef\tempurl%
\url{https://tel.archives-ouvertes.fr/tel-03116750}
\showURL{%
\tempurl}


\bibitem[Fousse et~al\mbox{.}(2007)]%
        {Fousse:toms:2007:mpfr}
\bibfield{author}{\bibinfo{person}{Laurent Fousse}, \bibinfo{person}{Guillaume
  Hanrot}, \bibinfo{person}{Vincent Lef\`{e}vre}, \bibinfo{person}{Patrick
  P{\'e}lissier}, {and} \bibinfo{person}{Paul Zimmermann}.}
  \bibinfo{year}{2007}\natexlab{}.
\newblock \showarticletitle{MPFR: A Multiple-precision Binary Floating-point
  Library with Correct Rounding}.
\newblock \bibinfo{journal}{\emph{ACM Trans. Math. Software}}
  \bibinfo{volume}{33}, \bibinfo{number}{2}, Article \bibinfo{articleno}{13}
  (\bibinfo{date}{June} \bibinfo{year}{2007}).
\newblock
\showISSN{0098-3500}
\href{https://doi.org/10.1145/1236463.1236468}{doi:\nolinkurl{10.1145/1236463.1236468}}


\bibitem[Harrison(1997a)]%
        {Harrison:expproof:amst:1997}
\bibfield{author}{\bibinfo{person}{John Harrison}.}
  \bibinfo{year}{1997}\natexlab{a}.
\newblock \showarticletitle{Floating point verification in HOL light: The
  exponential function}. In \bibinfo{booktitle}{\emph{Algebraic Methodology and
  Software Technology}}, \bibfield{editor}{\bibinfo{person}{Michael Johnson}}
  (Ed.). \bibinfo{publisher}{Springer Berlin Heidelberg},
  \bibinfo{address}{Berlin, Heidelberg}, \bibinfo{pages}{246--260}.
\newblock
\href{https://doi.org/10.1007/BFb0000475}{doi:\nolinkurl{10.1007/BFb0000475}}


\bibitem[Harrison(1997b)]%
        {Harrison:verifywithHOL:tphol:1997}
\bibfield{author}{\bibinfo{person}{John Harrison}.}
  \bibinfo{year}{1997}\natexlab{b}.
\newblock \showarticletitle{Verifying the Accuracy of Polynomial Approximations
  in HOL}. In \bibinfo{booktitle}{\emph{International Conference on Theorem
  Proving in Higher Order Logics}}.
\newblock
\href{https://doi.org/10.1007/BFb0028391}{doi:\nolinkurl{10.1007/BFb0028391}}


\bibitem[Harrison(2009)]%
        {harrison:hollight:tphols:2009}
\bibfield{author}{\bibinfo{person}{John Harrison}.}
  \bibinfo{year}{2009}\natexlab{}.
\newblock \showarticletitle{{HOL} {L}ight: An Overview}. In
  \bibinfo{booktitle}{\emph{Proceedings of the 22nd International Conference on
  Theorem Proving in Higher Order Logics, TPHOLs 2009}}
  \emph{(\bibinfo{series}{Lecture Notes in Computer Science},
  Vol.~\bibinfo{volume}{5674})}, \bibfield{editor}{\bibinfo{person}{Stefan
  Berghofer}, \bibinfo{person}{Tobias Nipkow}, \bibinfo{person}{Christian
  Urban}, {and} \bibinfo{person}{Makarius Wenzel}} (Eds.).
  \bibinfo{publisher}{Springer-Verlag}, \bibinfo{address}{Munich, Germany},
  \bibinfo{pages}{60--66}.
\newblock
\href{https://doi.org/10.1007/978-3-642-03359-9_4}{doi:\nolinkurl{10.1007/978-3-642-03359-9_4}}


\bibitem[Kahan(1965)]%
        {kahan:pracniques:cacm:1965}
\bibfield{author}{\bibinfo{person}{William Kahan}.}
  \bibinfo{year}{1965}\natexlab{}.
\newblock \showarticletitle{Pracniques: Further Remarks on Reducing Truncation
  Errors}. In \bibinfo{booktitle}{\emph{Communications of the ACM}},
  Vol.~\bibinfo{volume}{8}. \bibinfo{publisher}{ACM}, \bibinfo{address}{New
  York, NY, USA}.
\newblock
\showISSN{0001-0782}
\href{https://doi.org/10.1145/363707.363723}{doi:\nolinkurl{10.1145/363707.363723}}


\bibitem[Kahan(2004)]%
        {Kahan:tablemaker:online:2004}
\bibfield{author}{\bibinfo{person}{William Kahan}.}
  \bibinfo{year}{2004}\natexlab{}.
\newblock \bibinfo{booktitle}{\emph{A Logarithm Too Clever by Half}}.
\newblock
\urldef\tempurl%
\url{https://people.eecs.berkeley.edu/~wkahan/LOG10HAF.TXT}
\showURL{%
\tempurl}


\bibitem[Knuth(1998)]%
        {Knuth:1998:SA}
\bibfield{author}{\bibinfo{person}{Donald~E. Knuth}.}
  \bibinfo{year}{1998}\natexlab{}.
\newblock \bibinfo{booktitle}{\emph{The Art of Computer Programming Volume 2:
  Seminumerical Algorithms}}.
\newblock \bibinfo{publisher}{Addison-Wesley}.
\newblock


\bibitem[Langlois et~al\mbox{.}(2006)]%
        {Langlois:Horner:2005}
\bibfield{author}{\bibinfo{person}{Philippe Langlois}, \bibinfo{person}{Stef
  Graillat}, {and} \bibinfo{person}{Nicolas Louvet}.}
  \bibinfo{year}{2006}\natexlab{}.
\newblock \showarticletitle{{Compensated Horner Scheme}}. In
  \bibinfo{booktitle}{\emph{Algebraic and Numerical Algorithms and
  Computer-assisted Proofs}} \emph{(\bibinfo{series}{Dagstuhl Seminar
  Proceedings (DagSemProc)}, Vol.~\bibinfo{volume}{5391})},
  \bibfield{editor}{\bibinfo{person}{Bruno Buchberger},
  \bibinfo{person}{Shin'ichi Oishi}, \bibinfo{person}{Michael Plum}, {and}
  \bibinfo{person}{Sigfried~M. Rump}} (Eds.). \bibinfo{publisher}{Schloss
  Dagstuhl -- Leibniz-Zentrum f{\"u}r Informatik}, \bibinfo{address}{Dagstuhl,
  Germany}.
\newblock
\showISSN{1862-4405}
\href{https://doi.org/10.4230/DagSemProc.05391.3}{doi:\nolinkurl{10.4230/DagSemProc.05391.3}}


\bibitem[Lee et~al\mbox{.}(2017)]%
        {Lee:verify:popl:2018}
\bibfield{author}{\bibinfo{person}{Wonyeol Lee}, \bibinfo{person}{Rahul
  Sharma}, {and} \bibinfo{person}{Alex Aiken}.}
  \bibinfo{year}{2017}\natexlab{}.
\newblock \showarticletitle{On Automatically Proving the Correctness of Math.h
  Implementations}.
\newblock \bibinfo{journal}{\emph{Proceedings of the ACM on Programming
  Languages}} \bibinfo{volume}{2}, \bibinfo{number}{POPL}, Article
  \bibinfo{articleno}{47} (\bibinfo{date}{Dec.} \bibinfo{year}{2017}),
  \bibinfo{numpages}{32}~pages.
\newblock
\href{https://doi.org/10.1145/3158135}{doi:\nolinkurl{10.1145/3158135}}


\bibitem[Lim et~al\mbox{.}(2020)]%
        {Lim:rlibm:arxiv:2020}
\bibfield{author}{\bibinfo{person}{Jay~P. Lim}, \bibinfo{person}{Mridul
  Aanjaneya}, \bibinfo{person}{John Gustafson}, {and} \bibinfo{person}{Santosh
  Nagarakatte}.} \bibinfo{year}{2020}\natexlab{}.
\newblock \bibinfo{title}{A Novel Approach to Generate Correctly Rounded Math
  Libraries for New Floating Point Representations}.
\newblock
\showeprint{2007.05344}
\newblock
\shownote{Rutgers Department of Computer Science Technical Report DCS-TR-753}.


\bibitem[Lim et~al\mbox{.}(2021)]%
        {lim:rlibm:popl:2021}
\bibfield{author}{\bibinfo{person}{Jay~P. Lim}, \bibinfo{person}{Mridul
  Aanjaneya}, \bibinfo{person}{John Gustafson}, {and} \bibinfo{person}{Santosh
  Nagarakatte}.} \bibinfo{year}{2021}\natexlab{}.
\newblock \showarticletitle{An Approach to Generate Correctly Rounded Math
  Libraries for New Floating Point Variants}.
\newblock \bibinfo{journal}{\emph{Proceedings of the ACM on Programming
  Languages}} \bibinfo{volume}{6}, \bibinfo{number}{POPL}, Article
  \bibinfo{articleno}{29} (\bibinfo{date}{Jan.} \bibinfo{year}{2021}),
  \bibinfo{numpages}{30}~pages.
\newblock
\href{https://doi.org/10.1145/3434310}{doi:\nolinkurl{10.1145/3434310}}


\bibitem[Lim and Nagarakatte(2021a)]%
        {lim:rlibm32:pldi:2021}
\bibfield{author}{\bibinfo{person}{Jay~P. Lim} {and} \bibinfo{person}{Santosh
  Nagarakatte}.} \bibinfo{year}{2021}\natexlab{a}.
\newblock \showarticletitle{High Performance Correctly Rounded Math Libraries
  for 32-bit Floating Point Representations}. In \bibinfo{booktitle}{\emph{42nd
  ACM SIGPLAN Conference on Programming Language Design and Implementation}}
  \emph{(\bibinfo{series}{PLDI'21})}.
\newblock
\href{https://doi.org/10.1145/3453483.3454049}{doi:\nolinkurl{10.1145/3453483.3454049}}


\bibitem[Lim and Nagarakatte(2021b)]%
        {Lim:rlibm32:arxiv:2021}
\bibfield{author}{\bibinfo{person}{Jay~P Lim} {and} \bibinfo{person}{Santosh
  Nagarakatte}.} \bibinfo{year}{2021}\natexlab{b}.
\newblock \bibinfo{title}{RLIBM-32: High Performance Correctly Rounded Math
  Libraries for 32-bit Floating Point Representations}.
\newblock
\showeprint{2104.04043}
\newblock
\shownote{Rutgers Department of Computer Science Technical Report DCS-TR-754}.


\bibitem[Lim and Nagarakatte(2021c)]%
        {Lim:rlibmAll:arxiv:2021}
\bibfield{author}{\bibinfo{person}{Jay~P. Lim} {and} \bibinfo{person}{Santosh
  Nagarakatte}.} \bibinfo{year}{2021}\natexlab{c}.
\newblock \bibinfo{title}{RLIBM-ALL: A Novel Polynomial Approximation Method to
  Produce Correctly Rounded Results for Multiple Representations and Rounding
  Modes}.
\newblock
\showeprint{2108.06756}
\newblock
\shownote{Rutgers Department of Computer Science Technical Report DCS-TR-757}.


\bibitem[Lim and Nagarakatte(2022)]%
        {lim:rlibmall:popl:2022}
\bibfield{author}{\bibinfo{person}{Jay~P. Lim} {and} \bibinfo{person}{Santosh
  Nagarakatte}.} \bibinfo{year}{2022}\natexlab{}.
\newblock \showarticletitle{One Polynomial Approximation to Produce Correctly
  Rounded Results of an Elementary Function for Multiple Representations and
  Rounding Modes}.
\newblock \bibinfo{journal}{\emph{Proceedings of the ACM on Programming
  Languages}} \bibinfo{volume}{6}, \bibinfo{number}{POPL}, Article
  \bibinfo{articleno}{3} (\bibinfo{date}{Jan.} \bibinfo{year}{2022}),
  \bibinfo{numpages}{28}~pages.
\newblock
\href{https://doi.org/10.1145/3498664}{doi:\nolinkurl{10.1145/3498664}}


\bibitem[Monniaux(2008)]%
        {david:pitfalls:toplas:2008}
\bibfield{author}{\bibinfo{person}{David Monniaux}.}
  \bibinfo{year}{2008}\natexlab{}.
\newblock \showarticletitle{The pitfalls of verifying floating-point
  computations}.
\newblock \bibinfo{journal}{\emph{ACM Trans. Program. Lang. Syst.}}
  \bibinfo{volume}{30}, \bibinfo{number}{3}, Article \bibinfo{articleno}{12}
  (\bibinfo{date}{May} \bibinfo{year}{2008}), \bibinfo{numpages}{41}~pages.
\newblock
\showISSN{0164-0925}
\href{https://doi.org/10.1145/1353445.1353446}{doi:\nolinkurl{10.1145/1353445.1353446}}


\bibitem[Muller(2016)]%
        {Muller:elemfunc:book:2016}
\bibfield{author}{\bibinfo{person}{Jean-Michel Muller}.}
  \bibinfo{year}{2016}\natexlab{}.
\newblock \bibinfo{booktitle}{\emph{Elementary Functions: Algorithms and
  Implementation}}.
\newblock \bibinfo{publisher}{Sprinder, 3rd edition}.
\newblock
\href{https://doi.org/10.1007/978-1-4899-7983-4}{doi:\nolinkurl{10.1007/978-1-4899-7983-4}}


\bibitem[Muller et~al\mbox{.}(2018)]%
        {muller:fp:2018}
\bibfield{author}{\bibinfo{person}{Jean-Michel Muller},
  \bibinfo{person}{Nicolas Brunie}, \bibinfo{person}{Florent de Dinechin},
  \bibinfo{person}{Claude-Pierre Jeannerod}, \bibinfo{person}{Mioara Joldes},
  \bibinfo{person}{Vincent Lefvre}, \bibinfo{person}{Guillaume Melquiond},
  \bibinfo{person}{Nathalie Revol}, {and} \bibinfo{person}{Serge Torres}.}
  \bibinfo{year}{2018}\natexlab{}.
\newblock \bibinfo{booktitle}{\emph{Handbook of Floating-Point Arithmetic}
  (\bibinfo{edition}{2nd} ed.)}.
\newblock \bibinfo{publisher}{Birkh\"{a}user Basel}.
\newblock
\showISBNx{3319765256}
\href{https://doi.org/10.1007/978-3-319-76526-6}{doi:\nolinkurl{10.1007/978-3-319-76526-6}}


\bibitem[Nagarakatte et~al\mbox{.}(2024)]%
        {rlibm-project}
\bibfield{author}{\bibinfo{person}{Santosh Nagarakatte},
  \bibinfo{person}{Sehyeok Park}, \bibinfo{person}{Mridul Aanjaneya}, {and}
  \bibinfo{person}{Jay~P. Lim}.} \bibinfo{year}{2024}\natexlab{}.
\newblock \bibinfo{booktitle}{\emph{The RLIBM Project}}.
\newblock
\urldef\tempurl%
\url{https://www.cs.rutgers.edu/~santosh.nagarakatte/rlibm/}
\showURL{%
\tempurl}


\bibitem[NVIDIA(2020)]%
        {nvidia:tensorfloat:online:2020}
\bibfield{author}{\bibinfo{person}{NVIDIA}.} \bibinfo{year}{2020}\natexlab{}.
\newblock \bibinfo{booktitle}{\emph{TensorFloat-32 in the A100 GPU Accelerates
  AI Training, HPC up to 20x}}.
\newblock
\urldef\tempurl%
\url{https://blogs.nvidia.com/blog/2020/05/14/tensorfloat-32-precision-format/}
\showURL{%
\tempurl}


\bibitem[Overton(2001)]%
        {overton:fp:2001}
\bibfield{author}{\bibinfo{person}{Michael Overton}.}
  \bibinfo{year}{2001}\natexlab{}.
\newblock \bibinfo{booktitle}{\emph{Numerical computing with IEEE floating
  point arithmetic}}.
\newblock \bibinfo{publisher}{SIAM, Society for Industrial and Applied
  Mathematics}.
\newblock
\showISBNx{0898714826}
\href{https://doi.org/10.1137/1.9780898718072}{doi:\nolinkurl{10.1137/1.9780898718072}}


\bibitem[Park et~al\mbox{.}(2025a)]%
        {park:multiround-artifact:pldi:2025}
\bibfield{author}{\bibinfo{person}{Sehyeok Park}, \bibinfo{person}{Justin Kim},
  {and} \bibinfo{person}{Santosh Nagarakatte}.}
  \bibinfo{year}{2025}\natexlab{a}.
\newblock \bibinfo{title}{Artifact for Correctly Rounded Math Libraries Without
  Worrying about the Application's Rounding Mode}.
\newblock
\href{https://doi.org/10.5281/zenodo.15066862}{doi:\nolinkurl{10.5281/zenodo.15066862}}


\bibitem[Park et~al\mbox{.}(2025b)]%
        {park:rlibm-multiround:pldi:2025}
\bibfield{author}{\bibinfo{person}{Sehyeok Park}, \bibinfo{person}{Justin Kim},
  {and} \bibinfo{person}{Santosh Nagarakatte}.}
  \bibinfo{year}{2025}\natexlab{b}.
\newblock \showarticletitle{Correctly Rounded Math Libraries Without Worrying
  about the Application's Rounding Mode}.
\newblock \bibinfo{journal}{\emph{Proc. ACM Program. Lang.}}
  \bibinfo{volume}{8}, \bibinfo{number}{PLDI}, Article \bibinfo{articleno}{229}
  (\bibinfo{date}{jun} \bibinfo{year}{2025}), \bibinfo{numpages}{24}~pages.
\newblock
\href{https://doi.org/10.1145/3729332}{doi:\nolinkurl{10.1145/3729332}}


\bibitem[Park and Nagarakatte(2025)]%
        {park:trig:vss:2025}
\bibfield{author}{\bibinfo{person}{Sehyeok Park} {and} \bibinfo{person}{Santosh
  Nagarakatte}.} \bibinfo{year}{2025}\natexlab{}.
\newblock \showarticletitle{Fast Trigonometric Functions using the RLIBM
  Approach}. In \bibinfo{booktitle}{\emph{Proceedings of the International
  Workshop on Verification of Scientific Software}} \emph{(\bibinfo{series}{VSS
  2025})}.
\newblock


\bibitem[Priest(1992)]%
        {priest:thesis:1992}
\bibfield{author}{\bibinfo{person}{Douglas~M. Priest}.}
  \bibinfo{year}{1992}\natexlab{}.
\newblock \emph{\bibinfo{title}{On Properties of Floating Point Arithmetics:
  Numerical Stability and the Cost of Accurate Computations}}.
\newblock \bibinfo{thesistype}{Ph.\,D. Dissertation}. \bibinfo{address}{USA}.
\newblock
\newblock
\shownote{UMI Order No. GAX93-30692}.


\bibitem[Remes(1934)]%
        {Remes:algorithm:1934}
\bibfield{author}{\bibinfo{person}{Eugene Remes}.}
  \bibinfo{year}{1934}\natexlab{}.
\newblock \showarticletitle{Sur un proc{\'e}d{\'e} convergent
  d’approximations successives pour d{\'e}terminer les polyn{\^o}mes
  d’approximation}.
\newblock \bibinfo{journal}{\emph{Comptes rendus de l'Académie des Sciences}}
  \bibinfo{volume}{198} (\bibinfo{year}{1934}), \bibinfo{pages}{2063--2065}.
\newblock


\bibitem[Rump(2009)]%
        {rump:AccurateSummation:2009}
\bibfield{author}{\bibinfo{person}{Siegfried~M. Rump}.}
  \bibinfo{year}{2009}\natexlab{}.
\newblock \showarticletitle{Ultimately Fast Accurate Summation}.
\newblock \bibinfo{journal}{\emph{SIAM Journal on Scientific Computing}}
  \bibinfo{volume}{31}, \bibinfo{number}{5} (\bibinfo{year}{2009}),
  \bibinfo{pages}{3466--3502}.
\newblock
\href{https://doi.org/10.1137/080738490}{doi:\nolinkurl{10.1137/080738490}}


\bibitem[Sawada(2002)]%
        {Sawada:verify:acl:2002}
\bibfield{author}{\bibinfo{person}{Jun Sawada}.}
  \bibinfo{year}{2002}\natexlab{}.
\newblock \showarticletitle{Formal verification of divide and square root
  algorithms using series calculation}. In \bibinfo{booktitle}{\emph{3rd
  International Workshop on the ACL2 Theorem Prover and its Applications}}.
\newblock


\bibitem[Shewchuk(1996)]%
        {shewchuk:robust:1996}
\bibfield{author}{\bibinfo{person}{Jonathan Shewchuk}.}
  \bibinfo{year}{1996}\natexlab{}.
\newblock \showarticletitle{Adaptive Precision Floating-Point Arithmetic and
  Fast Robust Geometric Predicates}.
\newblock \bibinfo{journal}{\emph{Discrete and Computational Geometry}}
  \bibinfo{volume}{18} (\bibinfo{date}{Jul} \bibinfo{year}{1996}).
\newblock
\href{https://doi.org/10.1007/PL00009321}{doi:\nolinkurl{10.1007/PL00009321}}


\bibitem[Sibidanov et~al\mbox{.}(2022)]%
        {sibidanov:core-math:arith:2022}
\bibfield{author}{\bibinfo{person}{Alexei Sibidanov}, \bibinfo{person}{Paul
  Zimmermann}, {and} \bibinfo{person}{St{\'e}phane Glondu}.}
  \bibinfo{year}{2022}\natexlab{}.
\newblock \showarticletitle{{The CORE-MATH Project}}. In
  \bibinfo{booktitle}{\emph{{ARITH 2022 - 29th IEEE Symposium on Computer
  Arithmetic}}}. \bibinfo{address}{virtual, France}.
\newblock
\urldef\tempurl%
\url{https://hal.inria.fr/hal-03721525}
\showURL{%
\tempurl}


\bibitem[{Story} and {Ping Tak Peter Tang}(1999)]%
        {762822}
\bibfield{author}{\bibinfo{person}{Shane {Story}} {and} \bibinfo{person}{{Ping
  Tak Peter Tang}}.} \bibinfo{year}{1999}\natexlab{}.
\newblock \showarticletitle{New algorithms for improved transcendental
  functions on IA-64}. In \bibinfo{booktitle}{\emph{Proceedings 14th IEEE
  Symposium on Computer Arithmetic}}. \bibinfo{pages}{4--11}.
\newblock
\href{https://doi.org/10.1109/ARITH.1999.762822}{doi:\nolinkurl{10.1109/ARITH.1999.762822}}


\bibitem[Tang(1989)]%
        {Tang:exp:toms:1989}
\bibfield{author}{\bibinfo{person}{Ping-Tak~Peter Tang}.}
  \bibinfo{year}{1989}\natexlab{}.
\newblock \showarticletitle{Table-Driven Implementation of the Exponential
  Function in IEEE Floating-Point Arithmetic}.
\newblock \bibinfo{journal}{\emph{ACM Trans. Math. Software}}
  \bibinfo{volume}{15}, \bibinfo{number}{2} (\bibinfo{date}{June}
  \bibinfo{year}{1989}), \bibinfo{pages}{144–157}.
\newblock
\showISSN{0098-3500}
\href{https://doi.org/10.1145/63522.214389}{doi:\nolinkurl{10.1145/63522.214389}}


\bibitem[Tang(1990)]%
        {Tang:log:toms:1990}
\bibfield{author}{\bibinfo{person}{Ping-Tak~Peter Tang}.}
  \bibinfo{year}{1990}\natexlab{}.
\newblock \showarticletitle{Table-Driven Implementation of the Logarithm
  Function in IEEE Floating-Point Arithmetic}.
\newblock \bibinfo{journal}{\emph{ACM Trans. Math. Software}}
  \bibinfo{volume}{16}, \bibinfo{number}{4} (\bibinfo{date}{Dec.}
  \bibinfo{year}{1990}), \bibinfo{pages}{378–400}.
\newblock
\href{https://doi.org/10.1145/98267.98294}{doi:\nolinkurl{10.1145/98267.98294}}


\bibitem[{Tang}(1991)]%
        {Tang:TableLookup:SCA:1991}
\bibfield{author}{\bibinfo{person}{P.~T.~P. {Tang}}.}
  \bibinfo{year}{1991}\natexlab{}.
\newblock \showarticletitle{Table-lookup algorithms for elementary functions
  and their error analysis}. In \bibinfo{booktitle}{\emph{[1991] Proceedings
  10th IEEE Symposium on Computer Arithmetic}}. \bibinfo{pages}{232--236}.
\newblock
\href{https://doi.org/10.1109/ARITH.1991.145565}{doi:\nolinkurl{10.1109/ARITH.1991.145565}}


\bibitem[Trefethen(2012)]%
        {Trefethen:chebyshev:book:2012}
\bibfield{author}{\bibinfo{person}{Lloyd~N. Trefethen}.}
  \bibinfo{year}{2012}\natexlab{}.
\newblock \bibinfo{booktitle}{\emph{Approximation Theory and Approximation
  Practice (Other Titles in Applied Mathematics)}}.
\newblock \bibinfo{publisher}{Society for Industrial and Applied Mathematics},
  \bibinfo{address}{USA}.
\newblock
\showISBNx{1611972396}
\href{https://doi.org/10.1137/1.9781611975949}{doi:\nolinkurl{10.1137/1.9781611975949}}


\bibitem[Wang and Kanwar(2019)]%
        {Wang:tpu:online:2019}
\bibfield{author}{\bibinfo{person}{Shibo Wang} {and} \bibinfo{person}{Pankaj
  Kanwar}.} \bibinfo{year}{2019}\natexlab{}.
\newblock \bibinfo{booktitle}{\emph{BFloat16: The secret to high performance on
  Cloud TPUs}}.
\newblock
\urldef\tempurl%
\url{https://cloud.google.com/blog/products/ai-machine-learning/bfloat16-the-secret-to-high-performance-on-cloud-tpus}
\showURL{%
\tempurl}


\bibitem[Yi et~al\mbox{.}(2019)]%
        {Xin:repairmlib:popl:2019}
\bibfield{author}{\bibinfo{person}{Xin Yi}, \bibinfo{person}{Liqian Chen},
  \bibinfo{person}{Xiaoguang Mao}, {and} \bibinfo{person}{Tao Ji}.}
  \bibinfo{year}{2019}\natexlab{}.
\newblock \showarticletitle{Efficient Automated Repair of High Floating-Point
  Errors in Numerical Libraries}.
\newblock \bibinfo{journal}{\emph{Proceedings of the ACM on Programming
  Languages}} \bibinfo{volume}{3}, \bibinfo{number}{POPL}, Article
  \bibinfo{articleno}{56} (\bibinfo{date}{Jan.} \bibinfo{year}{2019}),
  \bibinfo{numpages}{29}~pages.
\newblock
\href{https://doi.org/10.1145/3290369}{doi:\nolinkurl{10.1145/3290369}}


\bibitem[Ziv(1991)]%
        {Abraham:fastcorrect:toms:1991}
\bibfield{author}{\bibinfo{person}{Abraham Ziv}.}
  \bibinfo{year}{1991}\natexlab{}.
\newblock \showarticletitle{Fast Evaluation of Elementary Mathematical
  Functions with Correctly Rounded Last Bit}.
\newblock \bibinfo{journal}{\emph{ACM Trans. Math. Software}}
  \bibinfo{volume}{17}, \bibinfo{number}{3} (\bibinfo{date}{Sept.}
  \bibinfo{year}{1991}), \bibinfo{pages}{410–423}.
\newblock
\showISSN{0098-3500}
\href{https://doi.org/10.1145/114697.116813}{doi:\nolinkurl{10.1145/114697.116813}}


\bibitem[Zou et~al\mbox{.}(2019)]%
        {Daming:fpe:popl:2020}
\bibfield{author}{\bibinfo{person}{Daming Zou}, \bibinfo{person}{Muhan Zeng},
  \bibinfo{person}{Yingfei Xiong}, \bibinfo{person}{Zhoulai Fu},
  \bibinfo{person}{Lu Zhang}, {and} \bibinfo{person}{Zhendong Su}.}
  \bibinfo{year}{2019}\natexlab{}.
\newblock \showarticletitle{Detecting Floating-Point Errors via Atomic
  Conditions}.
\newblock \bibinfo{journal}{\emph{Proceedings of the ACM on Programming
  Languages}} \bibinfo{volume}{4}, \bibinfo{number}{POPL}, Article
  \bibinfo{articleno}{60} (\bibinfo{date}{Dec.} \bibinfo{year}{2019}),
  \bibinfo{numpages}{27}~pages.
\newblock
\href{https://doi.org/10.1145/3371128}{doi:\nolinkurl{10.1145/3371128}}


\end{thebibliography}
\newpage
\section{Supplemental Material}
\label{sec:appendix}

In this section, we provide a detailed proof of correctness to
demonstrate that Algorithms~\ref{alg:rz_add} and ~\ref{alg:rz_mult}
produce the \RNZ results for addition and multiplication respectively,
which were omitted from our PLDI 2025
paper~\cite{park:rlibm-multiround:pldi:2025} due to space constraints.

Both algorithms rely on the sign of the rounding error in the original
operation (\ie, $a+b - (a \oplus_{\mathit{rnd}} b)$ or $a \times b -
(a \otimes_{\mathit{rnd}} b)$) to determine whether the initial sum or
product obtained through the application's rounding mode needs to be
adjusted to conform to $\RNZ$. Because the rounding errors caused by
$\oplus_{\mathit{rnd}}$ and $\otimes_{\mathit{rnd}}$ are not
guaranteed to be exactly representable as FP numbers, both $\RZA$ and
$\RZM$ leverage \textit{floating-point approximations} of the rounding
errors induced by their respective target operations. In order to
correctly apply theorems that utilize the signs of the $\textit{real}$
rounding errors (\ie Theorems ~\ref{theorem:rz-add-1-proof} and
~\ref{theorem:rz-mult-1-proof}), we must ensure that the FP rounding error
approximations computed by $\RZA$ and $\RZM$ have the same signs as
their real number counterparts.  To prove that the rounding error
approximations used in our algorithms preserve the signs of their real
number counterparts, we rely on
Lemma~\ref{lemma:sign-preservation-proof}, which we present again for
reference.

\setcounter{theorem}{2}

\begin{definition}\label{rnd-sign-proof}  
For all $v \in (\mathbb{R} \setminus \{0\}) \cup \mathbb{T}$ where $\mathbb{T}$ is a
  FP representation, we define $\mathit{sign}(v)$ to be $0$ for positive
  numbers and $1$ for negative numbers. For the FP numbers $+0,-0 \in \mathbb{T}$, we
  define the sign as $\mathit{sign}(+0) = 0$ and $\mathit{sign}(-0) = 1$.
\end{definition}

\setcounter{lemma}{3}

\begin{lemma}\label{lemma:sign-preservation-proof}
Let $rnd$ be any rounding function that faithfully
  rounds a number $r \in \mathbb{R} \setminus \{ 0 \}$ to a FP number
  $t \in \mathbb{T}$. For all $r \in \mathbb{R}\setminus\{0\}$ and for
  all $rnd$, $\mathit{sign}(r) = \mathit{sign}(\mathit{rnd}(r))$.
\end{lemma}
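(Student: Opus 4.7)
The plan is to reduce the lemma to a structural property of $\mathit{pred}$ and $\mathit{succ}$ and then verify that property by a case split on the sign of $r$. Unfolding the definition of faithful rounding, any $\mathit{rnd}(r)$ must equal $r$ itself (when $r \in \mathbb{T}$), or one of $\mathit{pred}(r)$ and $\mathit{succ}(r)$ (when $r \notin \mathbb{T}$). The first case is immediate: if $r \in \mathbb{T}$ and $r \neq 0$, then $\mathit{rnd}(r) = r$ is a signed non-zero FP number, so $\mathit{sign}(\mathit{rnd}(r)) = \mathit{sign}(r)$ directly from Definition~\ref{rnd-sign-proof}. Hence it suffices to prove the auxiliary claim that for every $r \in \mathbb{R} \setminus \{0\}$, both $\mathit{pred}(r)$ and $\mathit{succ}(r)$ carry the same sign as $r$.

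For $r > 0$ (so $\mathit{sign}(r) = 0$), I appeal to the alternate characterizations in Definitions~\ref{rd-def-alt} and~\ref{ru-def-alt}. The value $\mathit{succ}(r)$ is the smallest $t \in \mathbb{T}$ with $t \geq r$; since $r > 0$, any such $t$ must satisfy $t > 0$ in real arithmetic, which excludes $\pm 0$, so $\mathit{succ}(r)$ is a strictly positive FP number with sign $0$. For $\mathit{pred}(r)$, the largest $t \in \mathbb{T}$ with $t \leq r$, the observation is that $+0 \in \mathbb{T}$ and $+0 \leq r$ in real arithmetic, so $\mathit{pred}(r)$ is at least $+0$: it is therefore either a strictly positive FP number or the FP number $+0$ itself, and in both cases $\mathit{sign}(\mathit{pred}(r)) = 0$. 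The case $r < 0$ is entirely symmetric: $\mathit{pred}(r) < 0$ strictly, hence carries sign $1$, and $\mathit{succ}(r)$ is either a strictly negative FP number or the FP number $-0$, again with sign $1$.

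The main obstacle I anticipate is resolving the tie-breaking between $+0$ and $-0$ in the definitions of $\mathit{pred}$ and $\mathit{succ}$. Because $+0$ and $-0$ are equal in real arithmetic, the phrases ``largest $t \in \mathbb{T}$ with $t \leq r$'' and ``smallest $t \in \mathbb{T}$ with $t \geq r$'' are ambiguous precisely when $r$ lies strictly between $0$ and the smallest representable positive magnitude, or symmetrically on the negative side. Closing this gap requires the concrete formulas for $\mathit{pred}$ and $\mathit{succ}$ from Figure~\ref{fig:background:modes}(a), which must select $+0$ (not $-0$) as $\mathit{pred}(r)$ for small positive $r$ and $-0$ (not $+0$) as $\mathit{succ}(r)$ for small negative $r$. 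Under this IEEE-consistent convention the case analysis above closes the proof; the last written-out step is to confirm that the figure's definitions indeed pick the sign-preserving representative of $0$ in these subnormal-adjacent corner cases, after which $\mathit{sign}(\mathit{rnd}(r)) = \mathit{sign}(r)$ follows uniformly for every faithful rounding function $\mathit{rnd}$.
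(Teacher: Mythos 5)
Your strategy is sound but follows a genuinely different route from the paper's. You work at the level of the rounding function itself: unfold faithfulness into the three cases $r$, $\mathit{pred}(r)$, $\mathit{succ}(r)$, and show that each candidate output carries $\mathit{sign}(r)$, which proves the lemma for \emph{every} non-zero real $r$ at once. The paper instead proves the lemma only for the instances it actually uses --- $r = a + b$ and $r = a \times b$ for non-$NaN$, non-infinity FP operands --- via Lemmas~\ref{lemma:sign-preservation-mult} and~\ref{lemma:sign-preservation-add}. For multiplication it simply invokes the IEEE-754 rule that the sign bit of a product is the exclusive-or of the operands' sign bits under all rounding modes (which also settles the sign of a zero produced by underflow); for addition it shows (Lemma~\ref{lemma:rz-add-bound}) that a non-zero sum of two FP numbers has magnitude at least $2^{e_{\min}-p+1}$, so the rounded sum can never be a zero of either sign and the contradiction argument closes. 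The trade-off is instructive: the paper's restriction to sums and products lets it sidestep entirely the corner case you correctly identify --- for addition the ``tiny $r$'' zone is unreachable, and for multiplication the standard dictates the sign of the zero --- whereas your argument must confront that case head-on. You flag the right remaining obligation: for $0 < r < 2^{e_{\min}-p+1}$ the phrase ``largest $t \in \mathbb{T}$ with $t \leq r$'' does not by itself distinguish $+0$ from $-0$, and the lemma as stated is true only if $\mathit{pred}$ selects $+0$ there (symmetrically $-0$ for $\mathit{succ}$ of tiny negative $r$). That convention check is genuinely load-bearing, not cosmetic, so your proof is conditional until it is discharged against the formulas in Figure~\ref{fig:background:modes}(a); but since the paper's own downstream uses (\eg, the sign of the fma-computed error $c1$, whose underlying real value $a \times b - m$ can indeed underflow) implicitly rely on exactly this IEEE-consistent signed-zero behavior, your more general formulation arguably covers the paper's applications more uniformly than its own instance-by-instance proof does.
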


With respect to our theorems for $\RZA$ and $\RZM$, the real values $r \in \mathbb{R} \setminus \{ 0 \}$ of concern are either the real arithmetic product or sum of two non-$NaN$, non-infinity FP operands.
Similarly, the corresponding FP numbers $rnd(r)$ of concern are the outputs of FP multiplication or addition operations between two non-$NaN$, non-infinity FP operands (\ie, $a \otimes_{rnd} b$ or $a \oplus_{rnd} b$).
We therefore prove Lemma ~\ref{lemma:sign-preservation-proof} in the context of our paper by showing how the operations $\otimes_{rnd}$ and $\oplus_{rnd}$ preserve signs for the faithful 
rounding modes $rnd \in \{ \RNE, \RNZ, \RNN, \RNP \}$. We do so using two additional lemmas (see Lemmas ~\ref{lemma:sign-preservation-mult} and ~\ref{lemma:sign-preservation-add} below), which each cover multiplication and addition.

\setcounter{lemma}{10}

\begin{lemma}\label{lemma:sign-preservation-mult}
Let $a, b \in \mathbb{T} \setminus \{\mathit{NaN}, \pm \infty \}$ be two FP numbers such that $a \times b \neq 0$.  For all faithful rounding modes $rnd \in \{ \RNE, \RNZ, \RNN, \RNP \}$, $\mathit{sign}(a \times b) = \mathit{sign}(a \otimes_{\mathit{rnd}} b)$.
\end{lemma}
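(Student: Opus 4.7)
The plan is to reduce the statement to a purely rounding-theoretic claim about the four standard rounding functions and then dispatch it via the definition of faithful rounding combined with a case split on the sign of $a \times b$.

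First I would observe that the hypothesis $a \times b \neq 0$ rules out every ``special'' case in the definition of $\otimes_{\mathit{rnd}}$ given by Equation~7 in Figure~\ref{fig:background:ops}(b), since each of those cases forces at least one operand to be $\pm 0$ and therefore makes $a \times b = 0$. Hence the definition collapses to $a \otimes_{\mathit{rnd}} b = \mathit{rnd}(a \times b)$ for $r := a \times b \in \mathbb{R} \setminus \{0\}$, and the task reduces to showing $\mathit{sign}(\mathit{rnd}(r)) = \mathit{sign}(r)$ for every $\mathit{rnd} \in \{\RNE, \RNZ, \RNN, \RNP\}$.

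Next I would split on whether $r \in \mathbb{T}$. If $r$ is exactly representable then faithfulness forces $\mathit{rnd}(r) = r$, and the equality of signs is immediate from Definition~\ref{rnd-sign-proof}. Otherwise, by the faithful rounding property stated in Lemma~\ref{lemma:rnd-bound} together with the definitions of $\mathit{\RNN}$ and $\mathit{\RNP}$, we have $\mathit{rnd}(r) \in \{\mathit{pred}(r), \mathit{succ}(r)\}$. I would then case-split on the sign of $r$: when $r > 0$, the definitions of $\mathit{pred}$ and $\mathit{succ}$ in Figure~\ref{fig:background:modes}(a) give $\mathit{succ}(r) > r > 0$, so $\mathit{sign}(\mathit{succ}(r)) = 0 = \mathit{sign}(r)$, while $\mathit{pred}(r)$ is the largest FP value strictly below $r > 0$, which must be either a strictly positive FP number or the signed zero $+0$; in either sub-case $\mathit{sign}(\mathit{pred}(r)) = 0$. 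The $r < 0$ case is symmetric, using that $\mathit{succ}(r)$ is at worst $-0$.

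The main obstacle I anticipate is the careful handling of signed zeros in the subnormal/tiny regime: when $|r|$ is so small that the FP neighbor on the ``closer-to-zero'' side collapses to a signed zero. The proof has to verify that the signed zero returned matches $\mathit{sign}(r)$ — i.e. $+0$ when $r > 0$ and $-0$ when $r < 0$ — which requires a direct appeal to the piecewise definitions of $\mathit{pred}$ and $\mathit{succ}$ (Equations~1 and~2) to rule out the pathological possibility that, say, a tiny positive $r$ has $\mathit{pred}(r) = -0$ while $\mathit{sign}(-0) = 1$. Once this corner case is settled by inspecting how $\mathit{pred}$ and $\mathit{succ}$ are wired around the two signed zeros, every remaining sub-case is a one-line consequence of monotonicity and the sign conventions in Definition~\ref{rnd-sign-proof}.
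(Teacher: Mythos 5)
Your proposal is correct in substance but follows a genuinely different route from the paper. The paper's proof is essentially one step: it invokes the IEEE-754 mandate that the sign bit of a multiplication result is the \emph{exclusive or} of the operands' sign bits under every rounding mode, observes that the sign of a non-zero real product $a \times b$ is likewise the exclusive or of the operand signs, and concludes via Definition~\ref{rnd-sign-proof}. You instead reduce to $a \otimes_{\mathit{rnd}} b = \mathit{rnd}(a \times b)$ (legitimate, and noted in the paper after Equation~7) and then prove the generic rounding-theoretic fact that faithful rounding preserves the sign of a non-zero real, by case analysis on representability and on $\{\mathit{pred}(r), \mathit{succ}(r)\}$. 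Your approach buys generality --- it is really a proof of Lemma~\ref{lemma:sign-preservation-proof} itself, valid for any faithful rounding, not just the IEEE operations --- but it pays for that by having to resolve exactly the corner case you flag: when $0 < |a \times b| < 2^{e_{\min}-p+1}$, the ``closer-to-zero'' neighbor is a signed zero, and the abstract $\mathit{pred}$/$\mathit{succ}$ formalism (which the paper deliberately restricts to non-zero reals precisely to dodge the $+0$ versus $-0$ ambiguity) does not by itself pin down which zero is returned. The paper's appeal to the standard settles that case by fiat (the product's sign bit is mandated regardless of underflow), whereas your argument rests on the claim that Equations~1 and~2 are ``wired'' to return the correctly signed zero --- a claim you would need to verify explicitly rather than assert, since it is the only place your proof could break. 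Note also that the underflow scenario is not hypothetical for multiplication (unlike addition, cf.\ Lemma~\ref{lemma:rz-add-bound}), so this is the load-bearing step of your proof, not a pathological afterthought.
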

\begin{proof}
For all $rnd \in \{ \RNE, \RNZ, \RNN, \RNP \}$, we require the FP multiplication operation $\otimes_{\mathit{rnd}}$ as defined in Equation~7 in Figure~\ref{fig:background:ops}(b) to adhere to the IEEE-754 standard.
The IEEE-754 2019 standard states that the sign bit of a FP multiplication result is the \textit{exclusive or} of the sign bits of its operands under all rounding modes.
Given our definition of signs in Definition ~\ref{rnd-sign-proof}, the sign of any non-zero, real arithmetic product of two FP numbers is also the exclusive or of the signs of its operands.
This is because a real arithmetic product can be only be negative when its operands have different signs. 
Under Definition ~\ref{rnd-sign-proof}, adherence to the IEEE-754 standard guarantees that  $\mathit{sign}(a \otimes_{\mathit{rnd}} b) = \mathit{sign}(a \times b)$ 
for any $rnd \in \{ \RNE, \RNZ, \RNN, \RNP \}$ and any non-$NaN$, non-infinity FP numbers $a$ and $b$ such that $a \times b \neq 0$.

\end{proof}

\begin{lemma}\label{lemma:sign-preservation-add}
Let $a, b \in \mathbb{T} \setminus \{\mathit{NaN}, \pm \infty \}$ be two FP numbers such that $a+b \neq 0$.  For all faithful rounding modes $rnd \in \{ \RNE, \RNZ, \RNN, \RNP \}$, $\mathit{sign}(a + b) = \mathit{sign}(a \oplus_{\mathit{rnd}} b)$.
\end{lemma}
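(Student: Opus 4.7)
The plan is to reduce the statement directly to the definition of $\oplus_{\mathit{rnd}}$ in Equation~6 and to the behavior of the scalar rounding functions $\mathit{rnd}(\cdot)$ on non-zero reals. Concretely, the hypothesis $a+b \neq 0$ rules out all four special cases at the top of Equation~6: the $(+0,+0)$ and $(-0,-0)$ cases force $a+b=0$, and ``$a$ is $-b$'' likewise forces $a+b=0$. Hence for every $\mathit{rnd} \in \{\RNE, \RNZ, \RNN, \RNP\}$ we have $a \oplus_{\mathit{rnd}} b = \mathit{rnd}(a+b)$, and it suffices to show that $\mathit{sign}(r) = \mathit{sign}(\mathit{rnd}(r))$ for every $r \in \mathbb{R}\setminus\{0\}$ and every such $\mathit{rnd}$.

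Second, I would split on whether $r=a+b$ is exactly representable. If $r \in \mathbb{T}$, all four rounding functions return $r$ itself; since $r \neq 0$, $r$ is a non-zero FP number and its FP sign under Definition~\ref{rnd-sign-proof} agrees with its real sign by construction. If $r \notin \mathbb{T}$, faithfulness forces $\mathit{rnd}(r) \in \{\mathit{pred}(r), \mathit{succ}(r)\}$, so it remains to show that both neighbors share $\mathit{sign}(r)$.

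Third, I would verify the neighbor claim in two symmetric sub-cases. For $r>0$, we have $\mathit{succ}(r) \geq r > 0$, so $\mathit{succ}(r)$ is a strictly positive FP number with sign $0$. The value $\mathit{pred}(r)$ is the largest FP value not exceeding $r$, hence $\mathit{pred}(r) \geq 0$ in real value; in the only delicate regime, where $r$ lies strictly between $0$ and the smallest positive subnormal, $\mathit{pred}(r) = +0$, and $\mathit{sign}(+0)=0$ by Definition~\ref{rnd-sign-proof}. Either way $\mathit{sign}(\mathit{pred}(r)) = 0 = \mathit{sign}(r)$. The case $r<0$ is mirror-symmetric: $\mathit{pred}(r)<0$ has sign $1$, and $\mathit{succ}(r)$ is either strictly negative or equal to $-0$, which Definition~\ref{rnd-sign-proof} also assigns sign $1$.

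The main obstacle I expect is the $\pm 0$ boundary when $|r|$ is below the smallest positive subnormal, since this is the only regime in which $\mathit{pred}$ or $\mathit{succ}$ produces a signed zero whose FP sign must be reconciled with the sign of the underlying real number. Once the signed-zero convention of Definition~\ref{rnd-sign-proof} is applied explicitly in this corner, the three steps above compose to give $\mathit{sign}(a+b) = \mathit{sign}(\mathit{rnd}(a+b)) = \mathit{sign}(a \oplus_{\mathit{rnd}} b)$ uniformly across all four faithful rounding modes, completing the proof.
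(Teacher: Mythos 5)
Your proposal is correct in substance but takes a genuinely different route from the paper. You reduce the lemma to the scalar statement $\mathit{sign}(r)=\mathit{sign}(\mathit{rnd}(r))$ for $r=a+b\neq 0$ and then argue neighbor-by-neighbor that both $\mathit{pred}(r)$ and $\mathit{succ}(r)$ carry the sign of $r$, absorbing the sub-subnormal corner by asserting that $\mathit{pred}(r)=+0$ for $0<r<2^{e_{min}-p+1}$ (and symmetrically $\mathit{succ}(r)=-0$ for tiny negative $r$). The paper instead proves an auxiliary bound: since $a$ and $b$ are both integer multiples of $2^{e_{min}-p+1}$, so is $a+b$, hence $a+b\neq 0$ implies $|a+b|\geq 2^{e_{min}-p+1}$; a faithful rounding of such an $r$ must then be one of two neighbors that are both strictly on $r$'s side of zero, and the contradiction is immediate. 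The practical difference is exactly your ``delicate regime'': the paper shows it is \emph{vacuous} for sums of same-format FP operands, whereas you resolve it by appeal to a signed-zero convention for $\mathit{pred}$ and $\mathit{succ}$ that the paper never states (its Equations~1 and~2 define the neighbors of a real $r$, and the rounding functions are only ever applied to non-zero reals, so nothing in the paper pins down whether $\mathit{pred}$ of a tiny positive real is $+0$ rather than $-0$). Your convention is the IEEE-754 one, so the argument goes through, but it imports an assumption; the paper's integer-multiple lemma is the cleaner move here, and it is reused almost verbatim later (for the bound on the addition error $a+b-(a\oplus_{\mathit{rnd}}b)$ in the proof for $\RZA$), which is an additional payoff of that decomposition. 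If you keep your route, you should either prove the signed-zero behavior of $\mathit{pred}$/$\mathit{succ}$ below $2^{e_{min}-p+1}$ from the chosen definitions or, better, note that $|a+b|\geq 2^{e_{min}-p+1}$ makes the case moot.
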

\begin{proof}
We reiterate that we require the FP addition operation $\oplus_{rnd}$ as defined in Equation ~6 in Figure~\ref{fig:background:ops}(a) to adhere to the IEEE-754 standard and perform faithful rounding.
Given such chracteristics of $\oplus_{rnd}$, we prove Lemma ~\ref{lemma:sign-preservation-add} via contradiction. Doing so requires an additional lemma detailing the expected bounds on the sum of two FP numbers, 
which we present below along with its proof.
\end{proof}

\begin{lemma}\label{lemma:rz-add-bound}
Let $a$ and $b$ be two non-NaN, non-infinity FP numbers. Let $e_{min}$ and $p$ denote the minimum exponent and available precision of the target representation respectively. If $a+b \neq 0$, then $ 2^{e_{min}-p+1} \leq |a+b|$.
\end{lemma}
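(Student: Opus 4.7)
The plan is to exploit the fact that every finite floating-point number in $\mathbb{T}$ lies on a uniform grid whose spacing is the smallest positive subnormal, namely $2^{e_{min}-p+1}$. Since both $a$ and $b$ are non-$\mathit{NaN}$, non-infinity FP numbers, each can be written as an integer multiple of $2^{e_{min}-p+1}$; this is immediate from the IEEE-754 layout once we observe that normals have the form $\pm m \cdot 2^{e-p+1}$ with $e \geq e_{min}$ and integer $m \in [2^{p-1}, 2^{p}-1]$, while subnormals have the form $\pm m \cdot 2^{e_{min}-p+1}$ with integer $m \in [0, 2^{p-1}-1]$. In either case, the number is an integer multiple of $2^{e_{min}-p+1}$.

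First I would formalize the above by writing $a = k_a \cdot 2^{e_{min}-p+1}$ and $b = k_b \cdot 2^{e_{min}-p+1}$ for integers $k_a, k_b \in \mathbb{Z}$. Summing, $a+b = (k_a + k_b) \cdot 2^{e_{min}-p+1}$, which is also an integer multiple of $2^{e_{min}-p+1}$. Note that this step uses only real arithmetic; no rounding is involved, so the identity is exact.

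Next I would conclude with the hypothesis $a+b \neq 0$. This forces $k_a + k_b \neq 0$, i.e., $|k_a + k_b| \geq 1$, since $k_a + k_b$ is an integer. Multiplying through by $2^{e_{min}-p+1} > 0$ yields $|a+b| = |k_a+k_b| \cdot 2^{e_{min}-p+1} \geq 2^{e_{min}-p+1}$, which is precisely the claim.

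The argument is essentially a granularity observation and I expect no serious obstacles. The only mildly delicate point is to make sure the characterization of FP numbers as integer multiples of $2^{e_{min}-p+1}$ covers every finite case uniformly (normals, subnormals, and signed zeros alike); once that is in place, the rest is a one-line integer argument. Because the proof does not invoke any rounding mode or error-free transformation, the lemma can be cited in the subsequent proofs of $\RZA$ and $\RZM$ as a purely representational fact about how small a non-zero FP sum can be.
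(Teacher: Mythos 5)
Your proposal is correct and follows essentially the same argument as the paper's proof: write $a$ and $b$ as integer multiples of $2^{e_{min}-p+1}$, note that their exact real sum is then also an integer multiple, and conclude that a non-zero integer multiple has magnitude at least $2^{e_{min}-p+1}$. The extra justification you give for why normals and subnormals alike lie on this grid is a welcome elaboration of a step the paper takes for granted, but it does not change the route.
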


\begin{proof}
By their definitions, $a$ and $b$ are both FP numbers that are integer multiples of $2^{e_{min}-p+1}$, the smallest positive number in the representation. Let $a=c_{1} \times 2^{e_{min}-p+1}$ and $b=c_{2} \times 2^{e_{min}-p+1}$ where $c_{1}$ and $c_{2}$ are both integers. Based on these definitions, $a+b = (c_{1}+c_{2}) \times 2^{e_{min}-p+1}$ where $c_{1}+c_{2}$ is also an integer. If $a+b \neq 0$, then $c_{1}+c_{2} \neq 0$ and thus $1 \leq |c_{1}+c_{2}|$. The lower bound for $|c_{1}+c_{2}|$ indicates that when $a+b \neq 0$, $|a+b|=|(c_{1}+c_{2}) \times 2^{e_{min}-p+1}| \geq 2^{e_{min}-p+1}$.
\end{proof}

Given two FP numbers $a$ and $b$ such that $a+b \neq 0$, suppose that $\mathit{sign}(a+b) = 0$ and $\mathit{sign}(a \oplus_{\mathit{rnd}} b) = 1$  for some rounding mode $rnd \in \{ \RNE, \RNZ, \RNN, \RNP \}$. 
Under our definition of $\mathit{sign}$ in Definition ~\ref{rnd-sign-proof}, these conditions imply $a+b$ is positive while $a \oplus_{\mathit{rnd}} b$ is either $-0$ or a negative number. Applying Lemma ~\ref{lemma:rz-add-bound}, one can infer that $a+b \ge 2^{e_{min}-p+1}$ since $a+b > 0$. 
The assumption $\mathit{sign}(a \oplus_{\mathit{rnd}} b) = 1$ would thus contradict $a \oplus_{rnd} b$ being a faithful rounding of $a+b$ because the FP number $2^{e_{min}-p+1}$ is closer to all positive real numbers than $-0$ or any negative FP number.
Similarly, the case in which $\mathit{sign}(a+b) = 1$ and $\mathit{sign}(a \oplus_{\mathit{rnd}} b) = 0$ also leads to a contradiction because the FP number $-2^{e_{min}-p+1}$ is closer to all negative real numbers than $+0$ or any positive FP number.
The contradictions derivable from Lemma ~\ref{lemma:rz-add-bound} thus prove that $\mathit{sign}(a \oplus_{\mathit{rnd}} b) = \mathit{sign}(a + b)$ 
for any $rnd \in \{ \RNE, \RNZ, \RNN, \RNP \}$ and any non-$NaN$, non-infinity FP numbers $a$ and $b$ such that $a + b \neq 0$

By proving Lemmas ~\ref{lemma:sign-preservation-mult} and ~\ref{lemma:sign-preservation-add}, we have proven Lemma ~\ref{lemma:sign-preservation-proof} for the non-zero real numbers relevant to our 
algorithms for $RZA$ and $RZM$ - real arithmetic products or sums of two non-$NaN$, non-infinity FP numbers. Having established a foundational lemma for the subsequent theorems, we move on to our proofs for $\RZA$ and $\RZM$.

\subsection{Proof of Correctness for $\RZA$}

\label{sec:round-to-zero-add}
 
After handling the corner case where $a \oplus_{\RNZ} b \neq a
\oplus_{\RNN} b$ when $a = -b$ (lines 2 to 4), the primary task of
$\RZA$ in Algorithm~\ref{alg:rz_add} is to check the condition $a+b
\neq a \oplus_{\mathit{rnd}} b$ without having direct access to the real value
$a+b$. At a high-level, the algorithm computes a proxy for the error
term $a+b - (a \oplus_{\mathit{rnd}} b)$ and checks for $a+b - (a \oplus_{\mathit{rnd}}
b) \neq 0$ to confirm $a+b \neq a \oplus_{\mathit{rnd}} b$. The condition $a+b
- (a \oplus_{\mathit{rnd}} b) \neq 0$ indicates not only the presence of
rounding error in $a \oplus_{\mathit{rnd}} b$, but also that $a+b \neq 0$ since
$a \oplus_{\mathit{rnd}} b$ would be either $+0$ or $-0$ and $a+b - (a
\oplus_{\mathit{rnd}} b)$ would subsequently be equal to 0 in such a case. Once
it is established that $a+b - (a \oplus_{\mathit{rnd}} b) \neq 0$, the sign of
$a+b - (a \oplus_{\mathit{rnd}} b)$ relative to $a \oplus_{\mathit{rnd}} b$
concomitantly serves as an indicator for the condition $|a+b| < |a
\oplus_{\mathit{rnd}} b|$ as we show in our proof of Theorem
~\ref{theorem:rz-add-1-proof}, which we state below.

\setcounter{theorem}{3}

\begin{theorem}\label{theorem:rz-add-1-proof}
Let $a$ and $b$ be two non-NaN, non-infinity floating-point numbers such that $a \oplus_{\mathit{rnd}} b$ does not overflow for any rounding mode. If $a+b - (a \oplus_{\mathit{rnd}} b) \neq 0$, $a \oplus_{\mathit{rnd}} b$ and $a+b - (a \oplus_{\mathit{rnd}} b)$ have different signs if and only if $|a + b| < |a \oplus_{\mathit{rnd}} b|$.
\end{theorem}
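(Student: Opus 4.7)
\subsection*{Proof plan for Theorem~\ref{theorem:rz-add-1-proof}}

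The plan is to reduce the statement to a small case analysis on which faithful neighbor of $a+b$ is selected by $\oplus_{\mathit{rnd}}$. Let $s = a \oplus_{\mathit{rnd}} b$ and let $e = a+b - s \in \mathbb{R}$ denote the real rounding error. First I would rule out the pathological scenarios: the non-overflow hypothesis ensures $s \in \mathbb{T}$ (not $\pm\infty$), so $\oplus_{\mathit{rnd}}$ is genuinely a faithful rounding of $a+b$; and the hypothesis $e \neq 0$ forces $a+b \neq s$, which in turn forces $a+b \notin \mathbb{T}$ (else faithful rounding would give $s = a+b$ and $e = 0$) and also $a+b \neq 0$ (else the corner cases in Equation~6 would make $s$ equal to $\pm 0$ and still produce $e=0$). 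These two facts together let me appeal to faithful rounding to conclude that $s \in \{\mathit{pred}(a+b), \mathit{succ}(a+b)\}$, and they let me appeal to Lemma~\ref{lemma:sign-preservation-add} to conclude $\mathit{sign}(s) = \mathit{sign}(a+b)$.

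Next I would split into two cases according to $\mathit{sign}(a+b)$, each with two sub-cases according to whether $s = \mathit{pred}(a+b)$ or $s = \mathit{succ}(a+b)$. For example, if $a+b > 0$ and $s = \mathit{pred}(a+b)$, then $s < a+b$ gives $e > 0$ (same sign as $s$) and simultaneously $|s| = s < a+b = |a+b|$, so both sides of the biconditional are false; whereas if $a+b > 0$ and $s = \mathit{succ}(a+b)$, then $s > a+b$ gives $e < 0$ (opposite sign to $s$) and $|s| > |a+b|$, so both sides are true. The case $a+b < 0$ is completely analogous (pred flips the absolute-value inequality compared to the positive case, but also flips the sign of $e$ relative to $s$, so the biconditional continues to hold).

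The only real obstacle is bookkeeping the signs consistently across the four sub-cases while making sure the preconditions of the tools I invoke are met: specifically, that $a+b$ lies strictly between the same two consecutive elements of $\mathbb{T}$ whose existence is guaranteed by non-overflow, and that Lemma~\ref{lemma:sign-preservation-add} is applicable because $a+b \neq 0$. Once these invariants are locked in, each sub-case reduces to a one-line comparison and the biconditional follows by exhaustion.
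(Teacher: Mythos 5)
Your plan is correct, and it reaches the result by a somewhat different route than the paper. The paper's proof argues both directions of the biconditional by contradiction, never invoking $\mathit{pred}/\mathit{succ}$: it observes that $|a\oplus_{\mathit{rnd}} b| < |a+b|$ forces $\mathit{sign}(a+b-(a\oplus_{\mathit{rnd}} b)) = \mathit{sign}(a+b)$, that $|a+b| < |a\oplus_{\mathit{rnd}} b|$ forces the error to carry the sign of $-(a\oplus_{\mathit{rnd}} b)$, and then uses the sign-preservation lemma ($\mathit{sign}(a\oplus_{\mathit{rnd}} b)=\mathit{sign}(a+b)$ when $a+b\neq 0$) to derive a contradiction in each direction. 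You instead first pin down $a\oplus_{\mathit{rnd}} b \in \{\mathit{pred}(a+b), \mathit{succ}(a+b)\}$ via faithfulness and then exhaust the four sub-cases, verifying both sides of the biconditional in each. Both arguments hinge on the same two ingredients --- sign preservation and the link between the error's sign and whether the rounded sum over- or under-shoots $a+b$ --- so neither is deeper than the other. Your exhaustion has one small advantage: it delivers the \emph{strict} inequality $|a+b| < |a\oplus_{\mathit{rnd}} b|$ directly in each sub-case, whereas the paper's forward-direction contradiction nominally only rules out $|a\oplus_{\mathit{rnd}} b| < |a+b|$ and leaves the (easily excluded, but unaddressed) possibility $|a+b| = |a\oplus_{\mathit{rnd}} b|$ implicit. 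The paper's version is shorter because it never needs the representability discussion or the neighbor enumeration. One bookkeeping item to watch in your sub-case $a+b<0$, $s=\mathit{succ}(a+b)$: you need $s<0$ (not $\pm 0$) to compare signs, which follows from sign preservation or from the lower bound $|a+b|\geq 2^{e_{\min}-p+1}$ (Lemma~\ref{lemma:rz-add-bound}); make sure you cite one of these rather than treating it as obvious.
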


\begin{proof}
We prove via contradiction that when $a+b - (a \oplus_{\mathit{rnd}} b) \neq 0$, $\mathit{sign}(a \oplus_{\mathit{rnd}} b) \neq \mathit{sign}(a+b - (a \oplus_{\mathit{rnd}} b))$ implies $|a+b| < |a \oplus_{\mathit{rnd}} b|$. Suppose that $a+b - (a \oplus_{\mathit{rnd}} b) \neq 0$, $\mathit{sign}(a \oplus_{\mathit{rnd}} b) \neq \mathit{sign}(a+b - (a \oplus_{\mathit{rnd}} b))$, and $|a \oplus_{\mathit{rnd}} b| < |a+b|$. With these assumptions, one can conclude that $a+b - (a \oplus_{\mathit{rnd}} b)$ will have the same sign as $a+b$ (\ie, $\mathit{sign}(a+b - (a \oplus_{\mathit{rnd}} b)) = \mathit{sign}(a+b)$) given that $|a \oplus_{\mathit{rnd}} b| < |a+b|$. As $a+b - (a \oplus_{\mathit{rnd}} b) \neq 0$, $a+b$ must be a non-zero value. Given the definition provided in Equation 6 (Figure ~\ref{fig:background:ops}) for such a case, $a \oplus_{\mathit{rnd}} b = \mathit{rnd}(a + b)$. Since all faithful rounding functions preserve the signs of non-zero values as detailed in Lemma ~\ref{lemma:sign-preservation-proof}, $\mathit{sign}(a \oplus_{\mathit{rnd}} b) = \mathit{sign}(a + b)$. Since $\mathit{sign}(a+b - (a \oplus_{\mathit{rnd}} b)) = \mathit{sign}(a+b)$, $\mathit{sign}(a \oplus_{\mathit{rnd}} b) = \mathit{sign}(a+b - (a \oplus_{\mathit{rnd}} b))$ holds by transitive equality, thereby directly contradicting the assumption that $\mathit{sign}(a \oplus_{\mathit{rnd}} b) \neq \mathit{sign}(a + b - (a \oplus_{\mathit{rnd}} b))$. The proposition in the reverse direction (\ie, assuming $a+b - (a \oplus_{\mathit{rnd}} b) \neq 0$, if $|a+b| < |a \oplus_{\mathit{rnd}} b|$ then $\mathit{sign}(a \oplus_{\mathit{rnd}} b) \neq \mathit{sign}(a+b - (a \oplus_{\mathit{rnd}} b))$) can also be proven via contradiction.  Suppose that $a+b - (a \oplus_{\mathit{rnd}} b) \neq 0$, $|a+b| < |a \oplus_{\mathit{rnd}} b|$, and $\mathit{sign}(a \oplus_{\mathit{rnd}} b) = \mathit{sign}(a+b - (a \oplus_{\mathit{rnd}} b))$. If $|a+b| < |a \oplus_{\mathit{rnd}} b|$, then $a+b -(a \oplus_{\mathit{rnd}} b)$ must have the sign of $-(a \oplus_{\mathit{rnd}} b)$. The implied sign of $a+b - (a \oplus_{\mathit{rnd}} b)$ directly contradicts the assumption that $\mathit{sign}(a \oplus_{\mathit{rnd}} b) = \mathit{sign}(a+b - (a \oplus_{\mathit{rnd}} b))$, thereby concluding our proof.
\end{proof}
Based on Theorem \ref{theorem:rz-add-1-proof}, one can conclude that testing whether $a+b-(a \oplus_{\mathit{rnd}} b) \neq 0$ and whether $\mathit{sign}(a+b-(a \oplus_{\mathit{rnd}} b)) \neq \mathit{sign}(a \oplus_{\mathit{rnd}} b)$ is sufficient for determining if the original sum (\ie, $a \oplus_{\mathit{rnd}} b$) needs to be augmented in accordance with Equation~\ref{get-rz-add}. Having established what conditions to check for, we now prove $\RZA$'s correctness by confirming that it correctly applies Theorem ~\ref{theorem:rz-add-1-proof} to ascertain whether $|a+b| < |a \oplus_{\mathit{rnd}} b|$. Lines 5 through 10 in Algorithm \ref{alg:rz_add} employ the steps in Dekker's $FastTwoSum$ algorithm ~\cite{dekker:fast2sum:1971} to compute the value $t$, which is a FP approximation of $a+b-(a \oplus_{\mathit{rnd}} b)$. Because all operations are executed in FP arithmetic, $s = a \oplus_{\mathit{rnd}} b$, $z=s \oplus_{\mathit{rnd}} (-a) = (a \oplus_{\mathit{rnd}} b) \oplus_{\mathit{rnd}} (-a) $ and $t= b \oplus_{\mathit{rnd}} (-z) = b \oplus_{\mathit{rnd}} (- ((a \oplus_{\mathit{rnd}} b) \oplus_{\mathit{rnd}} (-a)))$. After computing $t$, Algorithm \ref{alg:rz_add} applies Theorem ~\ref{theorem:rz-add-1-proof} by assessing whether $a+b-(a \oplus_{\mathit{rnd}} b) \neq 0$ and $\mathit{sign}(a+b- (a\oplus_{\mathit{rnd}}b)) \neq \mathit{sign}(a \oplus_{\mathit{rnd}} b)$ through the comparisons $bit(t) << 1 \neq 0$ and $(bit(t)$ \textbf{xor} $bit(s)) \geq 0x8000000000000000$ respectively. The comparison $bit(t) << 1 \neq 0$ checks if $t$ is neither $+0$ nor $-0$ while $(bit(t)$ \textbf{xor} $bit(s)) \geq 0x8000000000000000$ checks if $\mathit{sign}(t) \neq \mathit{sign}(a \oplus_{\mathit{rnd}} b)$. In essence, Algorithm ~\ref{alg:rz_add} uses $t$ as a proxy for $a+b - (a \oplus_{\mathit{rnd}} b)$ for the purposes of applying Theorem ~\ref{theorem:rz-add-1-proof}. Therefore, proving the correctness of Algorithm ~\ref{alg:rz_add} is contingent on establishing the following properties for $t$: (1) $t$ is neither $+0$ nor $-0$ if and only if $a+b - (a \oplus_{\mathit{rnd}} b) \neq 0$, and (2) $\mathit{sign}(t) = \mathit{sign}(a+b - (a \oplus_{\mathit{rnd}} b))$ whenever $a + b - (a \oplus_{\mathit{rnd}} b) \neq 0$. Establishing such properties for $t$ requires affirming that $t$ is a faithful rounding of $a+b - (a \oplus_{\mathit{rnd}} b)$.

Analysis of the $FastTwoSum$ algorithm by Boldo~\etal~\cite{boldo:fast2sum:2017} shows that when the exponent of the non-$NaN$ FP number $a$ (\ie, $e_{a}$) is equal to or greater than that of the non-$NaN$ FP number $b$ (\ie, $e_{b}$), $(a \oplus_{\mathit{rnd}} b)-a$ is exactly representable under any rounding mode, assuming the absence of overflow in $a \oplus_{\mathit{rnd}} b$. The intended inputs for $\RZA$ are non-$NaN$, non-infinity FP numbers and satisfy the last condition. Lines 6 through 8 ensure that $e_{a}\geq e_{b}$ by the time $z = (a \oplus_{\mathit{rnd}} b) \oplus_{\mathit{rnd}} (-a)$ is computed. Therefore, our implementation of $\RZA$ satisfies at runtime all the conditions necessary for $(a \oplus_{\mathit{rnd}} b)-a$ to be exactly representable as a FP number. With $(a \oplus_{\mathit{rnd}} b)-a$ being exactly representable, $z= (a \oplus_{\mathit{rnd}} b) \oplus_{\mathit{rnd}} (-a) = (a \oplus_{\mathit{rnd}} b) - a$, and thus the equality $t= b \oplus_{\mathit{rnd}} (-z) = b \oplus_{\mathit{rnd}} (a - (a \oplus_{\mathit{rnd}} b))$ will hold under all rounding modes. Given that the FP operation $\oplus_{\mathit{rnd}}$ adheres to faithful rounding, we conclude from Boldo ~\etal's analysis that $t = b \oplus_{\mathit{rnd}} (a - (a \oplus_{\mathit{rnd}} b))$ is a faithfully rounded version of $b + a - (a \oplus_{\mathit{rnd}} b) = a + b - (a \oplus_{\mathit{rnd}} b)$ under all faithful rounding modes $\mathit{rnd}$.

Boldo~\etal~\cite{boldo:fast2sum:2017} prove that $t$ is a faithful rounding of the FP addition error  $a+b-(a \oplus_{\mathit{rnd}} b)$ under the previously mentioned conditions. However, they conclude that $a+b-(a \oplus_{\mathit{rnd}} b)$ is not guaranteed to be  exactly representable as a FP number when $\mathit{rnd} \neq \RNE$. In other words, the equality $t = a+b-(a \oplus_{\mathit{rnd}} b)$ is not guaranteed for all rounding modes. Consequently, the desired properties of $t$ (\ie, $t$ is neither $+0$ nor $-0$ if and only if $a+b - (a \oplus_{\mathit{rnd}} b) \neq 0$ and $\mathit{sign}(t) = \mathit{sign}(a+b-(a \oplus_{\mathit{rnd}} b))$ whenever $a + b - (a \oplus_{\mathit{rnd}} b) \neq 0$) cannot be immediately assumed. Nevertheless, we can show that $t$'s status as a faithful rounding of $a+b - (a \oplus_{\mathit{rnd}} b)$ is sufficient for our purposes. Specifically, we leverage the definition of faithful rounding to prove Theorem ~\ref{theorem:rz-add-2-proof}, which we present below.

\begin{theorem}\label{theorem:rz-add-2-proof}
Let $t \in \mathbb{T}$ be a faithful rounding of the floating-point addition error $a+b - (a \oplus_{\mathit{rnd}} b) \in \mathbb{R}$. $t$ is neither $+0$ nor $-0$ if and only if $a+b - (a \oplus_{\mathit{rnd}} b) \neq 0$.  
\end{theorem}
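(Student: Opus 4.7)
The plan is to prove the biconditional by combining the definition of faithful rounding with Lemma~\ref{lemma:rz-add-bound} and the fact, established just before this theorem, that the FP addition error $r \coloneqq a+b - (a \oplus_{\mathit{rnd}} b)$ can be written as the exact real sum of two non-$\mathit{NaN}$, non-infinity FP numbers. Together these give a tight lower bound on $|r|$ whenever $r \neq 0$, which in turn forces any faithful rounding of $r$ to be distinct from $\pm 0$.

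For the forward direction, I would suppose $r = 0$. Then $t$ arises from a FP addition whose real-valued sum is zero, and by the semantics of $\oplus_{\mathit{rnd}}$ given in Equation~6 of Figure~\ref{fig:background:ops}(a), the result of such an addition must be either $+0$ or $-0$, determined by the signs of the operands together with the rounding mode. Hence $t \in \{+0, -0\}$.

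For the backward direction, I would argue by contrapositive. Assume $r \neq 0$. Using the fact recalled from Boldo~\etal that $(a \oplus_{\mathit{rnd}} b) - a$ is exactly representable in $\mathbb{T}$ under the theorem's hypotheses, I can rewrite $r = b + \bigl(a - (a \oplus_{\mathit{rnd}} b)\bigr)$ as the real sum of two non-$\mathit{NaN}$, non-infinity FP numbers. Lemma~\ref{lemma:rz-add-bound} then yields $|r| \geq 2^{e_{\min}-p+1}$, bounding $|r|$ below by the smallest positive representable value. Since $t$ is a faithful rounding of $r$, either $t = r$ when $r \in \mathbb{T}$, in which case $|t| = |r| \geq 2^{e_{\min}-p+1}$, or $t \in \{\mathit{pred}(r), \mathit{succ}(r)\}$, and both of these neighbors must themselves have absolute value at least $2^{e_{\min}-p+1}$ because no representable FP number other than $\pm 0$ lies strictly in the open interval $(-2^{e_{\min}-p+1},\,2^{e_{\min}-p+1})$. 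In every case $t$ is neither $+0$ nor $-0$.

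The hardest step, I expect, is the final claim of the backward direction: carefully verifying that $|r| \geq 2^{e_{\min}-p+1}$ rules out both $\mathit{pred}(r)$ and $\mathit{succ}(r)$ from being $\pm 0$. This rests on the observation that the lower bound supplied by Lemma~\ref{lemma:rz-add-bound} exactly matches the magnitude of the smallest positive FP number, so that $r$ cannot sit in the gap around zero where one of its FP neighbors would collapse to zero; a short case split on the sign of $r$ then completes the argument.
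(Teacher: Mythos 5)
Your proposal is correct and rests on the same key idea as the paper's proof: a nonzero addition error $a+b-(a\oplus_{\mathit{rnd}} b)$ has magnitude at least $2^{e_{min}-p+1}$, so no faithful rounding of it can be $+0$ or $-0$, while a zero error is exactly representable and hence must round to $\pm 0$. The one substantive difference is how you reach the lower bound. You reduce to the two-operand Lemma~\ref{lemma:rz-add-bound} by invoking the Fast2Sum exactness of $(a\oplus_{\mathit{rnd}} b)-a$, but that exactness holds only when the exponent of $a$ is at least that of $b$; the algorithm guarantees this by swapping the operands, yet the theorem statement makes no such assumption, and for $|b|>|a|$ the quantity $(a\oplus_{\mathit{rnd}} b)-a$ need not be representable, so your decomposition of the error as a sum of two FP numbers is not available in general. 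The paper avoids this detour with the dedicated Lemma~\ref{lemma:rz-add-error-bound}: $a$, $b$, and $a\oplus_{\mathit{rnd}} b$ are each integer multiples of $2^{e_{min}-p+1}$, hence so is the error, and a nonzero integer multiple has magnitude at least $2^{e_{min}-p+1}$. That argument is unconditional and essentially one line, so you should prefer it; as written, your proof silently imports a hypothesis the theorem does not state. Two minor points: the backward direction as you present it is a direct proof rather than a contrapositive, and your closing case analysis on $\mathit{pred}(r)$ and $\mathit{succ}(r)$ is sound and matches the paper's observation that $\pm 2^{e_{min}-p+1}$ is closer to $r$ than either signed zero is.
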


\begin{proof}
The proof that $t$ is neither $+0$ nor $-0$ implies $a+b-(a \oplus_{\mathit{rnd}} b)\neq0$ only requires the definition of faithful rounding.
Given the relationship between the two numbers, suppose that $t$ is neither $+0$ nor $-0$ and $a+b-(a \oplus_{\mathit{rnd}} b) = 0$. These two propositions cannot simultaneously be true under any faithful rounding mode because the number $0$ can be exactly represented as either $+0$ or $-0$. The proposition that $a+b-(a \oplus_{\mathit{rnd}} b) \neq0$ implies $t$ is neither $+0$ nor $-0$ is also provable via contradiction, but requires the following lemma.
\end{proof}

\begin{lemma}\label{lemma:rz-add-error-bound}
Let $a$ and $b$ be two non-NaN, non-infinity FP numbers such that $a \oplus_{\mathit{rnd}} b$ does not overflow for any rounding mode. Let $e_{min}$ and $p$ denote the minimum exponent and available precision of the target representation respectively. If $a+b-(a \oplus_{\mathit{rnd}} b) \neq 0$, then $ 2^{e_{min}-p+1} \leq |a+b-(a \oplus_{\mathit{rnd}} b)|$.
\end{lemma}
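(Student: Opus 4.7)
The plan is to directly mimic and extend the argument used for Lemma~\ref{lemma:rz-add-bound}, exploiting the fact that every non-$\mathit{NaN}$, non-infinity FP number in the target representation is an integer multiple of $2^{e_{min}-p+1}$ (the smallest positive representable magnitude). The new twist is that the expression $a+b-(a\oplus_{\mathit{rnd}}b)$ involves three FP numbers rather than two, but the integer-multiple property extends to linear combinations of finitely many FP numbers with integer coefficients, so the core idea carries over unchanged.

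First, I would invoke the hypotheses: $a$ and $b$ are non-$\mathit{NaN}$, non-infinity FP numbers, and $a\oplus_{\mathit{rnd}}b$ does not overflow, so by the specification of faithful rounding it is also a non-$\mathit{NaN}$, non-infinity FP number in $\mathbb{T}$. Thus all three of $a$, $b$, and $a\oplus_{\mathit{rnd}}b$ admit integer-multiple representations $a = c_1 \cdot 2^{e_{min}-p+1}$, $b = c_2 \cdot 2^{e_{min}-p+1}$, and $a\oplus_{\mathit{rnd}}b = c_3 \cdot 2^{e_{min}-p+1}$ for integers $c_1, c_2, c_3 \in \mathbb{Z}$. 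This step reuses the same reasoning that began the proof of Lemma~\ref{lemma:rz-add-bound}, applied once more to the additional term $a\oplus_{\mathit{rnd}}b$.

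Next, I would compute the real-arithmetic error:
\begin{equation*}
a + b - (a \oplus_{\mathit{rnd}} b) = (c_1 + c_2 - c_3) \cdot 2^{e_{min}-p+1}.
\end{equation*}
Since $c_1 + c_2 - c_3$ is an integer, the nonzero assumption $a+b-(a\oplus_{\mathit{rnd}}b)\neq 0$ gives $c_1+c_2-c_3 \neq 0$, so $|c_1+c_2-c_3| \geq 1$. Multiplying through yields $|a+b-(a\oplus_{\mathit{rnd}}b)| \geq 2^{e_{min}-p+1}$, as required.

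The only subtlety worth a sentence of care is the justification that $a\oplus_{\mathit{rnd}}b$ is truly a finite FP number admitting the integer-multiple decomposition; this is discharged by the no-overflow hypothesis together with faithful rounding, which guarantees the result lies in $\mathbb{T}\setminus\{\pm\infty, \mathit{NaN}\}$. Everything else is a routine algebraic manipulation, so I do not expect any significant obstacle — the lemma is essentially a one-line corollary of the integer-lattice structure of the FP representation, packaged in the same style as Lemma~\ref{lemma:rz-add-bound}.
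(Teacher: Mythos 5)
Your proposal is correct and follows essentially the same argument as the paper's proof: both write $a$, $b$, and $a\oplus_{\mathit{rnd}}b$ as integer multiples of $2^{e_{min}-p+1}$, observe that the error is then a nonzero integer multiple of that quantity, and conclude the bound. The extra sentence justifying that $a\oplus_{\mathit{rnd}}b$ is a finite member of $\mathbb{T}$ via the no-overflow hypothesis is a small but harmless addition the paper leaves implicit.
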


\begin{proof}
The proof for this lemma is nearly identical to that for Lemma ~\ref{lemma:rz-add-bound}. By their definitions, $a$, $b$, $a \oplus_{\mathit{rnd}} b$ are all FP numbers that are integer multiples of $2^{e_{min}-p+1}$, the smallest positive number in the representation. Let $a=c_{1} \times 2^{e_{min}-p+1}$, $b=c_{2} \times 2^{e_{min}-p+1}$, and  $a \oplus_{\mathit{rnd}} b =c_{3} \times 2^{e_{min}-p+1}$ where $c_{1}$, $c_{2}$, and $c_{3}$ are all integers. Based on these definitions, $a+b - (a \oplus_{\mathit{rnd}} b) = (c_{1}+c_{2} - c_{3}) \times 2^{e_{min}-p+1}$ where $c_{1}+c_{2}-c_{3}$ is also an integer. If $a+b - (a \oplus_{\mathit{rnd}} b) \neq 0$, then $c_{1}+c_{2} - c_{3} \neq 0$ and thus $1 \leq |c_{1}+c_{2}-c_{3}|$. The lower bound for $|c_{1}+c_{2}-c_{3}|$ indicates that when $a+b - (a \oplus_{\mathit{rnd}} b) \neq 0$, $|a+b-(a \oplus_{\mathit{rnd}} b)|=|(c_{1}+c_{2}-c_{3}) \times 2^{e_{min}-p+1}| \geq 2^{e_{min}-p+1}$.
\end{proof}

Given Lemma ~\ref{lemma:rz-add-error-bound}, suppose $a + b - (a
\oplus_{\mathit{rnd}} b) \neq 0$ and $t$ is either $+0$ or
$-0$. Lemma~\ref{lemma:rz-add-error-bound} indicates that when
$0<a+b-(a \oplus_{\mathit{rnd}} b)$, the FP number $2^{e_{min}-p+1}$
is closer to $a+b-(a \oplus_{\mathit{rnd}} b)$ than both $+0$ and
$-0$. Similarly, when $a+b-(a \oplus_{\mathit{rnd}} b)<0$,
$-2^{e_{min}-p+1}$ would be a closer FP number than both $+0$ and
$-0$. Having previously established that $t$ is a faithful rounding of
$a+b - (a \oplus_{\mathit{rnd}} b)$, it must be the case that
$2^{e_{min}-p+1} \leq |t|$ when $a+b -(a \oplus_{\mathit{rnd}} b) \neq
0$. The resulting lower bound on $t$'s magnitude directly contradicts
the assumption that $t$ is either $+0$ or $-0$. The contradiction
derivable from Lemma ~\ref{lemma:rz-add-error-bound} concludes the
proof for Theorem \ref{theorem:rz-add-2-proof}.

Theorem \ref{theorem:rz-add-2-proof} ensures that $\RZA$ can correctly test whether $a+b - (a \oplus_{\mathit{rnd}} b) \neq 0$ through the comparison $bit(t)<<1 \neq 0$, which checks if $t$ is neither $+0$ nor $-0$. The last step in proving the correctness of $\RZA$ is establishing that the comparison ($bit(t) \ \textbf{xor} \ bit(s)) \geq 0x8000000000000000$, which checks if $\mathit{sign}(t) \neq \mathit{sign}(s = a \oplus_{\mathit{rnd}} b)$, is sufficient for determining that $\mathit{sign}(a+b - (a \oplus_{\mathit{rnd}} b)) \neq \mathit{sign}(a \oplus_{\mathit{rnd}} b)$. We corroborate $\RZA$'s usage of ($bit(t) \ \textbf{xor} \ bit(s)) \geq 0x8000000000000000$ by showing that $\mathit{sign}(t) = \mathit{sign}(a+b - (a \oplus_{\mathit{rnd}} b))$ whenever $a + b - (a \oplus_{\mathit{rnd}} b) \neq 0$. As previously mentioned, $t$ is a faithful rounding of $a + b - (a \oplus_{\mathit{rnd}} b)$. When $a + b - (a \oplus_{\mathit{rnd}} b) \neq 0$, $t$ would be a faithfully rounded version of a non-zero real number, and thus $t = \mathit{rnd}(a + b - (a \oplus_{\mathit{rnd}} b))$ for any rounding function $\mathit{rnd}$ that performs faithful rounding. Assuming $a + b - (a \oplus_{\mathit{rnd}} b) \neq 0$, one can apply Lemma ~\ref{lemma:sign-preservation-proof} to conclude that $\mathit{sign}(t) = \mathit{sign}(a+b - (a \oplus_{\mathit{rnd}} b))$.

Having validated the two key properties of $t$, we now summarize how
$\RZA$ correctly applies Theorem~\ref{theorem:rz-add-1-proof} through
the comparisons $bit(t) << 1 \neq 0$ and $(bit(t)$ \textbf{xor}
$bit(s)) \geq 0x8000000000000000$ in line 11 of Algorithm
~\ref{alg:rz_add}. Because $t$ is neither $+0$ nor $-0$ if and only if
$a+b - (a \oplus_{\mathit{rnd}} b) \neq 0$, the comparison $bit(t) <<
1 \neq 0$, which checks if $t$ is neither $+0$ nor $-0$, can confirm
whether $a+b - (a \oplus_{\mathit{rnd}} b) \neq 0$. The algorithm can
thus guarantee that $a+b - (a \oplus_{\mathit{rnd}} b) \neq 0$ if the
condition $bit(t) << 1 \neq 0$ is true. Because $\mathit{sign}(t) =
\mathit{sign}(a+b-(a \oplus_{\mathit{rnd}} b))$ when $a+b - (a
\oplus_{\mathit{rnd}} b) \neq 0$, we are able to assert the following:
The comparison $(bit(t)$ \textbf{xor} $bit(s)) \geq
0x8000000000000000$, which inspects if $\mathit{sign}(t) \neq
\mathit{sign}(s = a \oplus_{\mathit{rnd}} b)$, can confirm that
$\mathit{sign}(a+b-(a \oplus_{\mathit{rnd}} b)) \neq \mathit{sign}(a
\oplus_{\mathit{rnd}} b)$ if $bit(t) << 1 \neq 0$. Therefore, we
conclude that $a+b - (a \oplus_{\mathit{rnd}} b) \neq 0$ and
$\mathit{sign}(a+b - (a \oplus_{\mathit{rnd}} b)) \neq \mathit{sign}(a
\oplus_{\mathit{rnd}} b)$ when both $bit(t) << 1 \neq 0$ and $(bit(t)$
\textbf{xor} $bit(s)) \geq 0x8000000000000000$ are true. $\RZA$ can
thus accurately apply Theorem ~\ref{theorem:rz-add-1-proof} through
the comparisons in line 11 of Algorithm ~\ref{alg:rz_add} to determine
that $|a+b| < |a \oplus_{\mathit{rnd}} b|$ and subsequently adjust $a
\oplus_{\mathit{rnd}} b$ to match the $\RNZ$ result through line 12.
In conclusion, the guarantees of the $FastTwoSum$ algorithm and the
theorems proven above ensure that $\RZA$ produces $a \oplus_{RZ} b$
across all rounding modes for all non-$NaN$, non-infinity FP inputs
$a$ and $b$ such that $a \oplus_{\mathit{rnd}} b$ does not overflow.

\subsection{Proof of Correctness for $\RZM$}\label{sec:round-to-zero-mult}
We now prove that the function $\RZM$ detailed in Algorithm~\ref{alg:rz_mult} produces the \RNZ result for multiplication under all rounding modes. The core strategy in Algorithm~\ref{alg:rz_mult} is to confirm $a \times b - (a \otimes_{\mathit{rnd}} b) \neq 0$ as a means of verifying that $a \times b$ is not exactly representable. Similarly to $\RZA$, $\RZM$ first determines if the FP multiplication error $a \times b - (a \otimes_{\mathit{rnd}} b)$ is non-zero and then examines its sign relative to the sign of the FP product $a \otimes_{\mathit{rnd}} b$ to determine whether $|a \times b| < |a \otimes_{\mathit{rnd}} b|$. $\RZM$'s utilization of $a \times b - (a \otimes_{\mathit{rnd}} b)$ is founded on Theorem ~\ref{theorem:rz-mult-1-proof}, which we prove below.

\begin{theorem}\label{theorem:rz-mult-1-proof}
Let $a$ and $b$ be two non-$NaN$, non-infinity floating-point numbers such that $a \otimes_{\mathit{rnd}} b$ does not overflow for any rounding mode. If $a \times b - (a \otimes_{\mathit{rnd}} b) \neq 0$, $a \otimes_{\mathit{rnd}} b$ and $a \times b - (a \otimes_{\mathit{rnd}} b)$ have different signs if and only if $|a \times b| < |a \otimes_{\mathit{rnd}} b|$.
\end{theorem}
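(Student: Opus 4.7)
The plan is to follow the structure of the proof of Theorem~\ref{theorem:rz-add-1-proof}, which also established a sign/magnitude equivalence for the rounding error of a single FP operation. The statement is a biconditional, so I would prove the two directions separately, each by contradiction, using the sign-preservation property of faithful rounding (Lemma~\ref{lemma:sign-preservation-proof}, specialized to multiplication by Lemma~\ref{lemma:sign-preservation-mult}) as the central hammer.

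For the forward direction, suppose $a \times b - (a \otimes_{\mathit{rnd}} b) \neq 0$ and $\mathit{sign}(a \otimes_{\mathit{rnd}} b) \neq \mathit{sign}(a \times b - (a \otimes_{\mathit{rnd}} b))$, but assume for contradiction that $|a \otimes_{\mathit{rnd}} b| < |a \times b|$. This magnitude inequality forces the real error $a \times b - (a \otimes_{\mathit{rnd}} b)$ to inherit its sign from $a \times b$. Because the error is non-zero, I must also have $a \times b \neq 0$, since the first four cases of Equation~7 in Figure~\ref{fig:background:ops}(b) would otherwise force $a \otimes_{\mathit{rnd}} b$ to be a signed zero exactly matching $a \times b = 0$, making the error identically $0$. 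With $a \times b \neq 0$, Equation~7 gives $a \otimes_{\mathit{rnd}} b = \mathit{rnd}(a \times b)$, and Lemma~\ref{lemma:sign-preservation-mult} yields $\mathit{sign}(a \otimes_{\mathit{rnd}} b) = \mathit{sign}(a \times b)$. Chaining this with the earlier equality of signs between $a \times b$ and the error produces $\mathit{sign}(a \otimes_{\mathit{rnd}} b) = \mathit{sign}(a \times b - (a \otimes_{\mathit{rnd}} b))$, contradicting the hypothesis.

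For the reverse direction, assume $a \times b - (a \otimes_{\mathit{rnd}} b) \neq 0$, $|a \times b| < |a \otimes_{\mathit{rnd}} b|$, and, for contradiction, $\mathit{sign}(a \otimes_{\mathit{rnd}} b) = \mathit{sign}(a \times b - (a \otimes_{\mathit{rnd}} b))$. The magnitude inequality means that $a \times b$ lies strictly between $0$ and $a \otimes_{\mathit{rnd}} b$ in absolute value on the same side of the real line as $a \otimes_{\mathit{rnd}} b$, so the real error $a \times b - (a \otimes_{\mathit{rnd}} b)$ must have the sign of $-(a \otimes_{\mathit{rnd}} b)$. This directly contradicts the assumed sign agreement, closing the argument.

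The main obstacle I anticipate is being careful about the signed-zero corner cases of FP multiplication, because the original statement quantifies $\mathit{sign}$ over FP values where $+0$ and $-0$ are distinguished. Unlike the additive case in Theorem~\ref{theorem:rz-add-1-proof}, which needed a separate carve-out for $a = -b$, the multiplicative case collapses cleanly: whenever $a \times b = 0$, Equation~7 pins $a \otimes_{\mathit{rnd}} b$ to a signed zero with the same mathematical value, so the error vanishes and the hypothesis is vacuous. Apart from that sanity check, the proof is essentially a translation of the additive proof with $\oplus_{\mathit{rnd}}$ replaced by $\otimes_{\mathit{rnd}}$ and Lemma~\ref{lemma:sign-preservation-add} replaced by Lemma~\ref{lemma:sign-preservation-mult}, so no new machinery beyond what the excerpt already establishes is required.
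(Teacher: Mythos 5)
Your proposal is correct and follows essentially the same route as the paper's proof: both directions by contradiction, with the forward direction hinging on the sign-preservation property of faithful rounding (via the fact that a nonzero error forces $a \times b \neq 0$ and hence $a \otimes_{\mathit{rnd}} b = \mathit{rnd}(a \times b)$), and the reverse direction following from the error inheriting the sign of $-(a \otimes_{\mathit{rnd}} b)$ when $|a \times b| < |a \otimes_{\mathit{rnd}} b|$. Your explicit handling of the signed-zero corner case is a slightly more detailed justification of a step the paper states without elaboration, but it is the same argument.
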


\begin{proof}
 We first prove that when $a \times b - (a \otimes_{\mathit{rnd}} b) \neq 0$, $\mathit{sign}(a \otimes_{\mathit{rnd}} b) \neq \mathit{sign}(a \times b - (a \otimes_{\mathit{rnd}} b))$ implies $|a \times b| < |a \otimes_{\mathit{rnd}} b|$. Suppose $a \times b - (a \otimes_{\mathit{rnd}} b) \neq 0$, $\mathit{sign}(a \otimes_{\mathit{rnd}} b) \neq \mathit{sign}(a \times b - (a \otimes_{\mathit{rnd}} b))$, and $|a \otimes_{\mathit{rnd}} b| < |a \times b|$. The assumption $|a \otimes_{\mathit{rnd}} b| < |a \times b|$ implies that $\mathit{sign}(a \times b - (a \otimes_{\mathit{rnd}} b)) = \mathit{sign}(a \times b)$. With regards to the sign of $a \otimes_{\mathit{rnd}} b$, one can infer from the condition $a \times b - (a \otimes_{\mathit{rnd}} b) \neq 0$ that $a \times b \neq 0$ and that $a \otimes_{\mathit{rnd}} b = \mathit{rnd}(a \times b)$ given Equation ~7 (Figure ~\ref{fig:background:ops}). Due to the sign preserving properties of faithful rounding functions detailed in Lemma ~\ref{lemma:sign-preservation-proof}, the equality $\mathit{sign}(a \otimes_{\mathit{rnd}} b) = \mathit{sign}(a \times b)$ holds whenever $a \times b - (a \otimes_{\mathit{rnd}} b) \neq0$. Because $\mathit{sign}(a \times b - (a \otimes_{\mathit{rnd}} b)) = \mathit{sign}(a \times b)$, $\mathit{sign}(a \otimes_{\mathit{rnd}} b) = \mathit{sign}(a \times b - (a \otimes_{\mathit{rnd}} b))$ must be true via transitive equality. The resulting equality in signs between $a \otimes_{\mathit{rnd}} b$ and $a \times b - (a \otimes_{\mathit{rnd}} b)$ directly contradicts the earlier assumption that $\mathit{sign}(a \otimes_{\mathit{rnd}} b) \neq \mathit{sign}(a \times b - (a \otimes_{\mathit{rnd}} b))$.
The proposition that $|a \times b| < |a \otimes_{\mathit{rnd}} b|$ implies $\mathit{sign}(a \otimes_{\mathit{rnd}} b) \neq \mathit{sign}(a \times b - (a \otimes_{\mathit{rnd}} b))$ when $a \times b - (a \otimes_{\mathit{rnd}} b) \neq 0$ can also be proven via contradiction. Suppose  $a \times b - (a \otimes_{\mathit{rnd}} b) \neq 0$ and $|a \times b| < |a \otimes_{\mathit{rnd}} b|$ while $\mathit{sign}(a \otimes_{\mathit{rnd}} b) = \mathit{sign}(a \times b - (a \otimes_{\mathit{rnd}} b))$. The inequality $|a \times b | < |a \otimes_{\mathit{rnd}} b|$ would indicate that $\mathit{sign}(a \times b - (a \otimes_{\mathit{rnd}} b)) = \mathit{sign}(-(a \otimes_{\mathit{rnd}} b))$, which implies $a \times b - (a \otimes_{\mathit{rnd}} b)$ and $a \otimes_{\mathit{rnd}} b$ have opposite signs and thus directly contradicts the earlier assumptions. In conclusion, $a \otimes_{\mathit{rnd}} b$ and $a \times b - (a \otimes_{\mathit{rnd}} b)$ cannot have the same signs when $a \times b - (a \otimes_{\mathit{rnd}} b) \neq 0$ and $|a \times b|<|a \otimes_{\mathit{rnd}} b|$.
\end{proof}

Theorem ~\ref{theorem:rz-mult-1-proof} affirms that examining the sign of $a \times b- (a \otimes_{\mathit{rnd}} b)$ relative to $a \otimes_{\mathit{rnd}} b$ is sufficient for assessing whether $|a \times b| < |a \otimes_{\mathit{rnd}} b|$ whenever $a \times b - (a \otimes_{\mathit{rnd}} b) \neq 0$. While the FP multiplication error $a \times b - (a \otimes_{\mathit{rnd}} b)$ can be leveraged to determine whether a FP product $a \otimes_{\mathit{rnd}} b$ needs to be adjusted to match $a \otimes_{\RNZ} b$ (see Equation ~\ref{get-rz-mult}), the error is not guaranteed to be exactly representable as a FP number. As such, Algorithm ~\ref{alg:rz_mult} computes the proxy error term $c1 = \mathit{\mathit{fma}}_{\mathit{rnd}}(a, b, -(a \otimes_{\mathit{rnd}} b))$ for the purposes of applying Theorem ~\ref{theorem:rz-mult-1-proof}. Proving the correctness of $\RZM$ thus entails validating Algorithm ~\ref{alg:rz_mult}'s usage of $c1$ in confirming whether $a \times b - (a \otimes_{\mathit{rnd}} b ) \neq 0$ and $\mathit{sign}(a \times b - (a \otimes_{\mathit{rnd}} b)) \neq \mathit{sign}(a \otimes_{\mathit{rnd}} b)$. 

Algorithm ~\ref{alg:rz_mult} examines whether the FP multiplication error $a \times b - (a \otimes_{\mathit{rnd}} b)$ is not equal to $0$ through the comparison $bit(c1 = \mathit{fma}_{\mathit{rnd}}(a, b, -m)) \neq bit(c2 = \mathit{fma}_{\mathit{rnd}}(-a, b, m))$ where $m = a \otimes_{\mathit{rnd}} b$. We preface our discussion of this comparison by emphasizing that $c1$ and $c2$ are both outputs of fused-multiply-add operations with faithful rounding properties. By definition, $c1$ and $c2$ are each faithfully rounded counterparts to $a \times b - (a \otimes_{\mathit{rnd}} b)$ and $(-(a \times b) + (a \otimes_{\mathit{rnd}} b))$ respectively. Hence, $c1$ is a faithful rounding of the FP multiplication error $a \times b - (a \otimes_{\mathit{rnd}} b)$ and $c2$ is a faithful rounding of the error's negation. In the case of $\RZA$, examining whether the faithful rounding of the FP addition error $a+b - (a \otimes_{\mathit{rnd}} b)$ is neither $+0$ nor $-0$ is guaranteed to be sufficient for verifying whether the real error is a non-zero value. This guarantee is founded on Lemma ~\ref{lemma:rz-add-error-bound}, which establishes that the magnitude of a non-zero FP addition error cannot be less than the smallest positive FP number in the target representation.  Lemma ~\ref{lemma:rz-add-error-bound} does not carry over to FP multiplication, which signifies that the FP multiplication error is susceptible to underflow. As a result, $c1$ could be either $+0$ or $-0$ when the real FP multiplication error is non-zero, and thus checking whether $c1$ is neither $+0$ nor $-0$ is insufficient for assessing whether $a \times b - (a \otimes_{\mathit{rnd}} b) \neq 0$. Given this restriction, we design  $\RZM$ to check whether $bit(c1) \neq bit(c2)$ instead, using Theorem ~\ref{theorem:rz-mult-2-proof} as the justification. We present Theorem ~\ref{theorem:rz-mult-2-proof} again for reference along with its proof.

\begin{theorem}\label{theorem:rz-mult-2-proof}
Let $a$ and $b$ be two non-$NaN$, non-infinity floating-point numbers such that $a \otimes_{\mathit{rnd}} b$ does not overflow for any rounding mode. Let $bit(f)$ be a function that returns the bit-string of any floating- point number $f$. Then, for any $\mathit{rnd}$, $bit(\mathit{fma}_{\mathit{rnd}}(a, b, -(a \otimes_{\mathit{rnd}} b)) \neq bit(\mathit{fma}_{\mathit{rnd}}(-a, b, a \otimes_{\mathit{rnd}} b))$ if and only if $a \times b - (a \otimes_{\mathit{rnd}} b) \neq 0$.
\end{theorem}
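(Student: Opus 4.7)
I prove the biconditional in two directions. Let $m = a \otimes_{\mathit{rnd}} b$, $r = a \times b - m$, $c_1 = \mathit{fma}_{\mathit{rnd}}(a, b, -m)$, and $c_2 = \mathit{fma}_{\mathit{rnd}}(-a, b, m)$. The IEEE-754 specification of the fused multiply-add guarantees that $c_1$ is a faithful rounding of the real number $ab - m = r$ and that $c_2$ is a faithful rounding of $-ab + m = -r$. The hypothesis that $a \otimes_{\mathit{rnd}} b$ does not overflow for any rounding mode lets me stay within $\mathbb{T} \setminus \{\mathit{NaN}, \pm \infty\}$ throughout, so no exceptional values enter the argument.

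\textbf{Direction 1 ($r \neq 0 \Rightarrow \mathit{bit}(c_1) \neq \mathit{bit}(c_2)$).} When $r \neq 0$, both $c_1$ and $c_2$ are faithful roundings of the non-zero reals $r$ and $-r$, respectively. I apply Lemma~\ref{lemma:sign-preservation-proof} to each FMA output, viewed as a faithful rounding function applied to a non-zero real, to obtain $\mathit{sign}(c_1) = \mathit{sign}(r)$ and $\mathit{sign}(c_2) = \mathit{sign}(-r)$. Since $\mathit{sign}(r) \neq \mathit{sign}(-r)$, Definition~\ref{rnd-sign-proof} combined with the IEEE-754 bit-level encoding forces the two outputs to differ in their leading sign bit, which immediately gives $\mathit{bit}(c_1) \neq \mathit{bit}(c_2)$.

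\textbf{Direction 2 ($r = 0 \Rightarrow \mathit{bit}(c_1) = \mathit{bit}(c_2)$).} When $r = 0$, the two addends inside each FMA cancel exactly. Inside $c_1$, the unrounded product $ab$ and the addend $-m$ have opposite signs (because $ab = m$) and sum to $0$; inside $c_2$, the addends $-ab$ and $m$ do the same. This also covers the degenerate subcase $ab = 0 = m$, where both addends are signed zeros whose signs are forced to be opposite by IEEE-754's multiplicative sign rule (XOR of operand signs). Invoking the IEEE-754 rule that an exact-zero sum of opposite-signed addends yields $+0$ under every rounding mode except $\mathit{\RNN}$, which yields $-0$, I conclude that $c_1$ and $c_2$ produce the same signed zero and therefore $\mathit{bit}(c_1) = \mathit{bit}(c_2)$.

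\textbf{Main obstacle.} The chief subtlety is that the FMAs' exact result is zero in Direction 2, so Lemma~\ref{lemma:sign-preservation-proof} cannot be applied there directly; instead I must appeal to the IEEE-754 signed-zero convention for cancellation, with particular care in the degenerate $ab = 0$ sub-case where the ``product'' inside the FMA is itself a signed zero rather than a non-zero cancelling quantity. A secondary technical point is justifying Direction 1's use of Lemma~\ref{lemma:sign-preservation-proof} for FMA outputs; this is immediate because any faithful rounding of a non-zero real is either its $\mathit{pred}$ or $\mathit{succ}$, and both share the sign of the real by the definitions in Figure~\ref{fig:background:modes}(a), so the general form of the lemma applies.
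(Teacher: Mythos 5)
Your proof is correct and follows essentially the same route as the paper's: the nonzero-error direction uses sign preservation of faithful rounding to force $c_1$ and $c_2$ into opposite sign bits, and the zero-error direction appeals to IEEE-754's signed-zero semantics for exact cancellation. The only cosmetic difference is that for the zero case the paper writes $c_1 = m \oplus_{\mathit{rnd}} (-m)$ and $c_2 = (-m) \oplus_{\mathit{rnd}} m$ and concludes $c_1 = c_2$ by commutativity of $\oplus_{\mathit{rnd}}$, whereas you invoke the ``exact-zero sum of opposite-signed addends yields $+0$ except under \RNN'' rule directly; both are valid, and your version makes the degenerate $a \times b = 0$ sub-case more explicit.
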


\begin{proof}
We first prove that $bit(\mathit{fma}_{\mathit{rnd}}(a, b, -(a
\otimes_{\mathit{rnd}} b)) \neq bit(\mathit{fma}_{\mathit{rnd}}(-a, b,
a \otimes_{\mathit{rnd}} b))$ implies $a \times b-(a
\otimes_{\mathit{rnd}} b)\neq0$ via contradiction. Henceforth, we
refer to $\mathit{fma}_{\mathit{rnd}}(a, b, -(a \otimes_{\mathit{rnd}}
b))$ and $\mathit{fma}_{\mathit{rnd}}(-a, b, a \otimes_{\mathit{rnd}}
b)$ as $c1$ and $c2$ respectively. Assume that $bit(c1) \neq bit(c2)$
and $a \times b-(a \otimes_{\mathit{rnd}} b)=0$ are simultaneously
true for some non-$NaN$ FP numbers $a$ and $b$ such that $a
\otimes_{\mathit{rnd}} b$ does not overflow. We reiterate that given
non-$NaN$, non-infinity operands $a$, $b$, and $c$,
$\mathit{fma}_{\mathit{rnd}}(a, b, c)$ returns a faithful rounding of
the real value $a \times b + c$. Specifically,
$\mathit{fma}_{\mathit{rnd}}(a , b, c)$ performs a faithfully rounded
FP addition between $a \times b$ and $c$ with no intermediate rounding
for the multiplication. When $a \times b - (a \otimes_{\mathit{rnd}}
b) = 0$, $a \times b = a \otimes_{\mathit{rnd}} b$ and thus $c1 = (a
\otimes_{\mathit{rnd}} b) \oplus_{\mathit{rnd}} (-(a
\otimes_{\mathit{rnd}} b))$. Under the assumption that $a \times b -
(a \otimes_{\mathit{rnd}} b) = 0$, $(-a) \times b = -(a \times b) =
-(a \otimes_{\mathit{rnd}} b)$. As such, $c2$ would be equal to $(-(a
\otimes_{\mathit{rnd}} b)) \oplus_{\mathit{rnd}} (a
\otimes_{\mathit{rnd}} b)$ when $a \times b - (a
\otimes_{\mathit{rnd}} b) = 0$. In essence, $a \times b - (a
\otimes_{\mathit{rnd}} b) = 0$ implies that $c1$ and $c2$ are the
respective outputs of two FP additions with symmetric operands across
all rounding modes. Given the commutativity of addition, this would
indicate that $c1 = c2$. Because all non-$NaN$ FP numbers have unique
bit-string representations, $a \times b - (a \otimes_{\mathit{rnd}} b)
= 0$ and $bit(c1) \neq bit(c2)$ cannot simultaneously be true.

The proposition that $a \times b- (a \otimes_{\mathit{rnd}} b)\neq0$
implies $bit(c1) \neq bit(c2)$ can also be proven via
contradiction. Assume that there exist two non-$NaN$, non-infinity FP
numbers $a$ and $b$ such that $a \otimes_{\mathit{rnd}} b$ does not
induce overflow for any rounding mode, $a \times b - (a
\otimes_{\mathit{rnd}} b) \neq 0$, and $bit(c1) = bit(c2)$. As
previously explained, $c1$ and $c2$ are faithful roundings of $a
\times b - (a \otimes_{\mathit{rnd}} b)$ and $-(a \times b ) + (a
\otimes_{\mathit{rnd}} b)$ respectively. Given the sign preserving
nature of faithful rounding with respect to non-zero numbers detailed
in Lemma ~\ref{lemma:sign-preservation-proof}, $a \times b - (a
\otimes_{\mathit{rnd}} b) \neq 0$ implies $\mathit{sign}(c1) =
\mathit{sign}(a \times b - (a \otimes_{\mathit{rnd}} b))$. The
assumption that $a \times b - (a \otimes_{\mathit{rnd}} b) \neq 0$
also implies that $-(a \times b) + (a \otimes_{\mathit{rnd}} b) \neq
0$, and thus Lemma ~\ref{lemma:sign-preservation-proof} leads to the
equality $\mathit{sign}(c2) = \mathit{sign}(-(a \times b) + (a
\otimes_{\mathit{rnd}} b))$. Because $a \times b - (a
\otimes_{\mathit{rnd}} b)$ and its negation $-(a \times b) + (a
\otimes_{\mathit{rnd}} b)$ are two non-zero real numbers with opposite
signs when $a \times b - (a \otimes_{\mathit{rnd}} b) \neq 0$,
$\mathit{sign}(c1)$ and $\mathit{sign}(c2)$ cannot be the same under
such a condition. The signs of $c1$ and $c2$ derivable under the
condition $a \times b - (a \otimes_{\mathit{rnd}} b) \neq 0$
contradicts the assumption that $bit(c1) = bit(c2)$ because the
bit-strings of $c1$ and $c2$ will differ in terms of the leading
bit. Therefore, $a \times b - (a \otimes_{\mathit{rnd}} b) \neq 0$ and
$bit(c1) = bit(c2)$ cannot simultaneously be true.
\end{proof}

Through Theorem ~\ref{theorem:rz-mult-2-proof}, we guarantee that the
condition $bit(c1) \neq bit(c2)$ in line 5 of Algorithm
~\ref{alg:rz_mult} is sufficient for inspecting whether $a \times b -
(a \otimes_{\mathit{rnd}} b) \neq 0$. The last step in proving the
correctness of $\RZM$ is confirming that the comparison
$(bit(c1)\ \textbf{xor} \ bit(m)) \geq 0x8000000000000000$, which
checks whether $\mathit{sign}(c1) \neq \mathit{sign}(a
\otimes_{\mathit{rnd}} b)$, is sufficient for assessing whether
$\mathit{sign}(a \times b - (a \otimes_{\mathit{rnd}} b)) \neq
\mathit{sign}(a \otimes_{\mathit{rnd}} b)$. We justify $\RZM$'s usage
of the condition $(bit(c1)\ \textbf{xor} \ bit(m)) \geq
0x8000000000000000$ by establishing the following property for $c1$:
$\mathit{sign}(c1) = \mathit{sign}(a \times b - (a
\otimes_{\mathit{rnd}} b))$ whenever $a \times b - (a
\otimes_{\mathit{rnd}} b) \neq 0$. Given the definition of
$\mathit{fma}_{\mathit{rnd}}$, $c1$ is a faithful rounding of $a
\times b - (a \otimes_{\mathit{rnd}} b)$. In the case that $a \times b
- (a \otimes_{\mathit{rnd}} b) \neq 0$, $c1$ would be a faithfully
rounding of a non-zero real number, and thus $c1 = \mathit{rnd}(a
\times b - (a \otimes_{\mathit{rnd}} b))$ for any rounding function
$\mathit{rnd}$ with faithful rounding properties. Consequently, one
can apply Lemma ~\ref{lemma:sign-preservation-proof} to conclude that
$\mathit{sign}(c1) = \mathit{sign}(a \times b - (a
\otimes_{\mathit{rnd}} b))$ when $a \times b - (a
\otimes_{\mathit{rnd}} b) \neq 0$.

Having established the key implications of the proxy FP multiplication error $c1$, we now summarize how the properties of $c1$ and $c2$ guarantee that $\RZM$ can accurately apply Theorem ~\ref{theorem:rz-mult-1-proof} through the conditions $bit(c1) \neq bit(c2)$ and $(bit(c1)\ \textbf{xor} \ bit(m)) \geq 0x8000000000000000$ in line 5 of Algorithm ~\ref{alg:rz_mult}. Theorem ~\ref{theorem:rz-mult-2-proof} establishes that the condition $bit(c1) \neq bit(c2)$ is sufficient for testing whether $a \times b - (a \otimes_{\mathit{rnd}} b) \neq 0$. $\RZM$ can thus guarantee that $a \times b - (a \otimes_{\mathit{rnd}} b) \neq 0$  when the condition $bit(c1) \neq bit(c2)$ is true. Because  $a \times b - (a \otimes_{\mathit{rnd}} b) \neq 0$ implies $\mathit{sign}(c1) = \mathit{sign}(a \times b - (a \otimes_{\mathit{rnd}} b))$, the comparison $(bit(c1)\ \textbf{xor} \ bit(m)) \geq 0x8000000000000000$, which checks if $\mathit{sign}(c1) \neq \mathit{sign}(a \otimes_{\mathit{rnd}} b)$, is an appropriate test for $\mathit{sign}(a \times b - (a \otimes_{\mathit{rnd}} b)) \neq \mathit{sign}(a \otimes_{\mathit{rnd}} b)$ when $bit(c1) \neq bit(c2)$ is true. Therefore, in the case that both conditions $bit(c1) \neq bit(c2)$ and $(bit(c1)\ \textbf{xor} \ bit(m)) \geq 0x8000000000000000$ are true, $\RZM$ can correctly assume that $a \times b - (a \otimes_{\mathit{rnd}} b) \neq 0$ and $\mathit{sign}(a \times b - (a \otimes_{\mathit{rnd}} b)) \neq \mathit{sign}(a \otimes_{\mathit{rnd}} b)$. $\RZM$ can subsequently conclude that $|a \times b| < |a \otimes_{\mathit{rnd}} b|$ based on Theorem ~\ref{theorem:rz-mult-1-proof} and apply the necessary adjustment to $a \otimes_{\mathit{rnd}} b$. In conclusion, Theorems ~\ref{theorem:rz-mult-1-proof} and ~$\ref{theorem:rz-mult-2-proof}$ in conjunction with Lemma ~\ref{lemma:sign-preservation-proof} confirm that $\RZM$ can augment $a \otimes_{\mathit{rnd}} b$ in accordance with Equation~\ref{get-rz-mult} to produce $a \otimes_{RZ} b$ under all rounding modes.

\end{document}